\documentclass[letterpaper,11pt]{article}

\usepackage[margin=1in]{geometry}
\usepackage{authblk}

\usepackage{amsmath,amssymb,amsthm}
\usepackage{tabularx}
\usepackage[algoruled,lined,algonl,procnumbered,nosemicolon]{algorithm2e}
\usepackage{lineno}
\usepackage{subcaption}
\usepackage{float}
\usepackage{comment} 
\usepackage{enumitem}
\usepackage[hyphens]{url}
\usepackage{graphicx,hyperref}
\usepackage{breakcites}
\usepackage{tikz}
\usepackage{booktabs}
\usepackage[textsize=scriptsize]{todonotes}
\usepackage{lmodern}

\usepackage{orcidlink}
\usepackage{placeins}  
\usepackage{breakcites}

\colorlet{darkorange}{orange!85!black}
\colorlet{darkblue}{blue!75!black}
\colorlet{darkgreen}{green!50!black}

\newtheorem{theorem}{Theorem}
\newtheorem{corollary}{Corollary}

\newtheorem{lemma}{Lemma}
\newtheorem{observation}{Observation}
\newtheorem{remark}{Remark}
\newtheorem{conjecture}{Conjecture}
\newtheorem{definition}{Definition}
\newtheorem{example}{Example}

\newtheorem{claim}{Claim}
\newtheorem*{claim*}{Claim}

\newenvironment{claimproof}{\noindent{\em Proof of the claim.}}{\hfill $\diamond$ \par\medskip}

\newcommand{\PWS}{\textnormal{\textsc{Pinwheel Scheduling}}}

\newcommand{\PM}{\textnormal{\textsc{Position Matching}}}

\newcommand{\tP}{\textnormal{3\textsc{-Partition}}}

\newcommand{\NMTS}{\textnormal{\textsc{Numerical Matching with Target Sums}}}
\newcommand{\NMTSshort}{\textnormal{\textsc{NMTS}}}

\newcommand{\EWPM}{\textnormal{\textsc{Exact Weighted Perfect Matching}}}
\newcommand{\EWPMshort}{\textnormal{\textsc{EWPM}}}

\newcommand{\EM}{\textnormal{\textsc{Exact Matching}}}

\newcommand{\INTDM}{\textnormal{\textsc{Inequality Numerical}\ 3\textsc{-Dimensional Matching}}}
\newcommand{\INTDMshort}{\textnormal{\textsc{IN3DM}}}

\newcommand{\RNTDM}{\textnormal{\textsc{Restricted Numerical}\ 3\textsc{-Dimensional Matching}}}
\newcommand{\RNTDMshort}{\textnormal{\textsc{RN3DM}}}

\newcommand{\SRNM}{\textnormal{\textsc{Semi-restricted Numerical Matching with Target Sums}}}
\newcommand{\SRNMshort}{\textnormal{\textsc{SRNMTS}}}

\newcommand{\kV}{\textnormal{\ensuremath{k}\textsc{-Visits}}}

\newcommand{\ILP}{\textnormal{\textsc{Integer Linear Programming}}}

\newcommand{\oneV}{\textnormal{1\textsc{-Visit}}}
\newcommand{\twoV}{\textnormal{2\textsc{-Visits}}}
\newcommand{\oneortwoV}{\textnormal{(1 \textsc{or} 2)\textsc{-Visits}}}
\newcommand{\threeV}{\textnormal{3\textsc{-Visits}}}

\def\bigO{\ensuremath{\mathcal{O}}\xspace}

\usepackage[table]{xcolor} 
\usepackage{multirow}

\makeatother



\begin{document}
\title{Hardness, Tractability and Density Thresholds of\\ finite Pinwheel Scheduling Variants}




\author[1,2]{Sotiris Kanellopoulos \orcidlink{0009-0006-2999-0580} }
\author[3]{Giorgos Mitropoulos \orcidlink{0009-0000-0479-263X} }
\author[1,2]{Christos Pergaminelis \orcidlink{0009-0009-8981-3676} }
\author[1,2]{Thanos Tolias \orcidlink{0009-0003-8354-8855} }

\affil[1]{National Technical University of Athens, Greece}
\affil[2]{Archimedes, Athena Research Center, Greece}
\affil[3]{Sorbonne Université, CNRS, LIP6, F-75005 Paris, France}

\affil[ ]{\{s.kanellopoulos, chr.pergaminelis\}@athenarc.gr, georgios.mitropoulos@lip6.fr, thanostolias@mail.ntua.gr}

\date{}

\maketitle
%
\begin{abstract}

    The $k$-\textsc{Visits} problem is a recently introduced finite version of \textsc{Pinwheel Scheduling} [Kanellopoulos et al., SODA 2026~\cite{kVisits}]. Given the deadlines of $n$ tasks, the problem asks whether there exists a schedule of length $kn$ executing each task exactly $k$ times, with no deadline expiring between consecutive visits (executions) of each task. 
    In this work we prove that $2$-\textsc{Visits} is strongly NP-complete even when the maximum multiplicity of the input is equal to $2$, settling an open question from~\cite{kVisits} and contrasting the tractability of $2$-\textsc{Visits} for simple sets. On the other hand, we prove that $2$-\textsc{Visits} is in $\mathrm{RP}$ when the number of distinct deadlines is constant, thus making progress on another open question regarding the parameterization of $2$-\textsc{Visits} by the number of numbers. We then generalize all existing positive results for $2$-\textsc{Visits} to a version of the problem where some tasks must be visited once and some other tasks twice, while providing evidence that some of these results are unlikely to transfer to $3$-\textsc{Visits}. 
    Lastly, we establish bounds for the density thresholds of $k$-\textsc{Visits}, analogous to the $(5/6)$-threshold of \textsc{Pinwheel Scheduling} [Kawamura, STOC 2024~\cite{Kawamura_5/6_stoc}]; in particular, we show a $\sqrt{2}-1/2\approx 0.9142$ lower bound for the density threshold of $2$-\textsc{Visits} and prove that the density threshold of \kV\ approaches $5/6\approx 0.8333$ for $k \to \infty$.
\end{abstract}

\newpage

\section{Introduction}

\PWS\ is a fundamental problem in scheduling theory, introduced by Holte, Mok, Rosier, Tulchinsky and Varvel~\cite{Holte_Pinwheel} in 1989. Given the deadlines of $n$ tasks, the problem asks whether it is possible to perpetually execute one task at a time, in a way such that no deadline expires between consecutive executions of the same task. Since its introduction, this problem has been extensively studied; for instance, \cite{Holte_2_distinct} and~\cite{Lin_3distinct} study the problem for constant amounts of distinct deadlines, \cite{Jacobs_Window_Scheduling_Complexity} studies its complexity and \cite{Bamboo_first} introduces an optimization version known as \emph{Bamboo Garden Trimming}.
\PWS\ has received significant attention in recent years, in particular after Kawamura's breakthrough paper~\cite{Kawamura_5/6_stoc} in STOC 2024, in which a computer-assisted proof of the \emph{density threshold conjecture} (i.e., that every \PWS\ instance with sum of inverse deadlines at most $5/6$ is a yes-instance) was given. This resolved a question that had remained open for over three decades~\cite{Chan_conjecture} despite a substantial line of work in that direction~\cite{Bar-Noy_0.6,Chan_0.7,Fishburn_density,Gasieniec_towards_5/6}.

Arguably the most important open question regarding \PWS\ is its complexity: it is known that the problem is in PSPACE ever since its introduction~\cite{Holte_Pinwheel}, but it remains open whether it is PSPACE-complete~\cite{Bosman_Replenishment}. In fact, even strong NP-hardness and membership in NP are both open questions for the problem, while weak NP-hardness was only very recently established by Kleinberg and Mishra~\cite{kleinberg_mishra_NP_hardness}, after remaining open for over three decades. Membership in NP is known only for \emph{dense} instances~\cite{Holte_Pinwheel}, while strong NP-hardness is known for a version with durations~\cite{Feinberg_Generalized_Pinwheel} and PSPACE-completeness is known for a weighted generalization~\cite{PSPACE_UAV}.  
Despite its open complexity, \PWS\ has been used to classify the complexity of other problems, in particular inventory routing problems~\cite{inventory_routing}, adding further intrigue to the aforementioned open questions.

In a SODA 2026 paper, Kanellopoulos, Pergaminelis, Kokkou, Markou and Pagourtzis~\cite{kVisits} introduced the \kV\ problem as a finite version of \PWS\ and proved that it is strongly NP-complete for $k=2$, contrasting the current status of the infinite version. As a corollary, \PWS\ becomes strongly NP-hard if each task's deadline is allowed to change once after a given amount of executions. 
This seems like substantial progress towards settling the complexity of \PWS, especially when one considers that PSPACE-completeness proofs for periodic problems often rely on the NP-hardness of finite versions (see e.g.,~\cite{Los_Alamos_periodic_PSPACE,Papadimitriou_book}). Moreover, since the existence of an infinite schedule is equivalent to the existence of an infinite schedule with a finite period~\cite{Holte_Pinwheel}, it might be possible to transfer some NP-hardness result from \kV\ to \PWS.
Even disregarding the potential of answering open questions of the infinite version, \kV\ seems like an interesting problem in its own merit: it may be more practical than \PWS\ in applications that require finite repetitions, and it may be tractable in cases in which \PWS\ is not.

With the aforementioned motivations in mind, we build upon the work of~\cite{kVisits}, establishing stronger hardness results and algorithms for \twoV, as well as bounds for the density thresholds of \kV. See Section~\ref{subsec:contributions} for a comprehensive overview of our results and their merits.

\subsection{Related Work}\label{subsec:related_work}

The \kV\ problem was introduced by Kanellopoulos et al.~\cite{kVisits} in SODA 2026 as a finite version of \PWS. Interest in \PWS\ and its variants has been reignited in recent years after Gasieniec, Smith and Wild~\cite{Gasieniec_towards_5/6} made progress towards the density threshold conjecture, and even more so after Kawamura~\cite{Kawamura_5/6_stoc} definitively proved the conjecture. This conjecture was stated by Chan and Chin~\cite{Chan_conjecture} and had remained open for over three decades.

Kanellopoulos et al.~\cite{kVisits} prove that \twoV\ is strongly NP-complete through a chain of reductions starting from \RNTDM\ (\RNTDMshort)~\cite{Flow_shop_Yu}. On the other hand, it can be solved in linear time when the input is a simple set or when the input consists of at most two distinct numbers. This leads to two natural open questions stated in~\cite{kVisits}: whether \twoV\ can be parameterized by the \emph{maximum multiplicity} or by the \emph{number of numbers}. A natural way to tackle the latter would be to try a reduction to \ILP, which is a standard method for parameterizing numerical matching variants by the number of numbers (see the paper by Fellows, Gaspers and Rosamond~\cite{Number_of_numbers} for more details). However, this method is insufficient for \twoV, because its corresponding numerical matching variant (\PM\ - Def.~\ref{def:PM}) contains two simple sets by definition, and the parameterization proposed by~\cite{Number_of_numbers} for numerical matching variants uses the combined number of numbers of \emph{all} input sets as parameter (which is $\bigO(n)$ for \PM).

Regarding density thresholds for \kV, the $(5/6)$-threshold of \PWS~\cite{Kawamura_5/6_stoc} implies that all instances with density bounded by $5/6$ admit a \kV\ schedule for all $k\in\mathbb{N}$ (cf.~\cite{kVisits}). However, as we prove, this threshold is not tight for all $k$.

Recent advances in \PWS\ and related problems include the following:  Kawamura, Kobayashi and Kusano~\cite{Kawamura_pinwheel_cover} study \emph{Pinwheel Covering}, where each task has to be executed \emph{at most} once in a specified number of time units; Kanellopoulos~\cite{finite_covering} studies a finite variant of Pinwheel Covering; Kobayashi and Lin~\cite{Pinwheel_ISAAC_2025} prove that \PWS\ cannot be solved in polynomial time under a standard complexity
assumption;\footnote{Note that this result does not imply NP-hardness for \PWS.} Biktairov et al.~\cite{Biktairov_Polyamorous} study a generalization of Bamboo Garden Trimming (BGT); Mishra~\cite{Patrolling_SODA_2026} gives an optimal density bound for Pinwheel Covering and a constant approximation for BGT; the same density bound was also concurrently proven by Kawamura and Kobayashi~\cite{Kawamura_Kobayashi_density_covering}; Mendoza-Cadena, Merino, Nielsen and Schewior~\cite{Schewior_Combinatorial_Perpetual_Scheduling_ICALP} study a combinatorial variant of BGT. Other recent progress for BGT includes~\cite{Bamboo_second,Bamboo_approx_2,DBLP:journals/jcss/GasieniecJKLLMR24,Bamboo_approx_1,Kawamura_5/6_stoc}. Recent research has also focused on generalizations of \PWS\ with task durations~\cite{SOFSEM_Kusano_pinwheel_durations} and real periods~\cite{SOFSEM_real_periods}. In a recent breakthrough paper, Kleinberg and Mishra~\cite{kleinberg_mishra_NP_hardness} showed weak NP-hardness for \PWS\ and a PTAS for BGT. Regardless, strong NP-hardness and PSPACE-completeness for \PWS\ remain open.

Other relevant problems include \emph{Perpetual Exploration}~\cite{blin2010exclusive} and \emph{Patrolling}~\cite{Czyzowicz_Patrolling}, in particular \emph{Patrolling with Unbalanced Frequencies}~\cite{puf2018}. However, these problems usually consider special graph topologies, such as lines~\cite{puf2018}, and multiple agents~\cite{damaschke2020two}, while the problems we study correspond to complete graphs with a single agent.

\subsection{Our Contributions}\label{subsec:contributions}

The main objectives of this work are the two open questions from~\cite{kVisits} about the existence of FPT algorithms for \twoV\ parameterized by the \emph{maximum multiplicity} and by the \emph{number of numbers}; we answer the former in the negative and make progress for the latter towards the positive. Another focus is the study of density thresholds for \kV, inspired by the $(5/6)$-threshold of \PWS~\cite{Kawamura_5/6_stoc}.
Our main technical contributions include:
\begin{enumerate}
    \item A reduction chain from \NMTS\ to \twoV\ using at most two copies of each number, proving strong NP-completeness for the latter even for maximum multiplicity $2$ (Section~\ref{sec:max_mult_hardness}). This rules out FPT or even XP algorithms parameterized by the maximum multiplicity (unless $\mathrm{P}=\mathrm{NP}$), resolving an open question from~\cite{kVisits}. This result is quite surprising, as \twoV\ admits a linear-time algorithm for simple sets~\cite{kVisits}, i.e., when the maximum multiplicity is $1$.
    
    \item A reduction chain from \twoV\ to \EM, proving that \twoV\ is in the complexity class RP when the number of distinct deadlines is constant (Section~\ref{sec:number_of_numbers}). Since it is rare for a problem to be in $\mathrm{RP}$ but not in $\mathrm{P}$, we provide strong evidence towards the positive of the open question in~\cite{kVisits} regarding the parameterization by the number of numbers.
    
    \item A lower bound of $\sqrt{2}-1/2 \approx 0.9142$ for the density threshold of \twoV\ (Section~\ref{subsec:density_twoV}), significantly larger than the $(5/6)$-threshold of \PWS~\cite{Kawamura_5/6_stoc}. We complement this with a proof that the density threshold of \kV\ approaches $5/6$ when $k \to \infty$,    
    as well as other minor results regarding the density thresholds of \kV\ (see Table~\ref{table:density}).
\end{enumerate}

A detailed discussion of all of our results follows, organized per section. Section~\ref{sec:prelims} includes definitions and an overview of all past results necessary for the reader to understand this paper.


\begin{figure}[ht]
    \centering
    \resizebox{\textwidth}{!}{
\begin{tikzpicture}[
  >=latex,
  box/.style={
    draw,
    thick,
    rectangle,
    minimum height=13mm,
    minimum width=28mm,
    align=center,
    font=\bfseries\large
  },
  lab/.style={ inner sep=1pt}
]

\node[box] (p3) {3-Partition};
\node[box, right=15mm of p3] (rn) {RN3DM};
\node[box, right=15mm of rn] (in) {IN3DM};
\node[box, right=15mm of in, minimum width=28mm] (pm) {Position\\Matching};
\node[box, right=15mm of pm] (tv) {2-Visits};

\draw[->, thick, line width=1.1pt] (p3) -- node[lab, above] {\cite{Flow_shop_Yu}} (rn);
\draw[->, thick, line width=1.1pt] (rn) -- node[lab, above] {\cite{kVisits}} (in);
\draw[->, thick, line width=1.1pt] (in) -- node[lab, above] {\cite{kVisits}} (pm);
\draw[->, thick, line width=1.1pt] (pm) -- node[lab, above] {\cite{kVisits} \textbf{(*)}} (tv);

\node[box, below=18mm of p3, minimum width=36mm] (plsc)
  {Partial Latin\\Square\\Completion};

\node[box, right=11mm of plsc, minimum width=28mm] (nmts)
  {NMTS\\(distinct)};

\node[box, right=15mm of nmts, minimum width=28mm] (srnmts)
  {SRNMTS};

\node[box, right=15mm of srnmts, minimum width=28mm] (in2)
  {IN3DM\\(distinct)};

\draw[->, thick, line width=1.1pt] (plsc) -- node[lab, above] {\cite{NMTS_distinct}} (nmts);
\draw[->, thick, line width=2.5pt] (nmts) -- node[lab, above] {Thm.~\ref{thrm:srnmts_hardness}} (srnmts);
\draw[->, thick, line width=2.5pt] (srnmts) -- node[lab, above] {Thm.~\ref{thrm:IN3DM_distinct_hardness}} (in2);

\draw[->, thick, line width=2.5pt] (in2.north) -- node[lab, right] {Thm.~\ref{thrm:PM_reduction}} (pm.south);

\end{tikzpicture}}
    \caption{A map of the chain of reductions to \twoV\ in~\cite{kVisits} compared to our multiplicity-preserving chain of reductions in Section~\ref{sec:max_mult_hardness} (bold arrows). The reduction marked with a~(*) preserves the maximum multiplicity and is thus used as the final step for our result.}
    \label{fig:reductions_multiplicity}
\end{figure}
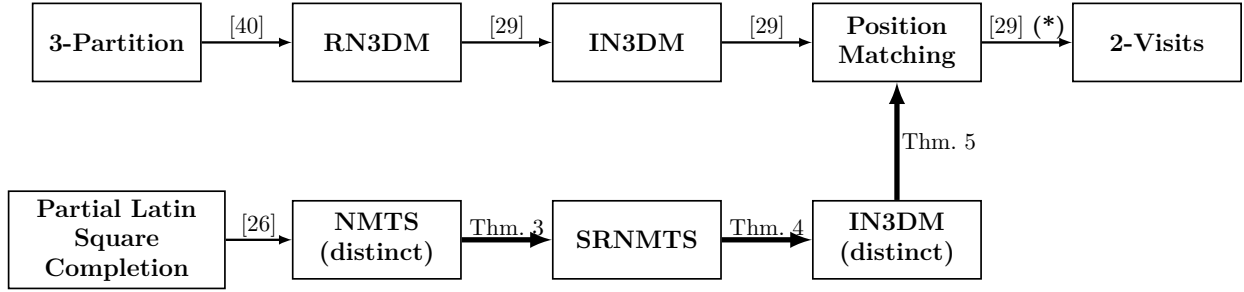

In Section~\ref{sec:max_mult_hardness} we prove that \twoV\ remains strongly NP-hard even when the maximum multiplicity of the input is $2$. Our result mostly relies on an alternative reduction from \INTDM\ (Def.~\ref{def:IN3DM}) to \PM\ (Def.~\ref{def:PM}) that only pads the input with up to $2$ copies of each number, in contrast to the reduction of~\cite{kVisits}, which pads the input with $\bigO(n)$ copies of some large number. However, this alone is insufficient to prove that \twoV\ is hard for multiplicities bounded by $2$, because the problem from which the reduction chain in~\cite{kVisits} starts (\RNTDMshort) is not known to be NP-hard for simple sets. On the contrary, the NP-hardness proof of \RNTDMshort\ by Yu, Hoogeveen and Lenstra~\cite{Flow_shop_Yu} relies heavily on padding the input with a large amount of duplicate numbers, and modifying it to forgo this seems particularly challenging. Our second contribution in Section~\ref{sec:max_mult_hardness} is thus to start another reduction chain from \NMTS\ (\NMTSshort), which is known to be strongly NP-hard even for distinct inputs by Hulett, Will and Woeginger~\cite{NMTS_distinct}. As an intermediate step, we prove that \NMTSshort\ remains hard even when the input consists of distinct numbers \emph{and one of three sets is fixed}; the latter is crucial for the reduction to \twoV.
See Figure~\ref{fig:reductions_multiplicity} for a comparison between our reduction chain and that of~\cite{kVisits}.

\begin{figure}[ht]
    \centering
    \resizebox{\textwidth}{!}{
\begin{tikzpicture}[
  >=latex,
  box/.style={
    draw,
    thick,
    rectangle,
    minimum height=13mm,
    minimum width=26mm,
    align=center,
    font=\bfseries
  },
  lab/.style={ inner sep=1pt}
]

\node[box] (2v) {2-Visits};
\node[box, right=8mm of 2v] (p3) {Position\\Matching};
\node[box, right=22mm of p3] (rn)
{EWPM\\ \small{poly weights} \\\small{multigraph}};
\node[box, right=15mm of rn] (in) {EWPM\\\small{poly weights}};
\node[box, right=12mm of in, minimum width=28mm] (pm) {Exact\\Matching};

\draw[->, thick, line width=1.1pt] (2v) -- node[lab, above] {\cite{kVisits}} (p3);
\draw[->, thick, line width=1.1pt] (p3) -- node[lab, above] {Thm.~\ref{thrm:reduction_PM_to_EWPM} \textbf{(*)}} (rn);
\draw[->, thick, line width=1.1pt] (rn) -- node[lab, above] {Lem.~\ref{lem:multigraph_to_simple}} (in);
\draw[->, thick, line width=1.1pt] (in) -- node[lab, above] {\cite{Maalouly_stacs_2022,EWPM_to_EM_reduction}} (pm);

\end{tikzpicture}}
    \caption{A map of our reductions in Section~\ref{sec:number_of_numbers}, transforming \twoV\ into \EM\ when there is a constant amount of distinct numbers. The reduction marked with a~(*) results in polynomial weights if the number of numbers is constant.}
    \label{fig:reductions_number_of_numbers}
\end{figure}
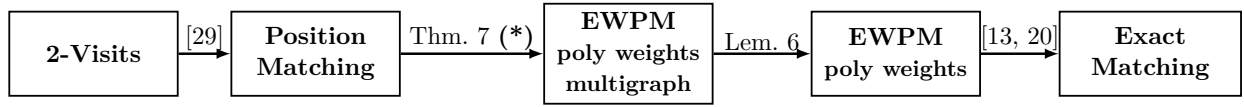

In Section~\ref{sec:number_of_numbers} we make progress on the parameterization of \twoV\ by the number of numbers, proving that it admits a randomized polynomial-time algorithm when the number of numbers is constant.\footnote{In fact, only the maximum number of numbers corresponding to the same \emph{cluster} (Def.~\ref{def:cluster}) has to be bounded by a constant. If this is the case, then the result holds even for unbounded number of numbers of the whole input.} We achieve this through a reduction to \EWPM\ (\EWPMshort), which in turn reduces to \EM~\cite{Maalouly_stacs_2022,EWPM_to_EM_reduction,Papadimitriou_EM}. The latter is known to admit a randomized polynomial-time algorithm~\cite{Vazirani_EM}, although its derandomization is a notorious open question~\cite{Maalouly_stacs_2022,Maalouly_esa_2025,Gurjar_EM_Complete}.  See Figure~\ref{fig:reductions_number_of_numbers} for an overview of our reductions in Section~\ref{sec:number_of_numbers}.

\newcommand{\lightrule}{\arrayrulecolor{black!30}\hline\arrayrulecolor{black}}
{
\renewcommand{\arraystretch}{1.15}
\begin{table}[ht]
\centering
\begin{tabular}{|l|c!{\color{black!30}\vrule}c|}
\hline
 & \textbf{Lower Density Threshold} & \textbf{Upper Density Threshold} \\
\hline
\textbf{Pinwheel}      & $5/6$~\cite{Kawamura_5/6_stoc} & $1$~\cite{Holte_Pinwheel}\\
\lightrule
\textbf{1-Visit} & $1$ (Thm.~\ref{thrm:lower_density_threshold_oneVisit}) & \multirow{4}{*}{Does not exist (Thm.~\ref{thrm:upper_density_threshold})} \\
\arrayrulecolor{black!30}\cline{1-2}\arrayrulecolor{black}
\textbf{2-Visits} & $[\sqrt{2}-\tfrac{1}{2},\ 1]$ (Thm.~\ref{thrm:lower_density_threshold_twoVisits}) &  \\
\arrayrulecolor{black!30}\cline{1-2}\arrayrulecolor{black}

\multirow{2}{*}{\textbf{k-Visits}} & $[5/6,\ 5/6 + 1/\lfloor (k-1)/6 \rfloor]$ (Thm.~\ref{thrm:lower_density_threshold_largeK}) &  \\
\arrayrulecolor{black!30}\cline{2-2}\arrayrulecolor{black}

& Approaches $5/6$ for $k\rightarrow \infty$ (Cor.~\ref{cor:density_largeK})
  & \\

\hline
\end{tabular}
\caption{Our bounds for the density thresholds of \kV\ in Section~\ref{sec:density}, compared to those of \PWS. Note that the upper bound of $5/6 + 1/\lfloor (k-1)/6 \rfloor$ is only useful for $k> 42$, since Thm.~\ref{thrm:lower_density_threshold_oneVisit} implies that the lower density threshold of \kV\ is at most $1$ for all~$k$.}
\label{table:density}
\end{table}
}

In Section~\ref{sec:density} we study the \emph{density thresholds} of \kV\ (see the opening paragraph of Section~\ref{sec:density} for formal definitions). We first prove that \oneV\ has a tight lower density threshold of~$1$, implying that the lower density threshold of \kV\ is not larger than $1$ for any $k\in\mathbb{N}$. Our main technical contribution, however, is proving that all instances with density at most $\sqrt{2}-1/2\approx 0.9142$ admit a \twoV\ schedule, which is additionally computable in linear time. The proof for this result relies on constructing the density-minimizing instance that violates a property guaranteeing the existence of a \twoV\ schedule. We additionally provide an upper bound of $5/6 + 1/\lfloor (k-1)/6 \rfloor$ for the lower density threshold of \kV, approaching $5/6$ when $k\rightarrow \infty$, thus establishing another connection between \kV\ and \PWS. Lastly, we prove that no upper density threshold exists for the former, for any $k\in \mathbb{N}$. See Table~\ref{table:density} for a summary of our results in this section compared to the known thresholds for \PWS. 

{
\begin{table}[ht]
\centering
\begin{tabular}{|c!{\color{black!30}\vrule}c|}
\hline
\textbf{(1 or 2)-Visits algorithm} & \textbf{Condition} \\
\hline
$\bigO(m+n!)$ time & - \\
\lightrule
$\bigO(m+n)$ time  & Deadlines requiring two visits are distinct \\
\lightrule
$\bigO(m+n)$ time  & At most two distinct deadlines per cluster \\
\lightrule
$\bigO(m+n\cdot c!)$ time & \multirow{2}{*}{Maximum cluster size bounded by $c$} \\
(FPT by maximum cluster size) & \\
\lightrule
Randomized polynomial-time & Constant number of distinct deadlines per cluster \\
\hline
\end{tabular}
\caption{An overview of the algorithms for \oneortwoV\ in Section~\ref{sec:one_or_two_visits} (Theorem~\ref{thrm:one_or_twoV_algos}), assuming $m$ tasks require one visit and $n$ tasks require two visits. All algorithms coincide with the respective algorithms for \twoV\ for $m=0$. Note that the naive brute force algorithm for \oneortwoV\ runs in time $\bigO((m+2n)!)$.}
\label{table:one_or_twoV}
\end{table}
}

In Section~\ref{sec:one_or_two_visits} we generalize all existing positive results for \twoV\ (including our own result from Section~\ref{sec:number_of_numbers}) to \oneortwoV, i.e., a version where $m$ tasks require one execution and $n$ tasks require two. See Table~\ref{table:one_or_twoV} for a list of the algorithms for \oneortwoV.

In Section~\ref{sec:3v_counterexample} we provide a \threeV\ counterexample that violates two of the most important properties of \twoV, with one of these violations indicating that \threeV\ may be strongly NP-complete even for distinct deadlines (which was already a conjecture in~\cite{kVisits}). Our result indicates that all algorithms in Table~\ref{table:one_or_twoV} are unlikely to generalize to versions with three or more visits per task with the currently existing tools.

\subsubsection*{Prior Work}

This work is an extension of the paper with the same title presented in ICALP 2026~\cite{ICALP_kVisits}. Compared to the conference version, this paper contains detailed proofs for all of our results, along with many minor revisions and more up-to-date related work.

\section{Preliminaries}\label{sec:prelims}

Throughout the paper, we use the notations $[n]=\{1,\dots, n\}$ and $[m,n]=\{m,\dots, n\}$ for $m,n\in \mathbb{N}$ with $m\leq n$.
Membership in NP is trivial to prove for all problems studied in this paper (except \PWS) and is thus always omitted. For all problems studied in this work, we assume that all input sets are given in a sorted (non-decreasing) order.

\subsection{Problem Definitions}

We first define \PWS~\cite{Holte_Pinwheel} and its finite version, \kV~\cite{kVisits}.

\begin{definition}[\PWS]
    Given a (multi)set of positive integers (deadlines) $D=\{ d_1,\ldots,d_n\}$, the $\PWS$ problem asks whether there exists an infinite schedule $p_1, p_2, \ldots$ , where $p_j \in [n]$ for $j\in \mathbb{N}$, such that for all $i\in [n]$ any $d_i$ consecutive entries contain at least one occurrence of $i$.
\end{definition}

\begin{definition}[\kV]\label{def:kV}
    Given a (multi)set of positive integers (deadlines) $D=\{ d_1,\ldots,d_n\}$, the $\kV$ problem asks whether there exists a schedule of length $nk$, containing each $i \in [n]$ exactly $k$ times, with the constraint that every occurrence of $i$ is at most $d_i$ positions away from the previous one (or within the first $d_i$ positions of the schedule, if it is the first occurrence of $i$).
\end{definition}

For \twoV\ in particular, we use the following definition (as in~\cite{kVisits}), which is equivalent to Definition~\ref{def:kV} for $k=2$. For an explanation on why this definition is useful, see Section~\ref{subsubsec:old_results}.

\begin{definition}[\twoV]\label{def:2V}
    Given a (multi)set of positive integers (deadlines) $D=\{ d_1,\ldots,d_n\}$, the $\twoV$ problem asks whether there exists a schedule of length $2n$, containing a \emph{primary} and a \emph{secondary} visit for each $i \in [n]$. For every $i \in [n]$, its primary visit must be at most $d_i$ positions away from the beginning of the schedule and its secondary visit must be either before its primary visit or at most $d_i$ positions after its primary visit.
\end{definition}


\begin{remark}\label{remark:2V_deadline_bound}
    We always assume that for a \twoV\ instance $D=\{d_1,\ldots,d_n\}$ it holds that $d_i \leq 2n$, $\forall i\in [n]$. Deadlines larger than $2n$ would never expire and can thus have both of their visits placed at the end of the schedule and be removed from the input.
\end{remark}

We now define various numerical matching variants that will be needed throughout the paper. 

\begin{definition}[$\NMTSshort$]\label{def:NMTS}
    Given three (multi)sets $A=\{a_1,\ldots,a_n\}$, $B=\{b_1,\ldots,b_n\}$ and $T=\{t_1,\ldots,t_n\}$ of positive integers, the \NMTS\ $(\NMTSshort)$ problem asks whether there is a subset $M$ of $A \times B \times T$ s.t. every $a_i \in A$, $b_i \in B$, $t_i \in T$ occurs exactly once in~$M$ and for every triplet $(a,b,t)\in M$ it holds that $a+b = t$.
\end{definition}

\begin{theorem}[Hulett et al. 2008~\cite{NMTS_distinct}]\label{thrm:nmts_distinct_hardness}
    \NMTSshort\ is strongly NP-complete even when all $3n$ input elements are distinct.
\end{theorem}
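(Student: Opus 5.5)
Membership in NP is immediate, since a certificate is the set $M$ of triplets. For hardness, I would follow the route indicated in Figure~\ref{fig:reductions_multiplicity} and reduce from \textsc{Partial Latin Square Completion}, which is NP-complete (Colbourn). I would view an order-$n$ partial Latin square as a tripartite structure on rows $R$, columns $C$ and symbols $S$. Each filled cell $(r,c)$ with symbol $s$ uses up the three pairs $(r,c)$, $(r,s)$ and $(c,s)$. Completing the square then means choosing, for every empty cell $(r,c)$, a symbol $s$ such that the missing row–symbol pairs and the missing column–symbol pairs are each used exactly once. Let $E$ be the set of empty cells, $P$ the set of pairs $(r,s)$ where symbol $s$ does not yet occur in row $r$, and $Q$ the set of pairs $(c,s)$ where $s$ does not yet occur in column $c$. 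All three sets have cardinality $n^2$ minus the number of filled cells. This is exactly the equal-size condition that \NMTSshort\ needs.

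The key step is a numerical encoding in which $a+b=t$ holds only for a consistent triple. Fix a base $X$ that is polynomially large, say $X=10n$, and set:
\begin{itemize}
\item $a(r,c)=cX^2+rX+L_A$ for each $(r,c)\in E$;
\item $b(r,s)=sX^3+(n+1-r)X+L_B$ for each $(r,s)\in P$;
\item $t(c,s)=sX^3+cX^2+(n+1)X+L_A+L_B$ for each $(c,s)\in Q$.
\end{itemize}
Here $L_A<L_B$ are offsets that separate the sets, for example $L_A=1$ and $L_B=X^4$. This also makes $t$-values larger than all $b$-values, because $cX^2\ge X^2$.

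Then $a(r,c)+b(r',s)$ has digits $s$, $c$ and $r+n+1-r'$ in base $X$, with no carries since every digit is below $X$. This sum equals $t(c',s')$ if and only if $s=s'$, $c=c'$ and $r=r'$. So a valid triplet $(a,b,t)$ corresponds exactly to placing symbol $s$ in empty cell $(r,c)$, where the row pair $(r,s)$ and the column pair $(c,s)$ are both still missing.

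Distinctness follows directly from the construction:
\begin{itemize}
\item within each set, the elements are indexed by distinct pairs, which are encoded in distinct digits;
\item across sets, the offsets and the $X^3$, $X^4$ scales keep the value ranges disjoint.
\end{itemize}
All numbers are bounded by $\mathcal{O}(n^{4})$, so the reduction yields \emph{strong} NP-hardness.

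For correctness, a perfect triple matching $M$ assigns to each empty cell a symbol. Since each $(r,s)\in P$ is used exactly once, every row contains every symbol exactly once, and since each $(c,s)\in Q$ is used exactly once, every column does too. Hence $M$ gives a Latin square completion. The converse direction is symmetric: a completion induces such a matching. The main obstacle is the no-carry/uniqueness argument, namely ruling out a spurious equality $a+b=t$ with mismatched indices. I would handle this by bounding every digit strictly below $X$ and comparing the sums digit by digit.
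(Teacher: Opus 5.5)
Your reduction from \textsc{Partial Latin Square Completion} is correct and follows the route the paper relies on (the paper does not reprove this theorem; it cites Hulett et al.\ for exactly this reduction, cf.\ Figure~\ref{fig:reductions_multiplicity}): the carry-free base-$X$ comparison forces $r=r'$, $c=c'$, $s=s'$, so perfect triple matchings correspond exactly to completions. A few side remarks need tidying. First, $t$-values do not exceed all $b$-values, since a $b$-value with a larger symbol is bigger, although nothing uses this claim. Second, $B$ and $T$ are kept apart by their units digits ($0$ versus $1$) rather than by disjoint ranges. Third, for a clean strong NP-hardness claim you should index only the rows, columns and symbols that actually occur in $E\cup P\cup Q$ (at most $N=|E|$ of each) and take $X=10N$. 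Then all numbers are polynomial in the size of the produced instance rather than in the order of the square, which can be much larger when few cells are empty.
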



The following problem was defined in~\cite{kVisits} as an intermediate step for proving the strong NP-hardness of \twoV. Essentially, it is a variant of \NMTSshort\ in which an inequality is used for satisfying the targets, instead of an equality; additionally, one of the three sets is fixed and is not actually part of the input.

\begin{definition}[\INTDMshort]\label{def:IN3DM}
    Given two (multi)sets $A=\{a_1,\ldots,a_n\}$ and $T=\{t_1,\ldots,t_n\}$ of positive integers, the \INTDM\ $(\INTDMshort)$ problem asks whether there is a subset $M$ of $A \times [n] \times T$ s.t. every $a_i \in A$, $b \in [n]$, $t_i \in T$ occurs exactly once in~$M$ and for every triplet $(a,b,t)\in M$ it holds that $a+b \geq t$.
\end{definition}

We will now define \PM\ (introduced in~\cite{kVisits}), which is arguably the most crucial component for the NP-hardness proof of \twoV, as well as its algorithms. This problem is, in a certain sense, equivalent to \twoV\ (see Section~\ref{subsubsec:old_results} for details). Much of the technical challenge surrounding \twoV\ stems from the fact that the second set of this problem is derived from the first one in a recursive manner, and is thus neither part of the input, nor static as it is in \INTDMshort. We first have to define the concept of \emph{discretized sequences}~\cite{kVisits}.

\begin{definition}[Discretized Sequence]\label{def:disc}
    Given a non-decreasing sequence $D=\langle d_1,\ldots,d_n\rangle$ of positive integers, we define its \emph{discretized sequence} $A=\langle a_1,\ldots,a_n\rangle$ as follows.
    \[a_i = \begin{cases}
        d_i,\ i=n\\
        \min\{a_{i+1}-1,\ d_i\},\ i<n
    \end{cases}.\]
\end{definition}

Intuitively, the discretized sequence of a sequence of deadlines consists of the latest possible positions in which the first visits can be placed in a schedule respecting the deadlines. For example, the discretized sequence of the sequence of deadlines $D=\langle 3,5,5,7,7,7,15,15,16 \rangle$ is $A=\langle 2,3,4,5,6,7,14,15,16 \rangle$. Note that the discretized sequence of a sorted set of integers can be computed in $\bigO(n)$ time. 

\begin{definition}[\PM]\label{def:PM}
    Given a (multi)set $D=\{ d_1,\ldots,d_n\}$ and a simple set $T=\{t_1,\ldots,t_n\}$ of positive integers, let $A=\langle a_1,\ldots,a_n \rangle$ be the discretized sequence of $D$. The $\PM$ problem asks whether there is a subset $M$ of $D\times A\times T$ s.t. every $d_i \in D$, $a_i \in A$, $t_i \in T$ occurs exactly once in $M$ and for every triplet $(d,a,t)\in M$ it holds that $d \geq a$ and $d+a\geq t$.
\end{definition}

\subsection{Synopsis of key previous results}

\subsubsection{The connection between \twoV\ and \PM}\label{subsubsec:old_results}

Arguably the most crucial result for \twoV\ by Kanellopoulos et al.~\cite{kVisits} is the following lemma, which transforms \twoV\ into a numerical matching variant (\PM\ - Def.~\ref{def:PM}). 

\begin{lemma}[\cite{kVisits}]\label{lem:old_primary_discon_prop}
    If a \twoV\ instance $D=\{d_1,\ldots,d_n\}$ admits a schedule, then it admits a schedule such that all $n$ primary visits are placed in a permutation of the positions contained in the discretized sequence $A=\langle a_1,\ldots,a_n\rangle$ of $D$. Equivalently, all $n$ secondary visits are placed in a permutation of the positions in $T=[2n]\setminus A$.
\end{lemma}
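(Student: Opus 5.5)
We use an exchange argument: start from an arbitrary valid schedule and repeatedly swap visits until the primary visits occupy exactly the positions in $A$. This never breaks a deadline. Throughout, Definition~\ref{def:2V} is the working definition. Under it, a schedule is valid when every primary visit of task $i$ lies in $[1,d_i]$ and every secondary visit lies either before its primary or within $d_i$ positions after it.

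\textbf{Step 1 (normalise the primaries).} In any schedule, relabel visits so that for each task the earlier of its two visits is the primary; this keeps the schedule valid. Sort the tasks by non-decreasing deadline and let $p_1<\dots<p_n$ be the sorted primary positions. By an exchange among primaries (swapping the primaries of two tasks $i<j$ whenever $d_i\le d_j$ but $i$'s primary comes later), we may assume the primary of $d_i$ is at $p_i$ and $p_i\le d_i$. A Hall-type or backward-induction argument then gives $p_i\le a_i$ for all $i$: we have $p_n\le d_n=a_n$, and $p_i\le \min\{p_{i+1}-1,d_i\}\le\min\{a_{i+1}-1,d_i\}=a_i$. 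Swapping primaries can invalidate secondaries. To avoid this, the swap should also exchange the entire pair of visits of the two tasks, that is, exchange the roles of the two tasks position-wise. Then only the deadline conditions of the two tasks change, and since the gap between the new occupants is bounded by the larger deadline, we need a short check here.

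\textbf{Step 2 (push primaries right to $A$).} Now $p_i\le a_i$, and we process $i=n,n-1,\dots,1$. If $p_i<a_i$, the position $a_i$ is not a primary position, since primaries $p_{i+1},\dots,p_n$ already equal $a_{i+1},\dots,a_n>a_i$ and earlier primaries lie below $p_i$. So $a_i$ holds a secondary visit of some task $j$. We swap the primary of $i$ (at $p_i$) with that secondary (at $a_i$). Task $i$'s primary moves to $a_i\le d_i$, which is valid. Its secondary, if it was after $p_i$ and within $d_i$ of it, has either already been passed and so is now before the primary, or it lies at distance at most $d_i$ from $a_i>p_i$. Task $j$'s secondary moves earlier, to $p_i<a_i$. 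If $j$'s secondary was after its primary, moving it closer keeps the distance bound, unless it now falls before its primary, which is also allowed. The subtle case is when $j$'s visit at $a_i$ and $j$'s primary are reordered so that this secondary becomes its earliest visit. In that case we relabel as in Step 1, and we need a potential argument showing that the process terminates, for example by lexicographic increase of the vector $(p_n,\dots,p_1)$. Every swap strictly increases $p_i$ without decreasing any later coordinate, so the process terminates.

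\textbf{Main obstacle.} The hard part is the relabeling interaction: when task $j$'s secondary moves before its primary, we may want the earlier visit to become $j$'s primary. This would disturb the sorted primary structure. The definition does not force this relabeling, however, because Definition~\ref{def:2V} explicitly allows a secondary visit before the primary. That allowance is exactly why this definition is used, so we simply do not relabel during Step 2 and the potential argument goes through. Once every $p_i=a_i$, the secondaries fill $[2n]\setminus A=T$, which gives the equivalent formulation.
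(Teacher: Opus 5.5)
Your proof has a genuine gap in Step~1, and Step~2 depends on it. The normalisation "we may assume the primary of $d_i$ is at $p_i$" is false: some yes-instances have no valid schedule whose primaries appear in order of non-decreasing deadline.

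Take $D=\{3,4,5,6,7,7\}$, so $A=\langle 2,3,4,5,6,7\rangle$ and $T=\{1,8,9,10,11,12\}$. The schedule $3,3,6,4,5,7,7',4,6,5,7,7'$ is valid. Now consider any valid schedule:
\begin{itemize}
\item All primaries lie in $[7]$, and both visits of the deadline-$3$ task lie in $[6]$. So positions $8$--$12$ hold exactly the secondaries of the tasks with deadlines $4,5,6,7,7$.
\item This forces the deadline-$4$ task to sit at positions $4$ and $8$.
\item A sorted order would then need the four primaries of deadlines $5,6,7,7$ at distinct positions in $[5,7]$, which is impossible.
\end{itemize}

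Your proposed repair, exchanging all visits of tasks $i$ and $j$, fails for a simple reason. The task with the smaller deadline $d_i$ inherits $j$'s gap, which is only bounded by $d_j\ge d_i$. There is also a quick sanity check that Steps~1--2 as written cannot work. They would give every yes-instance a schedule with task $i$'s primary exactly at $a_i$. Then \twoV\ would reduce to checking $t_i\le d_i+a_i$ for all $i$, contradicting strong NP-hardness unless $\mathrm{P}=\mathrm{NP}$. Compare Lemma~\ref{lem:2V_claim}, which needs a density bound to guarantee precisely this condition.

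The fix is to drop the sorting altogether; it is not even needed for $p_i\le a_i$. In any valid schedule, tasks $1,\dots,i$ have their primaries in $[d_i]$, so $p_i\le d_i$ by counting, and your backward induction goes through unchanged.

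In Step~2 at index $i$, do not move ``the primary of task $i$''. The primaries at $p_1,\dots,p_i$ belong to $i$ distinct tasks, so by pigeonhole one of them has deadline at least $d_i\ge a_i$. Move that primary to $a_i$, and move the secondary at $a_i$ back to the vacated position. Your validity checks (a primary moved later within its deadline, a secondary moved earlier) then apply verbatim. Afterwards the $i$-th primary position equals $a_i$, so one swap per index suffices and no potential argument is needed.

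This pigeonhole selection is exactly Claim~\ref{claim:swap} in the paper's proof of the generalization, Lemma~\ref{lem:1or2_primary_discon_prop}. There, at most $n-i$ primaries lie after $a_i$, so at least $i$ lie before a non-primary at $a_i$, and one of them has deadline at least $d_i\ge a_i$. The paper then iterates such swaps until each misplaced secondary leaves $A$.
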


The proof of Lemma~\ref{lem:old_primary_discon_prop} relies on swapping arguments between primary and secondary visits, in order to place primary visits as late as possible in the schedule (i.e., in the positions of the discretized sequence). Note that these swapping arguments rely on the fact that secondary visits can be moved to an earlier position of the schedule (if one is available), without affecting its feasibility. This is the main reason why Definition~\ref{def:2V} is used for \twoV\ instead of Definition~\ref{def:kV}; this property would be violated if we used first/second visits instead of primary/secondary visits. For details, see the full version of the paper by Kanellopoulos et al.~\cite{kvisits_arxiv}.

Using Lemma~\ref{lem:old_primary_discon_prop}, one can reduce \twoV\ to \PM\ in linear time: each deadline $d\in D$ has to be matched with a position $a\in A$ for its primary visit and a position $t \in T=[2n]\setminus A$ for its secondary visit, such that $d\geq a$ and $d+a\geq t$.\footnote{The first of these two inequalities demands that the primary visit is at most $d$ positions from the start of the schedule, while the second one demands that the secondary visit is at most $d$ positions after the respective primary (or anywhere before it). Recall that these restrictions stem directly from Def.~\ref{def:2V}.}

\begin{corollary}\label{cor:old_2V_to_PM_reduction}
    \twoV\ reduces to \PM\ in $\bigO(n)$ time.
\end{corollary}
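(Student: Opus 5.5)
The plan is to compute, in linear time, the instance of \PM\ that Lemma~\ref{lem:old_primary_discon_prop} suggests, and then check that the two instances are equivalent.

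\emph{Construction.} Given a \twoV\ instance $D=\{d_1,\ldots,d_n\}$, sorted as assumed in the preliminaries, compute its discretized sequence $A=\langle a_1,\ldots,a_n\rangle$ by one backward pass, in $\bigO(n)$ time. Before going further, check that $a_1\geq 1$. If not, some $i$ deadlines are all at most $i-1$, so their $i$ primary visits cannot fit, and we output a fixed no-instance.

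Otherwise $A$ is strictly increasing and contained in $[2n]$, by Remark~\ref{remark:2V_deadline_bound}. Compute $T=[2n]\setminus A$ by merging the sorted list $A$ against $1,\ldots,2n$, again in $\bigO(n)$ time. This $T$ is a simple set of size exactly $n$. Output the \PM\ instance $(D,T)$.

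\emph{Completeness.} Suppose $D$ admits a schedule. By Lemma~\ref{lem:old_primary_discon_prop}, there is a schedule whose primary visits occupy exactly the positions of $A$ and whose secondary visits occupy exactly the positions of $T$. For each task $i$, form the triplet $(d_i,a,t)$, where $a$ is the position of its primary visit and $t$ the position of its secondary visit.
\begin{itemize}
\item Every element of $D$, $A$ and $T$ appears exactly once across these triplets.
\item The primary visit lies within the first $d_i$ positions, so $d_i\geq a$.
\item The secondary visit is either before the primary ($t<a\leq a+d_i$) or at most $d_i$ positions after it. Either way $t\leq d_i+a$.
\end{itemize}
So these triplets form a valid matching $M$.

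\emph{Soundness.} Conversely, let $M$ be a valid \PM\ matching. For each triplet $(d_i,a,t)\in M$, place task $i$'s primary visit at position $a$ and its secondary visit at position $t$.
\begin{itemize}
\item Since $A$ and $T$ are disjoint and together form $[2n]$, every position of the schedule is filled exactly once.
\item From $d_i\geq a$, the primary visit is within the first $d_i$ positions.
\item From $t\leq d_i+a$, the secondary visit is either before the primary or at most $d_i$ positions after it.
\end{itemize}
Hence Definition~\ref{def:2V} is satisfied.

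\emph{Main obstacle.} The only substantive step is completeness, and it is entirely carried by Lemma~\ref{lem:old_primary_discon_prop}. What remains is bookkeeping: $A$ consists of valid distinct positions, $T$ therefore has exactly $n$ elements, and every step stays linear-time given sorted input.
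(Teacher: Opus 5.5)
Your proof is correct and follows the paper's route: compute $A$ and $T=[2n]\setminus A$, get completeness from Lemma~\ref{lem:old_primary_discon_prop}, and read the two \PM\ inequalities $d\geq a$ and $d+a\geq t$ directly off Definition~\ref{def:2V} for soundness. The only addition is your explicit handling of the degenerate case $a_1<1$, which the paper leaves implicit; your argument that this forces some $i$ deadlines to be at most $i-1$ is right.
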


\begin{remark}
    Kanellopoulos et al.~\cite{kVisits} present a more complicated version of the aforementioned reduction, reducing \twoV\ directly to (potentially) smaller instances of \PM, which is useful for one of the positive results of that paper. In the context of this paper, the simpler reduction to \PM\ that we described above suffices.
\end{remark}

The next two lemmas motivate our results in Sections~\ref{sec:max_mult_hardness} and~\ref{sec:number_of_numbers} respectively.



\begin{lemma}[\cite{kVisits}]\label{lem:old_distinct}
    \PM\ admits a linear-time algorithm if $D$ is a simple set. Consequently, \twoV\ also admits a linear-time algorithm if $D$ is a simple set (by Corollary~\ref{cor:old_2V_to_PM_reduction}).
\end{lemma}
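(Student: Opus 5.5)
The plan is to show that when $D$ is a simple set, \PM\ collapses to a trivial check on sorted sequences. There are three steps: identify the discretized sequence, show that the $D$–$A$ pairing is forced, and then reduce the remaining $T$-matching to a componentwise comparison.

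\emph{Step 1: $A=D$.} Write $D$ sorted as $d_1<d_2<\dots<d_n$. Since the $d_i$ are distinct integers, $d_i\le d_{i+1}-1$. We show $a_i=d_i$ by downward induction on $i$. The base case is $a_n=d_n$ by definition. If $a_{i+1}=d_{i+1}$, then $a_i=\min\{d_{i+1}-1,d_i\}=d_i$. Hence the discretized sequence coincides with $D$, so $A=D$ as sequences.

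\emph{Step 2: the $D$–$A$ pairing is forced.} A solution $M$ induces a bijection $\sigma$ such that $d_{\sigma(j)}$ is matched with $a_j=d_j$. The constraint $d\ge a$ becomes $d_{\sigma(j)}\ge d_j$, i.e.\ $\sigma(j)\ge j$ for every $j$. A permutation of $[n]$ with $\sigma(j)\ge j$ for all $j$ must be the identity. To see this, argue downward: $\sigma(n)=n$, then $\sigma(n-1)\in\{n-1,n\}\setminus\{n\}$, and so on. Therefore every feasible $M$ pairs each $d_i$ with $a_i=d_i$, and the triplets have sums $s_i=2d_i$, which are already sorted increasingly.

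\emph{Step 3: matching sums to targets.} It remains to decide whether there is a bijection between $\{s_i\}$ and $T$ with $s\ge t$ for every matched pair. This holds if and only if $s_i\ge t_i$ for all $i$, with both sequences sorted in non-decreasing order. Sufficiency is immediate from the sorted pairing. For necessity we use a standard exchange argument. If $s_i<t_i$ for some $i$, then the $n-i+1$ targets $t_i,\dots,t_n$ must all be matched to sums at least $t_i>s_i$. Only $s_{i+1},\dots,s_n$ qualify, and there are just $n-i$ of them, a contradiction by pigeonhole.

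All inputs are assumed sorted, so Steps 1–3 consist of a single linear scan comparing $2d_i$ with $t_i$. This gives the $\bigO(n)$ algorithm. The consequence for \twoV\ follows from Corollary~\ref{cor:old_2V_to_PM_reduction}, because $T=[2n]\setminus A$ can be produced in sorted order in $\bigO(n)$ time by merging, using $d_i\le 2n$ (Remark~\ref{remark:2V_deadline_bound}).

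The only step with real content is Step 2, the observation that the condition $d\ge a$ pins the pairing down to the identity. I expect it to go through cleanly once Step 1 is established.
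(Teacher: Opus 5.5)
Your proposal is correct and follows the same route as the paper. The paper sketches the identical argument: for a simple set the discretized sequence is $D$ itself, the constraint $d \geq a$ forces every $d$ to be paired with $a = d$, and it remains to check these forced pairs against $T$ in $\bigO(n)$ time. Your write-up fills in the details the paper leaves implicit, namely the induction for $A = D$, the identity-permutation argument, and the pigeonhole proof that the sorted comparison of $2d_i$ with $t_i$ is both necessary and sufficient.
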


The proof of Lemma~\ref{lem:old_distinct} relies on the fact that the discretized sequence of a simple set is (trivially) equal to itself. This implies that all $d\in D$ have to be matched with $a\in A$ s.t. $d=a$ for such a \PM\ instance, due to the restriction $d\geq a$ for all triplets $(d,a,t)$. It remains to check if these mandatory $(d,a)$ pairs satisfy all targets $t\in T$, which can be done in time~$\bigO(n)$.


\begin{lemma}[\cite{kVisits}]\label{lem:old_two_numbers}
    \PM\ admits a linear-time algorithm if $D$ contains at most two distinct numbers.\footnote{In~\cite{kVisits}, this result is directly stated for \twoV, but it also holds for \PM. The respective proof in the full version~\cite{kvisits_arxiv} essentially shows a greedy linear-time algorithm for \PM\ with two distinct numbers and then transfers it to \twoV.} Consequently, \twoV\ also admits a linear-time algorithm if $D$ contains at most two distinct numbers (by Corollary~\ref{cor:old_2V_to_PM_reduction}).
\end{lemma}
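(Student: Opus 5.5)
We prove the statement for \PM\ directly and then transfer it to \twoV\ through Corollary~\ref{cor:old_2V_to_PM_reduction}. Let $D$ contain only the values $x<y$, with multiplicities $p$ and $q$, where $p+q=n$. The discretized sequence $A$ is then explicit. The last $q$ entries are $y-q+1,\dots,y$. The first $p$ entries are $\min\{x,\,y-q\}-p+1,\dots,\min\{x,\,y-q\}$. So $A$ splits into at most two blocks of consecutive integers. Positions in the upper block may only be matched with copies of $y$, because $d\geq a$ fails for $d=x$ whenever $a>x$. Positions $a\le x$ may be matched with either value.

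First we fix the simple structural facts. Each triplet $(d,a,t)$ covers targets $t\le d+a$. By a standard exchange argument, we may assume that larger sums $d+a$ are assigned to larger targets. Sort $T$ increasingly and sort the $n$ pair-sums $d+a$ increasingly. Then a matching of pairs exists if and only if the $i$-th smallest sum is at least the $i$-th smallest target, for every $i$. This check takes linear time, since $T=[2n]\setminus A$ is obtained in sorted order and the sums can also be produced in sorted order.

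The remaining freedom is how to pair the multiset $D$ with the positions in $A$, subject to $d\ge a$. The plan is to show that the family of feasible pairings is parameterized by one integer $r$. Here $r$ is the number of copies of $y$ assigned to lower-block positions; equivalently, $r$ controls which low positions the copies of $x$ occupy. Another exchange argument should settle the placement. Among the positions a given value can take, copies of $x$ should take the smallest lower positions. Swapping $x$ and $y$ between two positions $a<a'$ does not change the multiset $\{x+a,\,y+a'\}$ versus $\{x+a',\,y+a\}$ in any simple way, however. So instead we argue directly: give $y$ the larger positions and $x$ the smaller ones. This makes the sum multiset majorize the alternative from above at its top, at the cost of its bottom.

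The main obstacle is exactly this tension. Pairing $y$ with large $a$ produces a very large top sum but lowers the small sums, and neither choice dominates coordinatewise. I would handle it by scanning targets from largest to smallest greedily. At each step, choose the pair that satisfies the current largest unsatisfied target while minimizing the wasted slack, using a two-pointer procedure over the two value classes and the two position blocks. The greedy's correctness follows from an exchange lemma. If an optimal solution differs at the first deviation, swapping the two positions between the values keeps every target covered. The swap is valid because only two values exist, so the swapped pairs' sums are bounded by the ones they replace within the relevant range. With $A$, $T$ sorted, this runs in $\bigO(n)$ time, and Corollary~\ref{cor:old_2V_to_PM_reduction} gives the claim for \twoV.
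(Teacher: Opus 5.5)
\textbf{There is a genuine gap in the single-cluster case.} Your preliminary steps are sound. For a fixed pairing of $D$ with $A$, feasibility is the sorted-domination check. In the two-cluster case $x<y-q$ the pairing is forced, since all $q$ positions $y-q+1,\dots,y$ exceed $x$ and must receive the $q$ copies of $y$.

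The trouble is the case $x\ge y-q$, where $A=[y-n+1,\,y]$. Here your structural claims do not hold:
\begin{itemize}
\item Upper-block positions in $[y-q+1,\,x]$ can take $x$.
\item The pairings are not parameterized by one integer. The number of copies of $y$ on positions $\le x$ is always exactly $q-(y-x)$. What must be chosen is \emph{which} positions of $[y-n+1,\,x]$ receive them, and, as you note yourself, no choice dominates.
\end{itemize}
The greedy you fall back on is incorrect. It serves the largest remaining target by the admissible pair of minimum slack. Your asserted exchange fails because a swap can put a copy of $x$ on a position larger than $x$. Take $D=\{5,5,7,7,7\}$, so $A=\langle 3,4,5,6,7\rangle$, and $T=\{1,2,3,11,12\}$. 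No pair using $5$ reaches $11$ or $12$. The greedy therefore serves $12$ by $(7,5)$ and $11$ by $(7,4)$, both with zero slack. Now positions $6$ and $7$ both require a copy of $7$, but only one remains, so the greedy rejects. Yet $(7,7,12),(7,6,11),(7,5,3),(5,4,2),(5,3,1)$ is a solution.

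\textbf{The missing idea is reservation.} Copies of $y$ are a scarce resource and must first be reserved for the $y-x$ positions in $(x,\,y]$. Their sums $y+a\ge x+y+1$ are the $y-x$ largest sums under every pairing. So they are compared with the $y-x$ largest targets regardless of the rest. On the remaining positions $[y-n+1,\,x]$ both values are admissible, and $y$ works wherever $x$ does. The question becomes: is the minimum number of these positions whose sum must be raised from $x+a$ to $y+a$ at most $q-(y-x)$?

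That minimum is computed by a greedy that does admit a valid exchange argument. Process the remaining targets in decreasing order.
\begin{itemize}
\item If some available position has $x+a\ge t$, use it unraised. It then covers every smaller target too, so which one you pick is irrelevant.
\item Otherwise raise the smallest available $a$ with $y+a\ge t$, rejecting if there is none.
\end{itemize}
In both cases, exchanging partners with an optimal matching never increases the number of raised positions. Since the positions are consecutive and $T$ is sorted, the procedure runs in linear time with pointers.

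Without this reservation step, or an equivalent argument, your proposal does not establish a correct linear-time rule. The paper itself only cites a greedy from the full version of the earlier work, so a greedy is the right kind of approach. But the rule and its justification have to account for the $d\ge a$ constraint as above.
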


\subsubsection{The significance of clusters}\label{subsubsec:clusters}

The following definition from~\cite{kVisits} will be useful throughout the paper.

\begin{definition}[Cluster]\label{def:cluster}
    Let $A=\langle a_1,\ldots,a_n \rangle$ be the discretized sequence of $D=\{d_1,\ldots,d_n\}$. We call a maximal subsequence of consecutive numbers in $A$ a \emph{cluster}. If $C=\langle a_i,\ldots, a_j \rangle$ is a cluster, we say that the tasks $i,\ldots,j$ and the deadlines $d_i,\ldots,d_j$ \emph{correspond} to $C$.
\end{definition}

\begin{figure}[ht]
    \centering
    \includegraphics[width=0.7\linewidth]{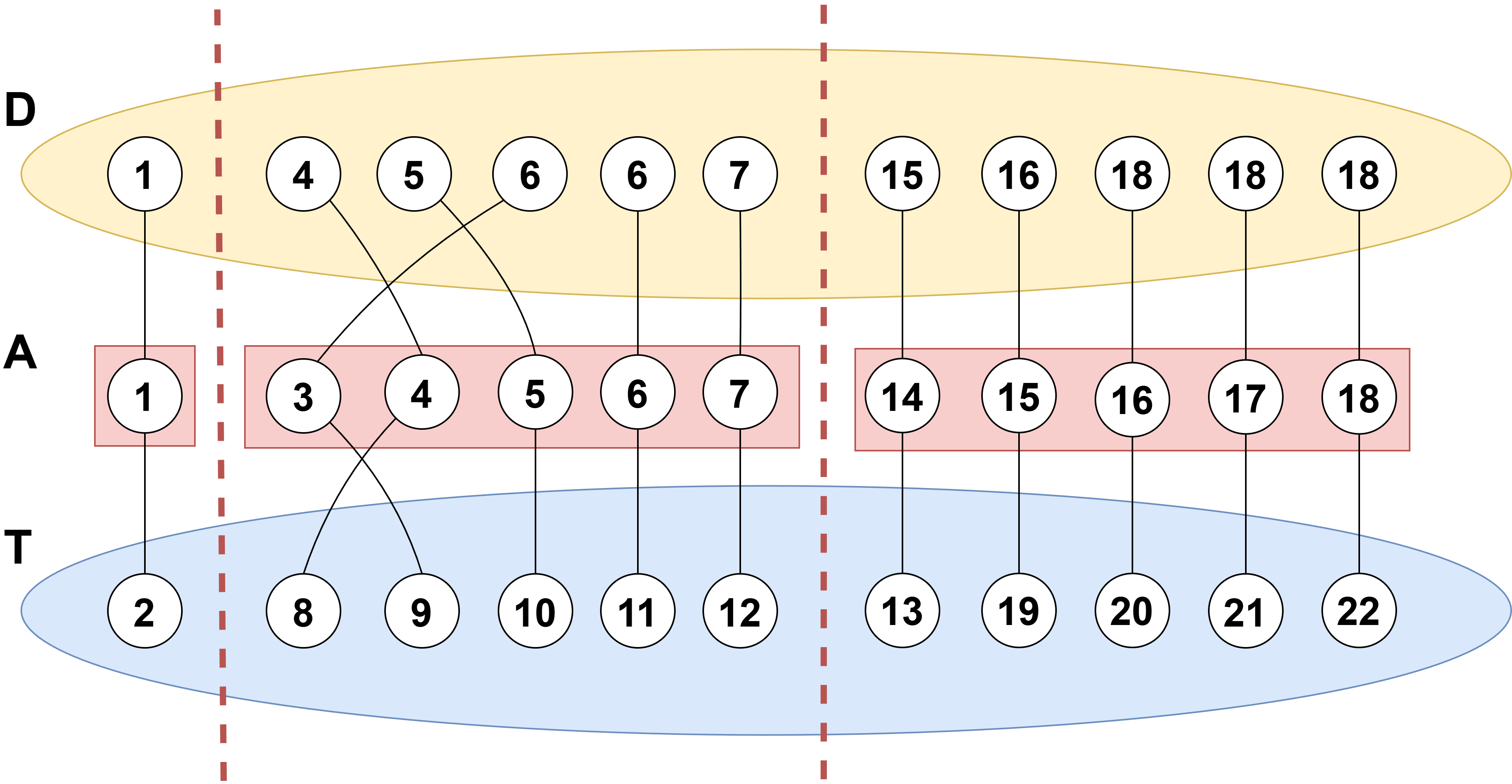}
    \caption{An example of a \PM\ instance, corresponding to the \twoV\ instance $D=\{1,4,5,6,6,7,15,16,18,18,18\}$, and its solution. $A$ is the discretized sequence of $D$ and $T=[2n]\setminus A$ (see Lemma~\ref{lem:old_primary_discon_prop}). Observe that none of the black lines dictating the solution cross the dotted lines (the borders of clusters), due to the $d\geq a$ restriction of \PM.}
    \label{fig:clusters}
\end{figure}

The next observation follows from Def.~\ref{def:disc} and is one of the reasons for the usefulness of clusters.

\begin{observation}\label{obs:cluster}
    Let $A=\langle a_1,\ldots,a_n \rangle$ be the discretized sequence of $D=\{d_1,\ldots,d_n\}$ and let $C=\langle a_i,\ldots, a_j \rangle$ be some cluster of $A$. Then, $C$ is the discretized sequence of $\{ d_i,\ldots,d_j \}$.
\end{observation}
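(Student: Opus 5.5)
The plan is to compare the two backward recursions directly. Write $A'=\langle a'_i,\ldots,a'_j\rangle$ for the discretized sequence of the non-decreasing sequence $\langle d_i,\ldots,d_j\rangle$; this is a contiguous piece of the sorted $D$, so Definition~\ref{def:disc} applies. By that definition, $a'_j=d_j$ and $a'_l=\min\{a'_{l+1}-1,\,d_l\}$ for $i\le l<j$. I will show by backward induction on $l$, from $j$ down to $i$, that $a'_l=a_l$.

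\textbf{Base case $l=j$.} This is the only step needing thought: I must show that the last element of a cluster satisfies $a_j=d_j$.
\begin{itemize}
\item If $j=n$, this holds by definition.
\item If $j<n$, then $a_j=\min\{a_{j+1}-1,\,d_j\}\le a_{j+1}-1$. Maximality of the cluster says $a_{j+1}\neq a_j+1$, since otherwise $a_{j+1}$ could be appended to $C$. Together with $a_j\le a_{j+1}-1$, this gives $a_j<a_{j+1}-1$. So the minimum is not attained by $a_{j+1}-1$, which forces $a_j=d_j=a'_j$.
\end{itemize}

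\textbf{Inductive step.} For $i\le l<j$, both $a_l$ and $a'_l$ are defined by the same formula $\min\{\cdot_{l+1}-1,\,d_l\}$. Since $a_{l+1}=a'_{l+1}$ by the induction hypothesis, it follows immediately that $a_l=a'_l$.

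Hence $C=A'$, as claimed. Note that the cluster property (consecutive values) is used only at the right endpoint; the left endpoint plays no role. The main, and only mild, obstacle is the base case, where maximality of the cluster is needed to rule out $a_j=a_{j+1}-1<d_j$.
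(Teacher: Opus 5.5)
Your proof is correct and follows the same route as the paper. The paper says only that the observation ``follows from Def.~\ref{def:disc}''. The one nontrivial step in your argument, that maximality of the cluster forces $a_j=d_j$ at its right endpoint, is exactly what the paper spells out in the appendix proof of Lemma~\ref{lem:PM_self_reduction}; after that, the two backward recursions coincide index by index, as you show.
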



\begin{lemma}[Self-reduction]\label{lem:PM_self_reduction}
    \PM\ reduces in time $\bigO(n)$ to solving a \PM\ instance for the numbers corresponding to each cluster.
\end{lemma}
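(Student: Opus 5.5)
The plan is to show that in any solution $M$ of a \PM\ instance, each triplet $(d,a,t)$ uses an $a$ and a $t$ that are "associated" with the cluster of $d$. Then the instance splits into independent sub-instances, one per cluster. The first step is to fix a partition of $T=[2n]\setminus A$, or in general of the target set, according to the clusters. Let the clusters be $C_1,\ldots,C_r$ in increasing order, with $C_\ell=\langle a_{i_\ell},\ldots,a_{j_\ell}\rangle$. The natural assignment sends to $C_\ell$ the targets lying strictly between $\max C_{\ell-1}$ and $\min C_{\ell+1}$ that belong to no earlier cluster's share. In practice this will be a greedy, counting-based assignment: the smallest $|C_1|$ targets go to $C_1$, the next $|C_2|$ to $C_2$, and so on. This is computable in $\bigO(n)$ since all sets are sorted.

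The second step pins down the deadlines of each cluster. Every $d$ with $d\geq a$ is matched to some $a$, and by Definition~\ref{def:disc} we have $d_i\ge a_i$. A cluster boundary occurs exactly where $a_{i+1}-1>a_i$, i.e.\ where $a_i=d_i<a_{i+1}-1$. So deadlines $d_k$ with $k\le j_\ell$ satisfy $d_k\le d_{j_\ell}=a_{j_\ell}<a_{j_\ell+1}$, using that $D$ is sorted. Hence a deadline of cluster $C_\ell$ or earlier can only take an $a$ from clusters $C_1,\ldots,C_\ell$. Counting from the first cluster upward, a Hall-type induction then forces the $|C_1|$ deadlines of $C_1$ onto exactly the $a$'s of $C_1$, then those of $C_2$ onto $C_2$, and so on. This is the picture in Figure~\ref{fig:clusters}: the $(d,a)$ pairs never cross cluster borders.

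The third step, which I expect to be the main obstacle, is the targets. Pairs from cluster $C_\ell$ have sums $d+a\le 2a_{j_\ell}$, so they can only satisfy targets up to that bound. Any target exceeding $2\max C_\ell$ must therefore be served by a later cluster. To make the sub-instances independent, I will use an exchange argument: given any solution, I reassign targets so that the targets served by $C_\ell$ are exactly the ones in its greedy share. The tool is that if two pairs $p,q$ serve targets $t<t'$ with $s(p)\ge s(q)$ swapped appropriately, sorting targets by sums within the allowed windows preserves feasibility. Concretely, I will sort all pairs by sum and all targets, and argue feasibility is equivalent to the $k$-th smallest target being at most the $k$-th smallest sum. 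The per-cluster targets must then be whatever the global solution assigns. So I will instead define the sub-instance targets canonically by greedy: process clusters in order, giving each cluster the smallest $|C_\ell|$ unused targets. Since every sum from later clusters is at least every $a$ of earlier clusters, an exchange shows some optimal assignment has this form. If a sub-instance requires $T$ to be a simple set, this holds automatically because $T$ is.

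Finally, I will verify the equivalence. A global solution yields, after the exchanges, solutions to each per-cluster \PM\ instance $(\{d_{i_\ell},\ldots,d_{j_\ell}\},T_\ell)$, whose discretized sequence is $C_\ell$ by Observation~\ref{obs:cluster}. Conversely, the union of the per-cluster solutions is a global solution. The total running time is $\bigO(n)$.
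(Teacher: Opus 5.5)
Your proposal is correct and follows essentially the paper's route. The bound $d_k\le d_{j_\ell}=a_{j_\ell}<a_{j_\ell+1}$ confines each cluster's deadlines to its own positions by counting, an exchange argument then gives each cluster the smallest $|C_\ell|$ unused targets, and Observation~\ref{obs:cluster} identifies each sub-instance's discretized sequence.

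One precision: your exchange needs that every sum $d''+a''$ of a later cluster strictly exceeds every sum $d'+a'$ of an earlier one, not merely every $a$ of an earlier cluster. This holds because $d',a'\le a_{j_\ell}<a_{j_\ell+1}\le a''\le d''$, and it is exactly the chain $t'<t''\le d'+a'<d''+a''$ used in the paper's swap.
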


Lemma~\ref{lem:PM_self_reduction} is heavily implied in~\cite{kVisits}, but not directly proven in the form that we desire for this work. For the sake of completeness, we include its formal proof in Appendix~\ref{appendix:PM_self_reduction}. See Figure~\ref{fig:clusters} for some intuition for the proof.

\subsubsection{Pinwheel variants and the concept of density}\label{subsubsec:density}

A quantity often used in \PWS\ and its variants (see e.g.,~\cite{Bar-Noy_0.6, Holte_Pinwheel, Kawamura_5/6_stoc, Kawamura_pinwheel_cover, Patrolling_SODA_2026}) is the \emph{density} of the input, defined as follows.

\begin{definition}[Density]\label{def:density}
    For a (multi)set $D=\{d_1,\ldots,d_n\}$, its \emph{density} is defined as $\mathrm{Dens}(D)=\sum_{i=1}^n 1/d_i$.
\end{definition}

It is straightforward to prove that every \PWS\ instance $D$ with $\mathrm{Dens}(D) > 1$ admits no schedule (see~\cite{Holte_Pinwheel} for a formal proof). On the other hand, it was recently proven that all \PWS\ instances $D$ with $\mathrm{Dens}(D) \leq 5/6$ admit a schedule~\cite{Kawamura_5/6_stoc}. Note that this density threshold is tight, due to the instance $I=\{2,3,1/\varepsilon\}$  with $\mathrm{Dens}(I)=5/6 + \varepsilon$, which admits no infinite schedule for every $\varepsilon > 0$~\cite{Chan_conjecture,Gasieniec_towards_5/6}.

The aforementioned result by~\cite{Kawamura_5/6_stoc} implies that all \kV\ instances $D$ with $\mathrm{Dens}(D) \leq 5/6$ admit a schedule, for all $k\geq 1$: it suffices to take an infinite schedule and remove all visits except the $k$ first visits of each task, preserving feasibility (cf.~\cite{kVisits}). 

\subsubsection{\EM\ and variations}\label{subsubsec:exact_matching}

We give the definitions of \EM\ and its weighted generalization, which we will use in Section~\ref{sec:number_of_numbers}.
\EM\ was introduced by Papadimitriou and Yannakakis~\cite{Papadimitriou_EM} in 1982, where they conjectured that it is NP-complete. However, Mulmuley, Vazirani and Vazirani~\cite{Vazirani_EM} showed that the problem admits a randomized polynomial-time algorithm, placing it in the complexity class~$\mathrm{RP}$. To this day, \EM\ remains notorious for being one of the few natural problems that are in $\mathrm{RP}$, but not known to be in $\mathrm{P}$.

\begin{definition}[\EM]\label{def:EM}
    Given a graph $G=(V,E)$, a subset  $E' \subseteq E$ of \emph{red} edges and a positive integer $k$, the \EM\ problem asks whether there exists a perfect matching in $G$ involving \emph{exactly} $k$ red edges.
\end{definition}

\begin{definition}[\EWPMshort]\label{def:EWPM}
    Given a weighted graph $G=(V,E,w)$ and an integer $W$, the \EWPM\ $(\EWPMshort)$ problem asks whether there exists a perfect matching~$M$ in $G$ with $\sum_{e \in M} w(e) =W$.
\end{definition}

\begin{theorem}[Gurjar et al. 2016~\cite{EWPM_to_EM_reduction}, El Maalouly 2023~\cite{Maalouly_stacs_2022}]\label{thrm:EWPM_to_EM}
    There is a polynomial-time reduction from \EWPMshort\ with polynomially-bounded weights to \EM. If the initial graph is bipartite, the resulting graph is also bipartite. If the initial \EWPMshort\ instance has $n$ vertices, $m$~edges and weights bounded by $W$, then the resulting \EM\ instance has $\bigO(n+mW)$ vertices and $\bigO(mW)$ edges. The reduction runs in time $\bigO(mW)$.
\end{theorem}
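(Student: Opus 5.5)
The plan is a direct gadget reduction: replace every weighted edge by an odd-length path whose red edges encode its weight. Let $(G=(V,E,w),W)$ be an \EWPMshort\ instance with $|V|=n$, $|E|=m$ and polynomially bounded integer weights.

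First I would normalize the weights to be non-negative. Every perfect matching has exactly $n/2$ edges. So adding a constant $c$ to every edge weight changes the weight of every perfect matching by exactly $cn/2$. Taking $c=-\min_e w(e)$ (when this is positive) and replacing $W$ by $W+cn/2$ therefore gives an equivalent instance with weights in $[0,W_{\max}]$, still polynomially bounded. If $n$ is odd, or if $W$ falls outside the achievable range, we can output a trivial no-instance.

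Next, for each edge $e=\{u,v\}$ with weight $w_e$, I would subdivide $e$ into a path $u=x_0,x_1,\ldots,x_{2w_e},x_{2w_e+1}=v$ of odd length $2w_e+1$, using $2w_e$ fresh internal vertices. The red edges on this path are the odd-indexed edges $\{x_{2j},x_{2j+1}\}$ for $j=1,\ldots,w_e$; the first edge $\{x_0,x_1\}$ and all even-indexed edges stay non-red. The \EM\ target is $k=W$. A weight-$0$ edge stays a single non-red edge.

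The key step is the bijection between perfect matchings, which rests on a forcing argument along each path. Every internal vertex must be matched inside its own path. If $x_1$ is matched to $u$, then $x_2x_3,\ldots$ are forced in turn, and $x_{2w_e}$ is matched to $v$. In that case the path uses all $w_e+1$ odd-indexed edges, exactly $w_e$ of which are red, and it covers both $u$ and $v$. If instead $x_1$ is matched to $x_2$, the even-indexed edges are forced, which gives $0$ red edges and leaves $u$ and $v$ uncovered by the path. So each perfect matching of the new graph corresponds uniquely to the perfect matching $M$ of $G$ formed by the edges whose paths are in the first state. Its number of red edges is $\sum_{e\in M}w_e$, so the two instances are equivalent.

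Finally I would check the bookkeeping. Subdividing an edge into an odd-length path preserves bipartiteness, since the two endpoints keep their parity. The construction adds $\bigO(mW_{\max})$ vertices and edges, so there are $\bigO(n+mW_{\max})$ vertices and $\bigO(mW_{\max})$ edges, and it clearly runs in time $\bigO(mW_{\max})$. The only step needing care is the forcing argument: I must rule out "mixed" matchings on a path, which follows from propagating along the path from $x_1$. Everything else is routine.
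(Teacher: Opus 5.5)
Your proof is correct: subdividing each edge of weight $w_e$ into a path of length $2w_e+1$ with $w_e$ red edges, combined with the degree-two forcing argument and the normalization shift by $cn/2$, is the standard reduction of Gurjar et al.\ and El Maalouly that the paper cites. The paper does not reprove the theorem, but its bipartiteness observation relies on exactly the feature you verify, namely that the reduction only replaces edges by odd-length paths.
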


Note that \EWPMshort\ is known to be NP-complete when its weights are exponential~\cite{EWPM_to_EM_reduction}. However, when its weights are polynomially bounded, it is in $\mathrm{RP}$ by Theorem~\ref{thrm:EWPM_to_EM}.



\section{Hardness of \twoV\ relative to maximum multiplicity }\label{sec:max_mult_hardness}

In this section we prove that \twoV\ remains strongly NP-complete even when the maximum multiplicity is $2$, resolving an open question from~\cite{kVisits}. Our chain of reductions starts from \NMTS\ (\NMTSshort). See Figure~\ref{fig:reductions_multiplicity} for an overview.

The key idea for our result in this section is that padding an input sequence with two copies of certain larger numbers forces the discretized sequence of the input to consist of consecutive numbers starting from $1$, as demonstrated in the following example.

\begin{example}\label{example:discseq}
    Consider input $\langle 2,4,5,8,8,10 \rangle$. Its discretized sequence is $\langle 2,4,5,7,8,10 \rangle$. Let us pad the input with $2$ copies of some larger numbers: $\langle 2,4,5,8,8,10,11,11,12,12,13,13,14,14 \rangle$. Observe that the discretized sequence of the modified input is $\langle 1,2,3,4,5,6,7,8,9,10,11,12,13,14 \rangle=[14]$. Additionally, the maximum multiplicity of the modified input is $2$ (granted that the maximum multiplicity of the initial input was not already larger than $2$).
\end{example}

This greatly simplifies the \PM\ problem (and thus also \twoV): the second set used for the matching now becomes static and equal to $[n]$, rendering reductions from classical numerical matching problems possible, while only increasing the maximum multiplicity of the input by at most $1$. Before we utilize this idea in Section~\ref{subsec:IN3DM_to_PM}, we prove an NP-hardness result for an appropriate variant of NMTS, serving as the first building block in our reduction chain.

\subsection{Restricting \NMTS}

We build upon the work of Hulett et al.~\cite{NMTS_distinct} (Theorem~\ref{thrm:nmts_distinct_hardness}) in order to prove that \NMTSshort\ remains hard even when one set is equal to $[n]$ and each of the other two sets consists of distinct numbers. To this end, we define the following restricted version of the problem, which we will use in our chain of reductions leading to \twoV.

\begin{definition}[\SRNMshort]\label{def:SRNM}
    Given two (simple) sets of positive integers $A=\{a_1,\ldots,a_n\}$ and $T=\{t_1,\ldots,t_n\}$, the \SRNM\ $(\SRNMshort)$ problem asks whether there is a subset $M$ of $A \times [n] \times T$ s.t. every $a_i \in A$, $b \in [n]$, $t_i \in T$ occurs exactly once in~$M$ and for every triplet $(a,b,t)\in M$ it holds that $a+b = t$.
\end{definition}

We call this problem \emph{semi-restricted} in order to distinguish from the \RNTDM\ (\RNTDMshort) problem defined by Yu et al.~\cite{Flow_shop_Yu}, in which two out of three sets are fixed (instead of just one). We remark that \RNTDMshort\ is not known to be NP-hard for distinct inputs; on the contrary, the reduction from \tP\ by Yu et al.~\cite{Flow_shop_Yu} appears to rely heavily on padding the input with $\mathrm{poly}(n)$ duplicates of $0$ and some large number. Modifying this method to forgo duplicates does not seem straightforward, which is the reason why we do not use \RNTDMshort\ for our reductions in this paper.

\begin{observation}\label{obs:nmts_target_ineq}
    For every non-trivial instance of \NMTSshort, it holds that $\max(T)>\max(A)$ and $\max(T)>\max(B)$; otherwise, we have a trivial no-instance. Hence, we may assume that these inequalities hold for the following reduction from \NMTSshort. 
\end{observation}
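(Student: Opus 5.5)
The plan is a direct contrapositive argument: I will show that any yes-instance of \NMTSshort\ automatically satisfies both strict inequalities. Consequently, an instance violating either one is a no-instance, and it can be recognized and discarded in linear time before the reduction is applied.

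The first step is to suppose that $M$ is a valid solution of an \NMTSshort\ instance $(A,B,T)$ as in Definition~\ref{def:NMTS}, and to let $a^*=\max(A)$. Since every element of $A$ occurs exactly once in $M$, there is a triplet $(a^*,b,t)\in M$ with $a^*+b=t$. Because $B$ consists of \emph{positive} integers, $b\geq 1$, so $t\geq a^*+1>a^*$. Hence $\max(T)\geq t>\max(A)$. The symmetric argument applied to the triplet containing $\max(B)$, using $a\geq 1$, gives $\max(T)>\max(B)$.

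The second step handles the reduction. If $\max(T)\leq\max(A)$ or $\max(T)\leq\max(B)$, the first step shows that the instance has no solution. Since the input sets are given sorted, this is detected in $\bigO(1)$ time by comparing the last elements, or in $\bigO(n)$ time otherwise. In that case the reduction simply outputs a fixed constant-size no-instance of the target problem. Therefore, whenever we reduce from \NMTSshort, we may assume without loss of generality that both inequalities hold. This also preserves the distinctness assumption of Theorem~\ref{thrm:nmts_distinct_hardness}, because we only filter instances and do not modify them.

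There is no real obstacle here. The only point requiring care is that positivity of the integers is what makes the inequalities strict; with $0$ allowed, only the non-strict version would hold.
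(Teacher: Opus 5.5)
Your proof is correct and follows the same reasoning the paper leaves implicit (it gives no explicit proof): in the triplet containing $\max(A)$, resp.\ $\max(B)$, the partner is a positive integer, so $\max(T)$ is strictly larger, and any instance violating either inequality is therefore a no-instance that can be filtered out before the reduction. Your remarks about mapping such instances to a fixed constant-size no-instance and preserving distinctness are sound, and they spell out the ``we may assume'' step that the paper treats as evident.
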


\begin{theorem}\label{thrm:srnmts_hardness}
\SRNMshort\ is strongly NP-complete.
\end{theorem}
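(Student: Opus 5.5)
The plan is a pseudo-polynomial reduction from \NMTSshort\ with distinct inputs (Theorem~\ref{thrm:nmts_distinct_hardness}). The reduction embeds the free set $B$ into the fixed set $[N]$ with $N=\max(B)$. It then adds dummy elements whose only role is to absorb the positions of $[N]\setminus B$. Since \NMTSshort\ is strongly NP-hard, we may assume all numbers are bounded by a polynomial in $n$. Hence $N$ is polynomial and the constructed instance has polynomial size with polynomially bounded numbers, which gives strong NP-hardness; membership in NP is trivial.

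\textbf{Construction.} Let $(A,B,T)$ be a distinct \NMTSshort\ instance and let $S=\max(T)$. By Observation~\ref{obs:nmts_target_ineq}, $S$ exceeds every element of $A\cup B$. Set $N=\max(B)$; since $B$ is a simple set, $B\subseteq[N]$ and $|[N]\setminus B|=N-n$. Pick a large offset $X>2S+N$. For each $c\in[N]\setminus B$ create a dummy element $a_c=X+2Nc$ and a dummy target $t_c=a_c+c$. The \SRNMshort\ instance is then
\[A'=A\cup\{a_c\}_{c\in[N]\setminus B},\qquad T'=T\cup\{t_c\}_{c\in[N]\setminus B},\qquad \text{fixed set } [N].\]
Both $A'$ and $T'$ are simple sets of size $N$. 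The original elements are distinct and below $X$, the dummies are distinct and above $X$, and the dummy targets $t_c$ are distinct because the $a_c$ are spaced by $2N>N$.

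\textbf{Correctness.} The forward direction is immediate. Keep the original triples, using the elements of $B\subseteq[N]$ as the middle coordinates, and add the triples $(a_c,c,t_c)$ for every $c\in[N]\setminus B$.

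For the converse, fix a solution $M'$ and argue by magnitudes.
\begin{enumerate}
\item A real $a\in A$ satisfies $a+b\le S+N<X$ for every $b\in[N]$, so it can only reach real targets.
\item A dummy $a_c$ satisfies $a_c+b>X>S$, so it can only reach dummy targets.
\item If $a_c+c'=t_{c''}$, then $2N(c-c'')=c''-c'$. The right-hand side has absolute value less than $N$, which forces $c=c''$ and $c'=c$.
\end{enumerate}
Thus every dummy $a_c$ is matched exactly with $c$ and $t_c$, so the dummies consume precisely $[N]\setminus B$. The $n$ real elements are therefore matched with real targets using exactly the middle values in $B$, each once. This is a solution of the original \NMTSshort\ instance.

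\textbf{Main obstacle.} The delicate point is the converse direction. We must ensure that real triples cannot use middle values outside $B$, and that dummies cannot mix among themselves. I handle both with the spacing argument in step 3, using gaps of $2N$ between dummies, together with the counting argument that the dummies use up all of $[N]\setminus B$. This counting argument is what forces the real triples onto $B$, rather than any direct restriction on them.
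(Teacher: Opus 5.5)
Your proof is correct and follows the paper's approach. Like the paper, you pad $A$ and $T$ with one dummy pair per element of $[\max(B)]\setminus B$, so that the fixed middle set becomes $[\max(B)]$, and then show that the dummies are forced onto their intended missing values. The only difference is in the verification. Your gadget values $a_c=X+2Nc$, $t_c=a_c+c$ let you force the dummy triples directly, by separating real from dummy elements by magnitude and then using the spacing of the dummies. The paper instead uses $3i\max(T)-s_i$ and $3i\max(T)$ and forces the dummy triples one at a time by induction from the largest target.
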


\begin{proof}
    We will reduce \NMTSshort\ to \SRNMshort. By Theorem~\ref{thrm:nmts_distinct_hardness}, we may assume a \NMTSshort\ instance $I=(A,B,T)$ where $A=\{a_1,\ldots,a_n\}, B=\{b_1,\ldots,b_n\}, T=\{t_1,\ldots,t_n\}$ are all simple sets (instead of multisets). Additionally, since \NMTSshort\ is \emph{strongly} NP-hard, we may assume that the values of all input elements are bounded by $\mathrm{poly}(n)$. We will modify the input in order to force the second set to consist of consecutive numbers.

    We define $S=[\max(B)] \setminus B$, i.e., the set consisting of all the positive integers up to $\max(B)$ that are not in $B$. Let $S=\{s_1,\ldots,s_m\}$. For each $i \in [m]$, we add the element $3i\max(T)-s_i$ to~$A$ and the element $3i\max(T)$ to $T$.\footnote{The constant $3$ used here is chosen with the purpose of establishing a safe margin, making the separation with the added elements more intuitive. The arguments presented also work by replacing it with $2$.} Let $A'$ and $T'$ be the respective sets obtained after adding these elements and define the \SRNMshort\ instance $I'=(A',T')$, with $|A'|=|T'|=n+m$. Recall that the second of the three sets in \SRNMshort\ is not part of the input (see Def.~\ref{def:SRNM}); for $I'$, it is the set $[\max(B)]=B\cup S$, by definition. Note that $A'$ and $T'$ are simple sets by Observation~\ref{obs:nmts_target_ineq}.

    We now prove the correctness of the reduction. It is clear that if $I$ has a solution, then $I'$ has a solution: one can simply extend a solution of~$I$ with the $m$ triplets $(3i\max(T) - s_i,\ s_i,\ 3i\max(T))$, for $i\in [m]$, thus obtaining a solution for~$I'$.
    
    It remains to prove the converse, i.e., that if $I'$ has a solution, then $I$ also has a solution. We will prove by induction that picking the aforementioned $m$ triplets is mandatory in order to solve~$I'$. We use $i=m$ as the induction basis. Let $t^*=\max(T')=3m\max(T)$. Note that $t^*$ must be matched with some $a\in A'$ and some $b\in [\max(B)]$ s.t. $a=t^*-b$. Combined with Observation~\ref{obs:nmts_target_ineq}, this implies
    \begin{equation}\label{ineq_srnm}
        a\geq t^* - \max(B)> t^* -\max(T)=(3m-1)\max(T).
    \end{equation}
    The only element in $A'$ satisfying \eqref{ineq_srnm} is $a^*=\max(A')=3m\max(T)-s_m$, hence it is mandatory to include the triplet $(a^*,\ t^*-a^*,\ t^*)=(3m\max(T) - s_m,\ s_m,\ 3m\max(T))$ in any solution of $I'$.\footnote{Observe that the second largest element in $A'$ is $(3m-1)\max(T)-s_{m-1}$, which already violates~\eqref{ineq_srnm}. Additionally, note that $s_m < \max(T)$ (cf. Observation~\ref{obs:nmts_target_ineq}), implying that $a^*$ satisfies~\eqref{ineq_srnm}.}
    
    As induction hypothesis, suppose that for some $i\in [m]$ it holds that the triplets $(3j\max(T) - s_j,\ s_j,\ 3j\max(T))$, $\forall j \in [i+1,m]$, have to be included in any solution of $I'$. Removing these matched elements from the input allows us to use exactly the same argument as in the induction basis in order to prove that $(3i\max(T) - s_i,\ s_i,\ 3i\max(T))$ also has to be included in any solution of $I'$. This concludes the inductive step, thus proving that all of the elements in $A'\setminus A$, $S$ and $T'\setminus T$ have to be matched with each other in any solution of $I'$. Hence, $I'$ has a solution only if there is a partial solution to the rest of the input, i.e., only if $I=(A,B,T)$ has a solution. This concludes the correctness of the reduction.

    The reduction described is polynomial-time, since we added $\bigO(\mathrm{poly}(n))$ elements to the input. It also preserves strong NP-hardness since all added elements have values bounded by $\mathrm{poly}(n)$.
\end{proof}

\subsection{From \SRNMshort\ to \INTDMshort}

We now use \SRNMshort\ to obtain a stronger version of the \INTDMshort\ hardness theorem of~\cite{kVisits}.

\begin{theorem}\label{thrm:IN3DM_distinct_hardness}
    \INTDMshort\ is strongly NP-complete even when $A$ and $T$ are simple sets.
\end{theorem}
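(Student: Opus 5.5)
We reduce \SRNMshort\ to \INTDMshort\ with essentially the identity map, using a standard sum-counting argument to turn the inequalities $a+b\geq t$ into equalities. By Theorem~\ref{thrm:srnmts_hardness}, \SRNMshort\ is strongly NP-complete. Its inputs $A$ and $T$ are already simple sets, and its fixed second set is $[n]$, exactly as in \INTDMshort. So the only difference between the two problems is equality versus inequality. Membership of \INTDMshort\ in NP is trivial.

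Given an \SRNMshort\ instance $(A,T)$, compute $\Delta=\sum_{a\in A}a+\tfrac{n(n+1)}{2}-\sum_{t\in T}t$.
\begin{itemize}
\item If $\Delta=0$, output the \INTDMshort\ instance $(A,T)$ unchanged.
\item If $\Delta\neq 0$, the \SRNMshort\ instance is a trivial no-instance: summing $a+b=t$ over all $n$ triplets would force $\Delta=0$. In this case output a fixed no-instance of \INTDMshort\ with simple sets, e.g. $A=\{1\}$, $T=\{3\}$. It is a no-instance because its only triplet is $(1,1,3)$ and $1+1<3$.
\end{itemize}
Both outputs have simple sets, the map takes linear time, and no value grows. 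Hence strong NP-hardness is preserved.

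For correctness in the case $\Delta=0$, the forward direction is immediate, since any triplet with $a+b=t$ also satisfies $a+b\geq t$. For the converse, take any \INTDMshort\ solution $M$. Each element of $A$, of $[n]$ and of $T$ appears exactly once in $M$, so
\[
\sum_{(a,b,t)\in M}(a+b-t)=\Delta=0 .
\]
Every summand is nonnegative, so every summand is zero. Thus $a+b=t$ holds for every triplet, and $M$ is an \SRNMshort\ solution.

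I do not expect a real obstacle here. The only points needing care are the handling of $\Delta\neq 0$ and checking that the output instance keeps $A$ and $T$ simple. When $\Delta<0$, $(A,T)$ is in fact already a no-instance of \INTDMshort\ by the same summation, so the fixed no-instance is only strictly needed when $\Delta>0$. There, inequality solutions might exist even though equality solutions cannot.
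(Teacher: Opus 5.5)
Your proposal is correct and follows the paper's proof essentially verbatim: it uses the identity map from \SRNMshort\ to \INTDMshort, and the summation $\sum_{(a,b,t)\in M}(a+b-t)=0$ forces every inequality to be tight. Your explicit handling of $\Delta\neq 0$ via a fixed simple-set no-instance is a more careful version of the paper's remark that such instances are trivial no-instances and may be assumed away.
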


\begin{proof}
    We will reduce \SRNMshort\ to \INTDMshort.
    We observe that for any non-trivial \SRNMshort\ instance $(A=\{a_1,\ldots,a_n\},T=\{t_1,\ldots,t_n\})$ it holds that
    \begin{equation}\label{eq_ineq}
        \sum_{i\in [n]}(a_i + i) = \sum_{i\in [n]}t_i.
    \end{equation}
    Otherwise, we would have a trivial no-instance. Now, assume a \SRNMshort\ instance $I=(A,T)$ satisfying~\eqref{eq_ineq} and consider the respective \INTDMshort\ instance $I'=(A,T)$. By Def.~\ref{def:SRNM},~\ref{def:IN3DM}, it is clear that any solution of $I$ is a solution of $I'$. For the converse, notice that if~\eqref{eq_ineq} holds, then $I'$ cannot have a solution that includes some triplet $(a,b,t)$ with $a+b>t$. Thus, it holds that a (perfect) matching $M\subseteq A \times [n] \times T$ s.t. $\forall (a,b,t)\in M: a+b\geq t$ also satisfies $\forall (a,b,t)\in M: a+b= t$. Hence, any solution of $I'$ is also a solution of $I$, which concludes the reduction. The theorem then follows from Theorem~\ref{thrm:srnmts_hardness}, combined with the fact that $A$ and $T$ are simple sets by Def.~\ref{def:SRNM}.
\end{proof}

\subsection{From \INTDMshort\ to \PM}\label{subsec:IN3DM_to_PM}

In this subsection we present a reduction from \INTDMshort\ to \PM\ that increases the maximum multiplicity of the first set by at most $1$, while preserving the maximum multiplicity of the target set $T$. Note that $T$ containing only distinct integers is crucial for the reduction from \PM\ to \twoV\ (cf.~\cite{kVisits}).
Additionally, recall that \PM\ can be solved in polynomial time when the input consists solely of simple sets (Lemma~\ref{lem:old_distinct}), while \INTDMshort\ is strongly NP-complete even in that case (Theorem~\ref{thrm:IN3DM_distinct_hardness}). Thus, concerning multiplicity preservation, this is the best possible reduction unless $\mathrm{P}=\mathrm{NP}$.

We first present an auxiliary lemma that bounds the values of elements in \INTDMshort.

\begin{lemma}\label{lem:IN3DM_value_bound}
    \INTDMshort\ is NP-complete, even when all the following properties hold.
    \begin{enumerate}
        \item $\min(A) \geq n$.
        \item There is some polynomial $P(n)$ such that $\max(A)\leq P(n)$ and $\max(T)\leq P(n)$.
        \item $A$ and $T$ are simple sets.
    \end{enumerate}
\end{lemma}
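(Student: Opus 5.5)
Starting from Theorem~\ref{thrm:IN3DM_distinct_hardness}, I take a hard \INTDMshort\ instance $(A,T)$ with $A$ and $T$ simple sets. Since the problem is \emph{strongly} NP-complete, all values can be assumed bounded by some polynomial $Q(n)$. This already gives property~2 and property~3; only property~1 needs work.

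To obtain property~1, I apply a uniform shift. Let $c=n$ and define $A'=\{a_i+c : i\in[n]\}$ and $T'=\{t_i+c : i\in[n]\}$. Every triplet $(a,b,t)$ satisfies $a+b\geq t$ if and only if $(a+c)+b\geq t+c$, and the fixed middle set $[n]$ is untouched. Hence the map $(a,b,t)\mapsto(a+c,b,t+c)$ is a bijection between solutions of $(A,T)$ and solutions of $(A',T')$. The shifted instance has:
\begin{itemize}
    \item $\min(A')\geq 1+n\geq n$, so property~1 holds;
    \item $A'$ and $T'$ still simple sets, since translation preserves distinctness, so property~3 holds;
    \item maxima at most $Q(n)+n$, so property~2 holds with $P(n)=Q(n)+n$.
\end{itemize}

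The only subtle point is the justification of the polynomial bound. Strong NP-completeness means the problem remains NP-hard when restricted to instances whose numbers are bounded by a polynomial in the input size. I need that polynomial expressed in $n$ specifically. The instances produced in the proof of Theorem~\ref{thrm:srnmts_hardness} have $n+m$ elements with $m\leq \max(B)$, and their values are at most $3m\max(T)$. Both quantities are polynomial in the original \NMTSshort\ size, which is at most the new $n+m$. The proof of Theorem~\ref{thrm:IN3DM_distinct_hardness} keeps the same sets and the same $n$. So the values are polynomial in the element count $n$, and the reduction is clearly polynomial time. NP membership is immediate, as noted in the preliminaries, which completes the argument.
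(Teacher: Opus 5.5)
Your proposal is correct and follows the paper's proof: shift every element of $A$ and $T$ by $n$. This preserves the solution set, keeps both sets simple, and gives the bound $P(n)=Q(n)+n$; your check that the polynomial bound can be stated in terms of $n$ just makes explicit what the paper attributes to the definition of strong NP-hardness.
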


\begin{proof}
    By Theorem~\ref{thrm:IN3DM_distinct_hardness}, \INTDMshort\ is NP-complete even when $A$ and $T$ are simple sets and there is a polynomial $Q(n)$ s.t. $\max(A)\leq Q(n)$ and $\max(T)\leq Q(n)$ (by definition of strong NP-hardness). Take such an instance $I=(A,T)$ and increase all elements in $A$ and $T$ by $n$ to obtain another \INTDMshort\ instance $I'=(A',T')$. Observe that $I$ and $I'$ are equivalent instances by Def.~\ref{def:IN3DM}. By construction, we have $\min(A') \geq n$ and $A',T'$ are simple sets. Setting $P(n)=Q(n)+n$ completes the proof.
\end{proof}

Recall that for \PM\ the second set is the \emph{discretized sequence} (Def.~\ref{def:disc}) of the first one  and for any triplet $(a,b,t)$ in its solution it must hold that $a\geq b$ and $a+b \geq t$. Most of the technical difficulty of the following reduction stems from these restrictions. The main idea is to pad the first set with duplicate numbers in order to force its discretized sequence to consist of consecutive numbers starting from $1$ (like the second set of \INTDMshort). See Example~\ref{example:discseq} for some intuition regarding this transformation. At the same time, we utilize the $\min(A) \geq n$ inequality from Lemma~\ref{lem:IN3DM_value_bound} to force all candidate triplets $(a,b,t)$ to satisfy $a\geq b$, thus rendering that restriction irrelevant and forcing an equivalence between \INTDMshort\ and \PM.

We are now ready to present our reduction to \PM.

\begin{theorem}[Main reduction]\label{thrm:PM_reduction}
    There is a polynomial-time reduction from \INTDMshort\ to \PM, such that all elements added to the input have multiplicity at most $2$ and values polynomial in $n$.
\end{theorem}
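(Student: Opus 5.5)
We reduce from the restricted instances of Lemma~\ref{lem:IN3DM_value_bound}. Take an \INTDMshort\ instance $(A,T)$ with $|A|=|T|=n$, $\min(A)\ge n$, all values bounded by $P(n)$, and $A,T$ simple. Let $M=\max(A)$. We may assume $\max(T)\le M+n$, since otherwise the largest target can never be satisfied and we output a trivial no-instance. Set $K=M-n\ge 0$ and $N=n+2K=M+K$. The \PM\ instance is built as follows.

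\emph{First set.} Let $D=A\cup\{v,v : v\in[M+1,M+K]\}$, i.e.\ $A$ plus two copies of each of $M+1,\dots,M+K$. Every added number has multiplicity exactly $2$ and value at most $2P(n)$.

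\emph{Discretized sequence.} I will compute it from the top, as in Example~\ref{example:discseq}. The two copies of $M+K$ get positions $N$ and $N-1$. In general the two copies of $v$ get positions $p_v=2v-N-1$ and $p_v+1$. This works because positions decrease by one per element while values decrease by one per two elements, and $2v-N\le v$ holds since $v\le N$. After the padding, the next position is $N-2K=n$. Since every $a\in A$ satisfies $a\ge n$, the elements of $A$ receive positions $n,n-1,\dots,1$. So the discretized sequence is exactly $[N]$.

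\emph{Target set.} Let $T'=T\cup\{v+p_v,\ v+p_v+1 : v\in[M+1,M+K]\}$. Consecutive values give padding sums $v+p_v,\,v+p_v+1,\,v+p_v+3,\,v+p_v+4,\dots$, which are strictly increasing. The smallest is $(M+1)+(n+1)>M+n\ge\max(T)$. Hence $T'$ is a simple set of size $N$.

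\emph{Completeness.} Suppose $(A,T)$ has a solution. Match each copy of $v$ to its intended position and target; these triplets satisfy $d\ge a$ and $d+a=t$. Then use the \INTDMshort\ solution on $A\times[n]\times T$. There $d\ge a$ holds automatically because $\min(A)\ge n$, and $d+a\ge t$ is exactly the \INTDMshort\ condition.

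\emph{Soundness.} I will show by downward induction over the padding targets, mirroring the proof of Theorem~\ref{thrm:srnmts_hardness}, that any \PM\ solution matches the padding among itself in the intended way. Suppose all larger padding targets have already been forced. Consider the current largest remaining target, say $v+p$ with $p\in\{p_v,p_v+1\}$. At this point the largest remaining first-set element is $v$ and the largest remaining position is $p$. The target requires a pair $(d,a)$ with $d+a\ge v+p$, $d\le v$ and $a\le p$. This forces $d=v$ and $a=p$. The two copies of $v$ are interchangeable, so the choice between them is harmless. Elements of $A$ can never enter here: an element $d\le M<v$ would need $a>p$, but no such position remains.

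Once all padding is removed, what remains is $A$, the positions $[n]$, and $T$ with constraints $d\ge a$ and $d+a\ge t$. Since $d\ge a$ is automatic, this is exactly the \INTDMshort\ instance, so soundness follows.

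The main obstacle I anticipate is exactly this induction. One must check that at each step both maxima are attained by the intended elements. This relies crucially on the restriction $d\ge a$ (which caps the position an $A$-element can use at $M$) and on the ordering of the padding sums above $\max(T)$. The instance has polynomial size and polynomial values, so strong hardness is preserved.
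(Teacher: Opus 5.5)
Your construction is the paper's, with the bound $P(n)$ replaced by $\max(A)$; this is why you need the harmless preliminary rejection of instances with $\max(T)>\max(A)+n$, and your closed-form padding targets $3v-N-1,\,3v-N$ are exactly the paper's ``interval with every third number omitted.'' The discretized-sequence computation and the top-down forcing argument match the paper's proof and are correct. Your closing remark that the forcing step relies on $d\ge a$ is inaccurate, since it only uses the bounds on the remaining maxima and the fact that all padding targets exceed $\max(T)$, but this does not affect the proof.
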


\begin{proof}
    By Lemma~\ref{lem:IN3DM_value_bound}, we may assume for our reduction an \INTDMshort\ instance $I=(A,T)$ s.t. $\min(A) \geq n$, $\max(A)\leq P(n)$ and $\max(T)\leq P(n)$ for some polynomial $P(n)$, with $A$ and $T$ being simple sets.

    We construct a \PM\ instance $I'=(A',T')$ as follows.
    \begin{itemize}
        \item $A'$ contains the elements of $A$, along with $2$ copies of each of the following numbers: $P(n)+1,\ P(n)+2,\ldots,\ 2P(n)-n$.
        \item $T'$ contains the elements of $T$, along with the numbers in $L=[P(n)+n+2,\ 4P(n)-2n+1]$ but with every third number of $L$ being omitted, i.e., it contains: $P(n)+n+2,\ P(n)+n+3,\ P(n)+n+5,\ P(n)+n+6,\ P(n)+n+8,\ldots,\ 4P(n)-2n-3,\ 4P(n)-2n-1,\ 4P(n)-2n$. Note that $|L|=3P(n)-3n$ is a multiple of $3$, which implies $\max(L)$ is omitted from $T'$ (and so on for elements of the form $\max(L)-3i$). In total, $(2/3)\cdot |L| = 2P(n)-2n$ elements are added to $T$ to obtain~$T'$.
    \end{itemize}
    Observe that $|A'|=|T'|=2P(n)-n$ and that $T'$ is a simple set, while $A'$ has maximum multiplicity equal to $2$. We define $m=2P(n)-n$.
    
    We will now prove that the \PM\ instance $I'=(A',T')$ has a solution if and only if the \INTDMshort\ instance $I=(A,T)$ has a solution. 
    
    First, we compute the discretized sequence $S$ of $A'$, proving that it is equal to $[m]$, i.e., it consists of consecutive numbers, similar to Example~\ref{example:discseq}. By Definition~\ref{def:disc}, $s_m=\max(A')=2P(n)-n$. Since there are two copies of $2P(n)-n$, we obtain $s_{m-1}=s_m-1=2P(n)-n-1$. Then, there are two copies of $2P(n)-n-1$ in $A'$; hence $s_{m-2}=2P(n)-n-2$, $s_{m-3}=2P(n)-n-3$, and so on, up until $s_{n+1} = s_{m-{(2P(n)-2n-1)}}=2P(n)-n-(2P(n)-2n-1)=n+1$. Since $\min(A) \geq n$ (by Lemma~\ref{lem:IN3DM_value_bound}), it holds that $s_i = s_{i+1}-1$ for $i\in [n]$, meaning that $s_n = n,\ s_{n-1}=n-1\ldots, s_1=1$. We conclude that $S=\{1,\ldots,2P(n)-n\}=[m]$. Recall that this set is used as the second of the three sets in \PM\ (see Def.~\ref{def:PM}).

    We now prove that for any feasible solution of $I'$ all elements in $A'\setminus A$, $S\setminus [n]$ and $T'\setminus T$ have to be matched with each other, and that such a matching is indeed possible. Note that the three aforementioned sets consist of the $m-n$ largest elements of $A',S,T'$ respectively. Consider the maximum elements of each of the three sets, i.e., $a_m=\max(A'),\ s_m=\max(S),\ t_m=\max(T')$ and observe that $t_m=a_m+s_m = 2m$ and $a_m = s_m = m$, meaning that $(a_m,s_m,t_m)$ is a feasible \PM\ triplet. Additionally, the triplet $(a_m,s_m,t_m)$ must be included\footnote{Note that the triplet $(a_{m-1},s_m,t_m)$ is also feasible and equivalent to the one we picked, since $a_{m-1} = a_{m}$. In the context of this reduction, when we say that a triplet must be included in the solution, we imply that a triplet using other elements with the same values also suffices (without making any difference).} in any feasible solution of $I'$, since for any elements $a\in A',\ s\in S$ with $a\neq a_m$ or $s\neq s_m$ it would be $t_m > a+s$, violating one of the restrictions of \PM. Assume that we match the elements $a_m,s_m,t_m$ with each other and remove them from the input.
    
    We repeat the same process with the maximum available (i.e., unmatched) elements in $A',S,T'$. For the second step, we have $a_{m-1}=a_m$ (recall that we added two copies of each number to $A'$ during the construction), $s_{m-1}=s_m -1$ and $t_{m-1} = t_m -1$. Combining these with $t_m=a_m+s_m$ and $a_m \geq s_m$, we obtain $t_{m-1}=a_{m-1}+s_{m-1}$ and $a_{m-1} \geq s_{m-1}$, implying that $(a_{m-1},s_{m-1},t_{m-1})$ is a feasible \PM\ triplet; additionally, this triplet must be included in any solution of $I'$ for the same reason as before: $t_{m-1}$ cannot be satisfied by any available elements smaller than $a_{m-1},s_{m-1}$. For the third step, both copies of $\max(A')$ have been used up, meaning that $a_{m-2}=a_{m-1}-1$. However, since every third element in $L=[P(n)+n+2,\ 4P(n)-2n+1]$ is excluded from $T'$ (by construction), we have $t_{m-2}=t_{m-1}-2$. Hence, the triplet $(a_{m-2},s_{m-2},t_{m-2})$ is also feasible and mandatory for any solution of $I'$, for the same reasons (i.e., $t_{m-2}=a_{m-2}+s_{m-2}$, $a_{m-2}\geq s_{m-2}$ and $t_{m-2}$ cannot be matched with any elements smaller than $a_{m-2},s_{m-2}$ because it would be strictly larger than their sum). Iteratively, we can prove the same (in)equalities for all elements in $A'\setminus A$, $S\setminus [n]$ and $T'\setminus T$ with the same pattern:
    \begin{itemize}
        \item In even steps, the maximum available element of $A'$ is the same as in the previous step, while the maximum available elements of $S$ and $T'$ each decrease by $1$.
        \item In odd steps (excl. the first one), the maximum available elements of $A'$ and $S$ each decrease by $1$ and the maximum available element of $T'$ decreases by $2$, compared to the previous step.
    \end{itemize}
    In both cases, the respective triplet is feasible and has to be included in any solution, by the same (in)equalities that we proved for the second and third step respectively.

    At this point we have shown that there is a partial solution of $I'$ that matches all elements in $A'\setminus A$, $S\setminus [n]$ and $T'\setminus T$ with each other, and that this partial solution is mandatory for every solution of $I'$. We infer that $I'$ has a solution if and only if the rest of its input has a partial solution, i.e., if and only if there is a (perfect) matching $M\subseteq A\times [n] \times T$ s.t. $\forall (a,b,t)\in M$ it holds that $a\geq b$ and $a+b\geq t$. Now, observe that the first of these two inequalities is actually satisfied by every pair $(a,b)\in A\times [n]$ because $\min(A) \geq n$ (by Lemma~\ref{lem:IN3DM_value_bound}). Hence, the $a\geq b$ restriction can be ignored, meaning that $I'$ has a solution if and only if there is a (perfect) matching $M\subseteq A\times [n] \times T$ s.t. $\forall (a,b,t)\in M: a+b\geq t$. This is precisely the \INTDMshort\ instance $I=(A,T)$. We obtain that $I'$ has a solution if and only if $I$ has a solution, which concludes the correctness of the reduction.
\end{proof}

We obtain the following corollary from Theorem~\ref{thrm:PM_reduction} and Lemma~\ref{lem:IN3DM_value_bound}.

\begin{corollary}\label{cor:PM_maxmult2}
    \PM\ is strongly NP-complete even when the maximum multiplicity of $D$ is equal to $2$.
\end{corollary}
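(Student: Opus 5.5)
The plan is to combine Lemma~\ref{lem:IN3DM_value_bound} with the reduction of Theorem~\ref{thrm:PM_reduction}, and then track two things: the multiplicities in the constructed first set $D=A'$, and the magnitudes of all numbers. Membership in NP is trivial, as noted in the preliminaries.

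\textbf{Hardness and reduction.}
\begin{itemize}
\item Start from an \INTDMshort\ instance $I=(A,T)$ satisfying the three properties of Lemma~\ref{lem:IN3DM_value_bound}: $\min(A)\geq n$, all values bounded by $P(n)$, and $A,T$ simple sets. By that lemma (which in turn rests on Theorem~\ref{thrm:IN3DM_distinct_hardness}), \INTDMshort\ restricted to such instances is NP-hard. It is in fact strongly NP-hard, because the values are polynomially bounded.
\item Apply the construction of Theorem~\ref{thrm:PM_reduction}, producing $I'=(A',T')$ with $I'$ a yes-instance if and only if $I$ is. That theorem already establishes this equivalence and also that $T'$ is a simple set, as \PM\ requires.
\end{itemize}

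\textbf{Multiplicity of $D=A'$.} The original elements of $A$ are distinct and all lie in $[n,P(n)]$. The padding elements are exactly two copies of each integer in $[P(n)+1,\,2P(n)-n]$. These two ranges are disjoint, so no padding value coincides with a value of $A$. Hence every value in $A'$ occurs once or twice.

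To claim multiplicity \emph{equal} to $2$, one needs at least one padding value. This holds whenever $2P(n)-n\geq P(n)+1$, i.e.\ $P(n)\geq n+1$. We can ensure this by enlarging $P$ if necessary (e.g.\ replacing it with $P(n)+n+1$). Since $P$ is only used as an upper bound, this enlargement does not affect Theorem~\ref{thrm:PM_reduction}.

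\textbf{Strong hardness.} All numbers in $I'$ are bounded by $4P(n)$, which is polynomial in $n$. The instance size $m=2P(n)-n$ is also polynomial in $n$. So the reduction maps a strongly NP-hard problem to \PM\ instances whose numbers are polynomial in their size, and strong NP-completeness transfers.

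\textbf{Main obstacle.} The only delicate step is the disjointness of value ranges in the multiplicity count. Everything else is bookkeeping on top of Theorem~\ref{thrm:PM_reduction}.
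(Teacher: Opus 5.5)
Your proposal is correct and follows the paper's route exactly: the paper obtains the corollary by feeding the instances of Lemma~\ref{lem:IN3DM_value_bound} into the construction of Theorem~\ref{thrm:PM_reduction}, and your disjointness and value-bound bookkeeping makes explicit what the paper leaves implicit. Enlarging $P$ is harmless but unnecessary, since $A$ consists of $n$ distinct integers that are all at least $n$, so $P(n)\geq \max(A)\geq 2n-1\geq n+1$ already holds for $n\geq 2$.
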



\subsection{From \PM\ to \twoV}

\begin{observation}\label{obs:PM_to_2V}
    The reduction from \PM\ to \twoV\ by Kanellopoulos et al.~\cite{kVisits} preserves the maximum multiplicity. More specifically, for a \PM\ instance $(D,T)$ it requires $T$ to be a simple set and it constructs the set of deadlines for \twoV\ by padding $D$ with distinct numbers that are either smaller than $\min(D)$ or larger than $\max(D)$. 
\end{observation}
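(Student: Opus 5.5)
Since the kVisits reduction is not reproduced here, I would prove the observation by reconstructing it from its two stated requirements and checking both claims on the reconstruction. The goal is: given a \PM\ instance $(D,T)$ with $T$ simple, build \twoV\ deadlines $D'$ whose \PM\ instance (via Corollary~\ref{cor:old_2V_to_PM_reduction}) is equivalent to $(D,T)$. By Lemma~\ref{lem:old_primary_discon_prop}, the target set of $D'$ is $[2|D'|]\setminus A'$, where $A'$ is the discretized sequence of $D'$. The positions of $[2|D'|]$ not in $A$ (the discretized sequence of $D$) and not in $T$ must therefore be absorbed by new tasks. This works only if $A$ and $T$ are disjoint, and that is exactly why $T$ must be simple: a multiset $T$ can never occur as a set of positions $[2n]\setminus A$. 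I would record this as the reason for the requirement.

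The construction I would verify has two kinds of padding. For each position to the left of $\min(D)$ that must be filled, add a small deadline $d'<\min(D)$. For each position beyond $\max(T)$ that must be filled, add a large deadline $d'>\max(D)$. All added values are distinct from each other and from $D$. The added tasks then form their own clusters, or singleton positions that are forced. By Observation~\ref{obs:cluster} and the self-reduction of Lemma~\ref{lem:PM_self_reduction}, the clusters of $D$ keep their discretized values $A$, so the \PM\ instance of $D'$ splits into the original $(D,T)$ plus padding parts. Each padding part is solved trivially, as in Lemma~\ref{lem:old_distinct}, because the added numbers are distinct and each is forced to $d=a$.

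With this construction, multiplicity preservation is immediate. Every added number is distinct, lies strictly below $\min(D)$ or strictly above $\max(D)$, and so collides neither with another added number nor with an element of $D$. Hence the multiplicity of each value in $D'$ equals its multiplicity in $D$ (or is $1$), and the maximum multiplicity is unchanged whenever $D$ is nonempty.

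The main obstacle is confirming that the padding leaves the discretized sequence of $D$ unchanged. Small added deadlines sit before $D$ in sorted order and so cannot affect $D$'s entries, since discretization propagates right to left. Large ones could push values downward via $\min\{a_{i+1}-1,d_i\}$. I would handle this by choosing the large padding values with gaps, so that their discretized values exceed $\max(D)+1$, or by checking that the kVisits choice already has this property. Combined with Observation~\ref{obs:PM_to_2V} and Corollary~\ref{cor:PM_maxmult2}, this yields strong NP-completeness of \twoV\ with maximum multiplicity $2$.
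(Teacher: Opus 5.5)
\textbf{Multiplicity part.} Your third paragraph is fine, and it is all the paper relies on. The observation is simply read off from the construction in~\cite{kVisits}: once the added deadlines are distinct and lie outside $[\min(D),\max(D)]$, no value gains multiplicity.

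\textbf{First gap: you need $A\cap T=\emptyset$, and nothing gives it to you.} The trouble is the reconstruction you put in place of the cited construction. At best it would show that \emph{some} multiplicity-preserving reduction exists, and as written it does not even do that. You note that it needs $A\cap T=\emptyset$, but you then identify this with $T$ being simple. These are different conditions: \PM\ allows $T$ to meet $A$. Padding with distinct values outside $[\min(D),\max(D)]$ leaves $A$ untouched, for two reasons:
\begin{itemize}
\item values below $\min(D)$ are processed after $D$ in the right-to-left recursion;
\item distinct values $e_1<\dots<e_r$ above $\max(D)$ are their own discretized values, so $a_n=\min\{e_1-1,d_n\}=d_n$.
\end{itemize}
Hence $A\subseteq A'$, and $T'=[2n']\setminus A'$ can never contain a target in $T\cap A$. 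This is exactly the situation the observation is used for. In the instance built in Theorem~\ref{thrm:PM_reduction}, the discretized sequence is $[m]$, while all original targets lie in $[P(n)]\subseteq[m]$. So you need a multiplicity-preserving normalization first, e.g.\ $(D,T)\mapsto(D+c,\,T+2c)$. This gives an equivalent \PM\ instance because the discretized sequence of $D+c$ is $A+c$, and it puts $T$ strictly above $A$ once $c>\max(A)-\min(T)$.

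\textbf{Second gap: the padding is unspecified exactly where the work lies.} Even after normalizing, you only fill positions below $\min(D)$ and above $\max(T)$. Every position of $[\min(A),\max(T)]\setminus(A\cup T)$ falls into $T'$. These are the gaps between clusters of $A$, the stretch between $\max(A)$ and $\min(T)$ (of length about $c$ after a shift), and the holes inside $T$. Lemma~\ref{lem:PM_self_reduction} hands the targets to the clusters in increasing order, so these positions would go to the clusters of $D$ in place of elements of $T$. Avoiding this requires:
\begin{itemize}
\item enough small tasks whose secondaries can reach these positions (with $a'=d'$ this needs $t\le 2d'$, where $d'<\min(D)$);
\item large primaries sitting in the holes of $T$;
\item consistent counts throughout.
\end{itemize}
It is also not automatic that small padding forms its own cluster with $d=a$ forced. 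For $D=\{5,5\}$ padded with $4$, the discretized sequence is $\langle 3,4,5\rangle$, a single cluster. In it the padding task may take position $4$ and an original task position $3$. So you must leave separator positions, and these are themselves targets to be absorbed. This bookkeeping is the actual content of the reduction, and you leave it unchecked. By contrast, the obstacle you single out, large padding pushing $D$'s discretized values down, cannot occur for distinct values above $\max(D)$.
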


We obtain the main result of this section by combining Corollary~\ref{cor:PM_maxmult2} with Observation~\ref{obs:PM_to_2V}.

\begin{theorem}\label{thrm:main}
    \twoV\ is strongly NP-complete even when the maximum multiplicity of the input is equal to $2$.
\end{theorem}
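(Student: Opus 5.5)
The proof is a composition of results already established. Membership in NP is immediate: a schedule of length $2n$ is a polynomial-size certificate, and it can be checked in polynomial time. For hardness, I start from Corollary~\ref{cor:PM_maxmult2}. It says that \PM\ is strongly NP-complete even when the maximum multiplicity of $D$ is $2$. Looking inside Theorem~\ref{thrm:PM_reduction}, the instances produced also have $T'$ a simple set and all values bounded by a polynomial in $n$. Both facts are needed for the next step.

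Next I apply the reduction from \PM\ to \twoV\ of Kanellopoulos et al.~\cite{kVisits}. By Observation~\ref{obs:PM_to_2V}, it needs only that $T$ is a simple set, which the instances of Theorem~\ref{thrm:PM_reduction} satisfy. It builds the deadline multiset by padding $D$ with distinct numbers that are either below $\min(D)$ or above $\max(D)$. These padding numbers are pairwise distinct and lie outside the range of $D$, so none of them coincides with an element of $D$ or with another padding number. The maximum multiplicity of the resulting \twoV\ instance is therefore exactly that of $D$, namely $2$. It is exactly $2$ rather than at most $2$ because the duplicated values $P(n)+1,\ldots,2P(n)-n$ survive in $D$.

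To obtain \emph{strong} hardness, the composed reduction must run in polynomial time and produce numbers polynomially bounded in the input size. Theorem~\ref{thrm:PM_reduction} adds polynomially many elements of polynomial value. For the second step, I would check from~\cite{kVisits} that the padding numbers are polynomial in the size of the \PM\ instance, since that reduction is stated there to preserve strong NP-hardness. Composing polynomially bounded reductions keeps the values polynomially bounded, so the source problem \NMTSshort, which is strongly NP-hard by Theorem~\ref{thrm:nmts_distinct_hardness}, transfers its strong hardness through the chain.

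The only real obstacle is confirming the claim of Observation~\ref{obs:PM_to_2V}: that the reduction of~\cite{kVisits} inserts no duplicate of an existing deadline, and that its correctness hypotheses (simple $T$) hold for our instances. Both points amount to inspecting that construction rather than proving anything new.
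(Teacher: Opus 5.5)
Your proposal is correct and follows the paper's own argument: the theorem is obtained by combining Corollary~\ref{cor:PM_maxmult2} with the multiplicity-preserving reduction of Observation~\ref{obs:PM_to_2V}, as you do. Your additional checks are details the paper leaves implicit: the padding values are pairwise distinct and lie outside the range of $D$, the duplicated values keep the multiplicity at exactly $2$, and all values remain polynomially bounded.
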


We thus obtain the following corollary, settling an open question stated by~\cite{kVisits} in SODA 2026 and contrasting the tractability of \twoV\ for simple sets (Lemma~\ref{lem:old_distinct}).

\begin{corollary}\label{cor:open_question}
    There is no FPT or XP algorithm for \twoV\ parameterized by the maximum multiplicity, unless $\mathrm{P}=\mathrm{NP}$.
\end{corollary}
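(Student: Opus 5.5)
The statement to prove is Corollary~\ref{cor:open_question}: \twoV\ has no FPT or XP algorithm parameterized by maximum multiplicity, unless $\mathrm{P}=\mathrm{NP}$. The plan is a direct argument from Theorem~\ref{thrm:main}. Suppose, for contradiction, that \twoV\ had an XP algorithm parameterized by the maximum multiplicity $\mu$ of the input deadlines. Such an algorithm runs in time $f(\mu)\cdot |I|^{g(\mu)}$ for some computable functions $f,g$. Every FPT algorithm, with running time $f(\mu)\cdot |I|^{\bigO(1)}$, is in particular XP, so it suffices to rule out XP algorithms.

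Restrict to instances with maximum multiplicity exactly $2$. On this class the running time is $f(2)\cdot |I|^{g(2)}$, which is a polynomial, since $f(2)$ and $g(2)$ are constants. By Theorem~\ref{thrm:main}, \twoV\ restricted to such instances is (strongly) NP-complete. This restriction can be decided in polynomial time, because we only need to check the multiplicities in the sorted input. Composing the NP-completeness reduction with the assumed algorithm therefore places an NP-complete problem in P, so $\mathrm{P}=\mathrm{NP}$.

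One subtlety needs a check: the reduction chain behind Theorem~\ref{thrm:main} must produce instances of multiplicity exactly $2$, not merely at most $2$. Even if some instance had multiplicity $1$, it would be solvable in linear time by Lemma~\ref{lem:old_distinct}. So the bound $\mu\le 2$ is enough: we run the polynomial-time algorithm with $\mu\le 2$, and the running time is still bounded by $\max(f(1),f(2))\cdot|I|^{\max(g(1),g(2))}$. There is no real obstacle here; the only care needed is stating that the parameter is bounded by a constant on the hard slice, which gives para-NP-hardness.
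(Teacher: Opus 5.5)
Your proposal is correct and follows the paper's argument: Theorem~\ref{thrm:main} gives NP-hardness on a slice where the parameter is bounded by the constant $2$, so an algorithm running in time $f(\mu)\cdot|I|^{g(\mu)}$ would run in polynomial time there and force $\mathrm{P}=\mathrm{NP}$ (the paper leaves this implicit, stating the corollary directly after Theorem~\ref{thrm:main}). Your handling of ``exactly $2$'' versus ``at most $2$'' is fine, though the appeal to Lemma~\ref{lem:old_distinct} is unnecessary, since the bound $\mu\le 2$ alone already gives a polynomial running time.
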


\section{\twoV\ parameterized by the number of distinct numbers}\label{sec:number_of_numbers}

In this section we prove that \PM\ is in $\mathrm{RP}$ when the number of distinct numbers in~$D$ (or simply \emph{number of numbers}) is constant, by reducing it to \EWPM\ (\EWPMshort--Def.~\ref{def:EWPM}). This naturally transfers to \twoV\ through Corollary~\ref{cor:old_2V_to_PM_reduction}.

Let $p$ be the number of numbers of the input set $D$ of \PM. The main idea of the following reduction is to model \PM\ as a bipartite matching instance between $A$ and $T$, while modeling $D$ as edge weights. Bounding $p$ allows these weights to be polynomial, which is important for \EWPMshort\ (see Section~\ref{subsubsec:exact_matching}). We remark that it only makes sense to bound the number of numbers of $D$, since $A$ and $T$ both consist of distinct numbers by definition.

\begin{theorem}\label{thrm:reduction_PM_to_EWPM}
    There is a reduction from \PM\ to \EWPMshort\ in multigraphs with $\bigO(n^2p)$ edges and weights bounded by $\bigO(n^{p-1})$, running in time $\bigO(n^2p)$.
\end{theorem}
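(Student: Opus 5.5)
We reduce \PM\ to an exact perfect matching problem on a bipartite multigraph whose two sides are $A$ and $T$. Let the distinct values of $D$ be $v_1<\dots<v_p$, with multiplicities $c_1,\dots,c_p$, so $\sum_j c_j=n$. A solution of \PM\ is a perfect matching between $A$ and $T$ together with an assignment of a value $v_j$ to each matched pair $(a,t)$. The assignment must use each value $v_j$ exactly $c_j$ times and must satisfy $v_j\ge a$ and $v_j+a\ge t$. Since $A$ and $T$ are simple sets (as sequences of positions), the only multiset information is contained in $D$. We encode it in the edge weights. For every pair $(a,t)\in A\times T$ and every $j\in[p]$ with $v_j\ge a$ and $v_j+a\ge t$, we add a parallel edge between $a$ and $t$ of weight $w_j=(n+1)^{j-1}$. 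This gives at most $n^2p$ edges, and it can be built in time $\bigO(n^2p)$ from the sorted input.

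The target is $W=\sum_{j=1}^p c_j (n+1)^{j-1}$. A perfect matching $M$ has $n$ edges; let $x_j$ be the number of edges of $M$ of type $j$. Then the weight of $M$ is $\sum_j x_j (n+1)^{j-1}$ with $0\le x_j\le n<n+1$. By uniqueness of base-$(n+1)$ representations, this equals $W$ if and only if $x_j=c_j$ for all $j$.

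To get the claimed weight bound $\bigO(n^{p-1})$ instead of $\bigO(n^{p})$, we drop one digit. Set $w_1=0$ and $w_j=(n+1)^{j-2}$ for $j\ge2$, with $W=\sum_{j\ge2}c_j(n+1)^{j-2}$. Weight $W$ forces $x_j=c_j$ for all $j\ge 2$, and then $x_1=n-\sum_{j\ge2}c_j=c_1$ is automatic because $|M|=n$. The largest weight is $(n+1)^{p-2}$, and all weights are at most $W\le n(n+1)^{p-2}=\bigO(n^{p-1})$.

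For correctness, a \PM\ solution $\{(d,a,t)\}$ yields a perfect matching by taking, for each triplet, the parallel edge between $a$ and $t$ of the type of $d$. Such an edge exists because the triplet is feasible, and its weight is exactly $W$ since the types are used $c_j$ times. Conversely, a perfect matching of weight $W$ uses exactly $c_j$ edges of each type $j$. Each edge of type $j$ on $(a,t)$ becomes the triplet $(v_j,a,t)$, which is feasible by the construction of the edges. This consumes exactly the multiset $D$, and every element of $A$ and $T$ once.

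The main obstacle is the weight-encoding step: making sure the counting argument is airtight, since the digits $x_j$ must never overflow. This is handled by choosing base $n+1$ and using $x_j\le n$. Saving the factor $n$ in the weight bound via the free first digit also needs the cardinality argument above. The remaining steps are routine bookkeeping.
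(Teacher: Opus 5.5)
Your proof is correct and follows the paper's proof: you build the bipartite multigraph on $A\cup T$ with a parallel edge of weight $(n+1)^{i-1}$ for each value $d_i$ satisfying $d_i\ge a$ and $d_i+a\ge t$, take target $W=\sum_i n_i(n+1)^{i-1}$, and use uniqueness of base-$(n+1)$ representations to recover the exact multiplicities. Your extra step $w_1=0$ is valid but not needed for the stated bound, since the paper's largest edge weight $(n+1)^{p-1}\le e\,n^{p-1}$ (as $p\le n$) is already $\mathcal{O}(n^{p-1})$; the tweak only additionally lowers the target $W$ from $\mathcal{O}(n^{p})$ to $\mathcal{O}(n^{p-1})$.
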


\begin{proof}
    Let $I=(D,T)$ be a \PM\ instance with $|D|=|T|=n$ and $D$ containing $p$ distinct numbers $d_1,\ldots,d_p$. For $i\in [p]$, we define $n_i$ as the number of copies of $d_i$ in $D$. Note that $n=\sum_{i=1}^p n_i$. Let $A=\langle a_1,\ldots,a_n \rangle$ be the discretized sequence of $D$ and recall that $T=\{t_1,\ldots,t_n\}$ is a simple set by definition.

    We construct an \EWPMshort\ instance $J=(G,W)$ as follows.
    Create a weighted bipartite multigraph $G=(U\cup V,E,w)$ where 
    $U$ corresponds to the elements of $A$ (vertex $u_i$ corresponds to~$a_i$) and $V$ corresponds to elements of $T$ (vertex $v_i$ corresponds to $t_i$).
    For each $i\in[p]$, we define $w_i = (n+1)^{\,i-1}$. For every triplet $(i,j,k)\in [p]\times [n]\times [n]$, if $d_i \ge a_j$  and $d_i+a_j \ge t_k$, we add an edge $e=(u_j,v_k)$ with weight $w(e)=w_i$. Note that parallel edges might exist if multiple indices~$i$ satisfy the same inequalities, hence $G$ is a multigraph. Lastly, we define the target weight for \EWPMshort\ as $W=\sum_{i=1}^p n_i\, w_i$. The reduction described runs in time $\bigO(n^2p)$.

    We now prove that $I$ has a solution iff $J$ has a solution.

    ($\Rightarrow$) Suppose $I$ has a feasible solution
    $M \subseteq D\times A \times T$. For each triplet $(d,a_j,t_k) \in M$, let $d=d_i$ be its value among $\{d_1, \dots,d_p\}$. Since $M$ is a feasible solution, we obtain $d_i\ge a_j$ and $d_i+a_j \ge t_k$, hence by construction there is an edge between $u_j$ and $v_k$ with weight $w_i$. Selecting that edge for every triplet in $M$ yields a perfect matching $M'$ in $G$ (because every $a\in A$ and $t\in T$ appears exactly once in $M$ by Def.~\ref{def:PM}). Moreover, since $d_i$ appears exactly~$n_i$ times in $D$, exactly~$n_i$ chosen edges have weight $w_i$, thus $\sum_{e\in M'} w(e)=\sum_{i=1}^p n_i w_i=W$. Therefore, $J$ is a yes-instance of \EWPMshort.
    
    ($\Leftarrow$) Assume $G$ has a perfect matching $M'$ with total weight $W=\sum_{i=1}^p n_i\, w_i$ $(0\leq n_i \leq n)$. For $i\in[p]$, let $c_i$ be the number of edges of weight $w_i=(n+1)^{i-1}$ in $M'$ and observe that $0 \leq c_i \leq n,\ \forall i\in [p]$, because $M'$ is a perfect matching. The total weight of $M'$ is $$W=\sum_{i=1}^p c_i\, (n+1)^{i-1}=\sum_{i=1}^p n_i\, (n+1)^{i-1}.$$    
    Due to the uniqueness of base $(n+1)$ representation, there is a unique way to build the number
    $W$ using at most $n$ copies of each number $(n+1)^{i-1}$. Hence, it must be
    $c_i=n_i$, $\forall i\in [p]$. 
    We will now construct a solution for the \PM\ instance $I$. For each edge $(u_j,v_k) \in M'$ with weight $w_i$, we add to our solution the triplet $(d_i,a_j,t_k)$. An edge exists iff $d_i\ge a_j$ and $d_i + a_j \ge t_k$ by construction, hence every chosen triplet is feasible. Since $M'$ is perfect, every $a_j\in A$ and every $t_k\in T$ appears exactly once. Finally, $d_i$ appears exactly $c_i=n_i$ times, i.e., we use exactly the multiset $D$.
    Thus, we obtain a solution for~$I$.
\end{proof}

To the best of our knowledge, \EM\ and \EWPMshort\ have not been considered in multigraphs, although it is likely that some of their results generalize to multigraphs easily. We thus include Lemma~\ref{lem:multigraph_to_simple} for the sake of completeness. Its proof is a standard transformation of a multigraph to a simple graph by replacing edges with paths (see Figure~\ref{fig:EWPM}). Note that for the purposes of this work we only need the following reduction for \emph{bipartite} multigraphs, although it holds even for general multigraphs.

\begin{figure}[ht]
     \centering
     \begin{subfigure}{0.45\textwidth}
         \centering
         \includegraphics[width=\linewidth]{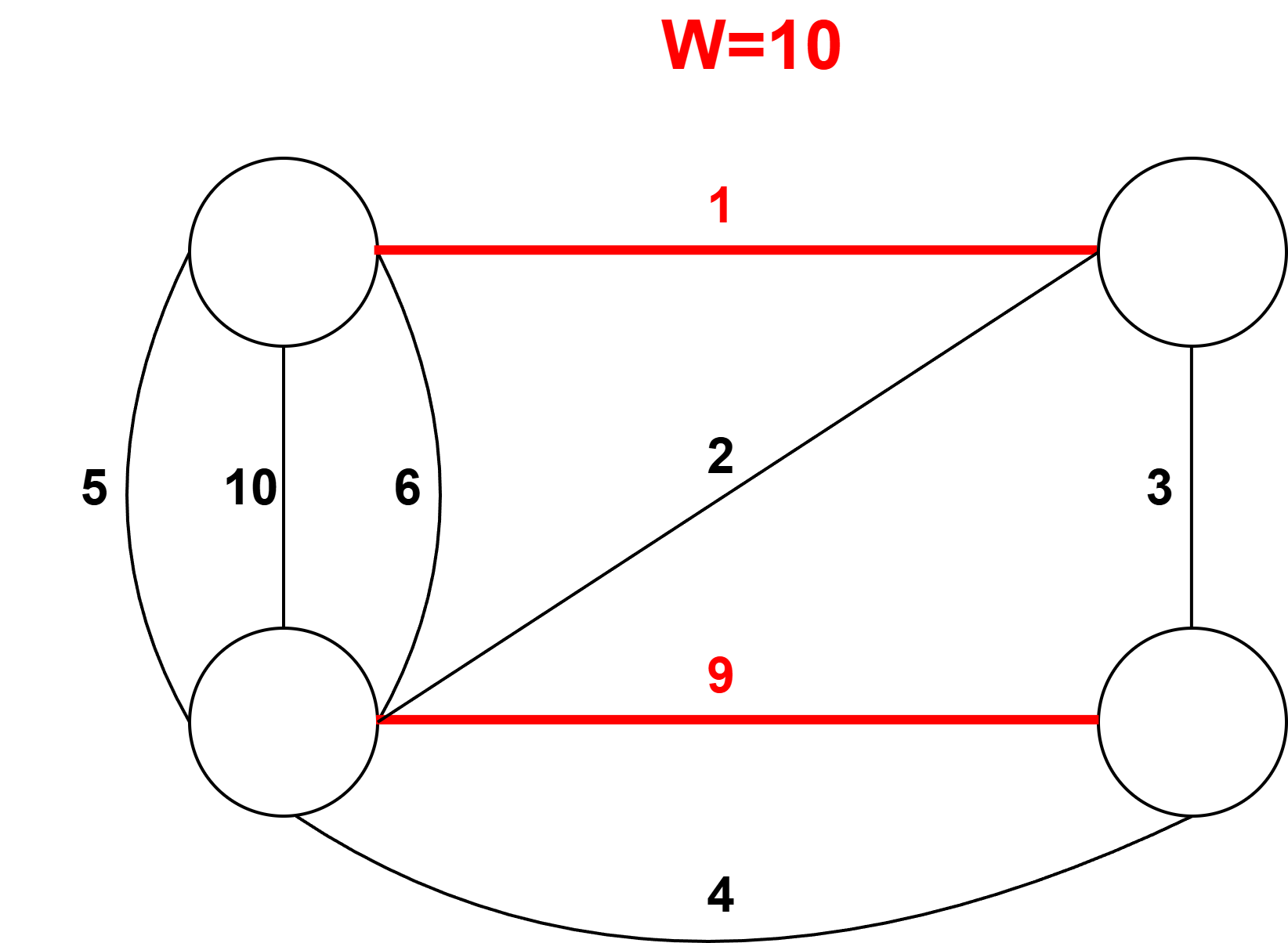}
         \vspace{20pt}
         \label{fig:1a}
     \end{subfigure}\hfill
     \begin{subfigure}{0.5\textwidth}
         \centering
         \includegraphics[width=\linewidth]{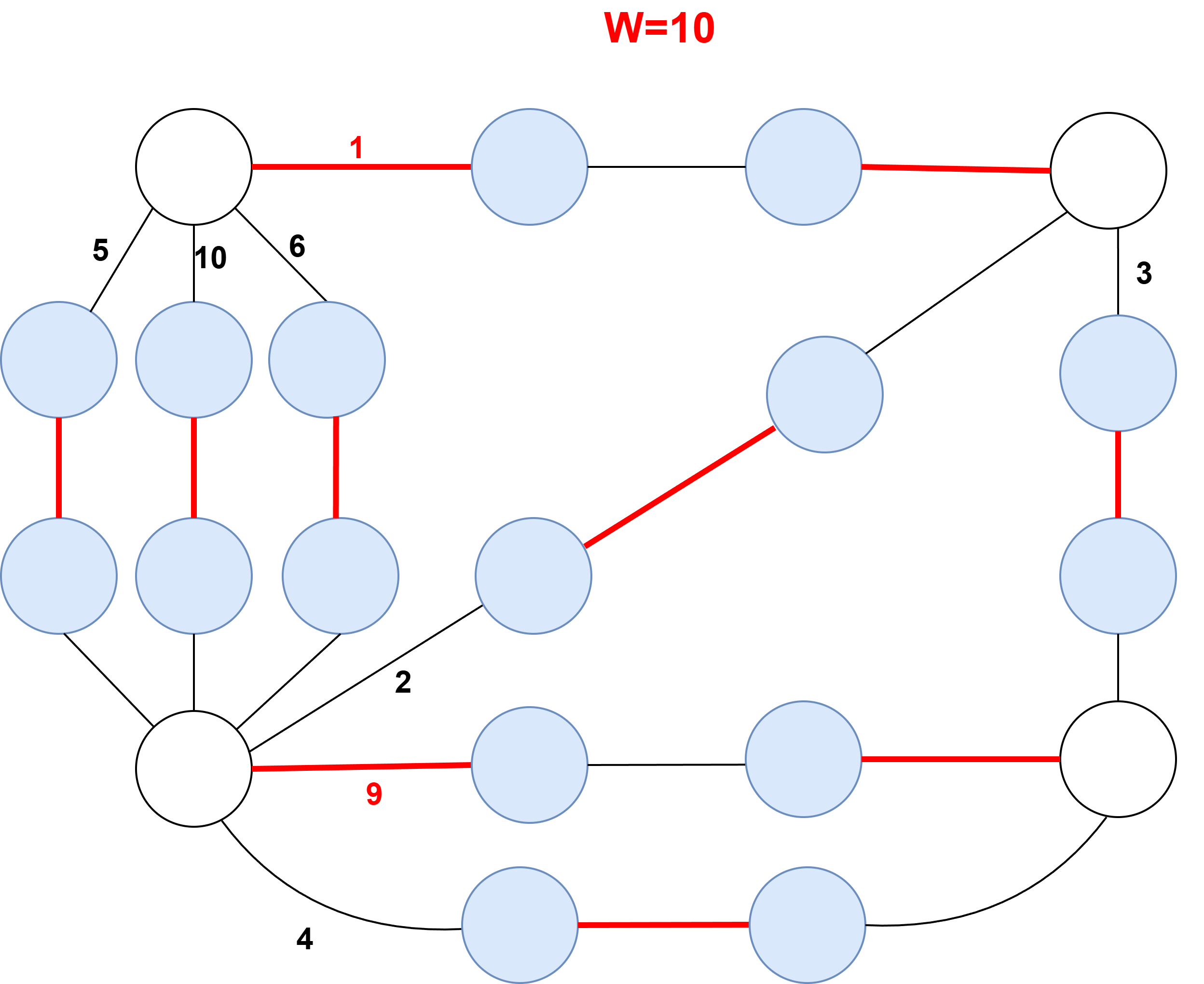}
         \label{fig:1b}
     \end{subfigure}
     \caption{The reduction of Lemma~\ref{lem:multigraph_to_simple}, from \EWPMshort\ in multigraphs to \EWPMshort\ in simple graphs. Unlabeled edges have weight $0$. The perfect matchings of both graphs denoted by red edges have total weight $10$.}
     \label{fig:EWPM}
\end{figure}

\begin{lemma}\label{lem:multigraph_to_simple}
    There is a linear-time reduction from \EWPMshort\ in multigraphs to \EWPMshort\ in simple graphs, preserving weights and multiplying the amount of edges by $3$.
\end{lemma}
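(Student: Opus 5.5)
The plan is to replace every edge of the multigraph by a path of length three, as suggested by Figure~\ref{fig:EWPM}. Let $(G,W)$ be an \EWPMshort\ instance with $G=(V,E,w)$ a multigraph. For each edge $e=(x,y)\in E$ we introduce two fresh vertices $p_e,q_e$ and replace $e$ by the path $x - p_e - q_e - y$. We give the edge $(p_e,q_e)$ weight $w(e)$ and put weight $0$ on $(x,p_e)$ and $(q_e,y)$. The target weight stays $W$. The resulting graph $G'$ is simple, because every new edge has at least one fresh endpoint that belongs to a single original edge. Hence no two new edges share both endpoints, even if $e$ was a parallel edge. (We may assume there are no self-loops, since they never occur in perfect matchings and can be discarded.) $G'$ has exactly $3|E|$ edges and $|V|+2|E|$ vertices, it is built in linear time, and its weights are exactly the original weights together with $0$.

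For correctness we compare matchings through the following local rule. Every gadget $e$ is covered in exactly one of two ways: either by the middle edge $(p_e,q_e)$, meaning ``$e$ is unused'', or by the two outer edges $(x,p_e)$ and $(q_e,y)$, meaning ``$e$ is used''. Given a perfect matching $M$ of $G$ of weight $W$, we define $M'$ as follows. For each $e\in M$ we take both outer edges of its gadget, which have weight $0$. For each $e\notin M$ we take the middle edge of weight $w(e)$. Every vertex of $V$ is then covered exactly once, by the outer edge of its unique matched original edge, and every $p_e,q_e$ is covered exactly once.

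This has a problem: the weight of $M'$ equals $\sum_{e\notin M} w(e)=w(E)-W$, so the weights are \emph{complemented}. This is the step I expect to need care. The fix I would commit to is to put the weight on the outer edges instead of the middle one. Concretely, we set weight $w(e)$ on $(x,p_e)$ and weight $0$ on both $(p_e,q_e)$ and $(q_e,y)$. Now a used gadget contributes $w(e)$ and an unused gadget contributes $0$, so $w(M')=w(M)=W$. (Alternatively, one could keep the complemented version and set the target to $w(E)-W$, but this also preserves the weight set, so either choice fits the statement.)

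Conversely, let $M'$ be a perfect matching of $G'$. Since $p_e$ must be matched, either $(p_e,q_e)\in M'$ or $(x,p_e)\in M'$. In the latter case $q_e$ can only be matched via $(q_e,y)$. So each gadget is in exactly one of the two states above. We set $M=\{e: \text{gadget } e \text{ uses its outer edges}\}$. Each $v\in V$ is covered exactly once in $M'$, and only by outer edges, so $M$ is a perfect matching of $G$. By the weight assignment, $w(M)=w(M')=W$. The proof would also note that bipartiteness is preserved, since paths of odd length between the two sides keep the graph bipartite. This is the case needed for Theorem~\ref{thrm:EWPM_to_EM}.
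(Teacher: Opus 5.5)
Your final construction (weight $w(e)$ on one outer edge of the length-$3$ path, weight $0$ on the middle edge and on the other outer edge) and both directions of your correctness argument are the same as in the paper, whose footnote likewise says the middle edge must have weight $0$. The complemented-weight detour is harmless, but you can drop it from a write-up.
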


\begin{proof}
    Let $G$ be a multigraph.    
    We construct a simple graph $G'$ by replacing every edge of weight~$w$ with a path of length $3$, consisting of one edge of weight $w$ and two edges of weight $0$ (see Figure~\ref{fig:EWPM} for an example).\footnote{The middle edge must have weight $0$. It does not matter which of the other two edges has weight $w$.}
    Observe that for every such path of length~$3$, every perfect matching in $G'$ must either contain the middle edge (weight $0$) or both of the other two edges (weight $w+0=w$).
    We prove that $G$ has a perfect matching of weight $W$ iff $G'$ has a perfect matching of weight $W$.

    ($\Rightarrow$) Let $M$ be a perfect matching in $G$. Construct a matching $M'$ in $G'$ by taking the middle edge for paths of length~$3$ corresponding to edges $e\notin M$ and the other two edges for paths of length~$3$ corresponding to edges $e\in M$. $M'$ is perfect and has the same weight as $M$ by construction.

    ($\Leftarrow$) Let $M'$ be a perfect matching in $G'$. We construct a matching $M$ in $G$ as follows. By the aforementioned observation, it suffices to include an edge in $M$ iff the middle edge of its corresponding path of length $3$ is not in $M'$. $M$ is perfect and has the same weight as $M'$ by construction.
\end{proof}

We obtain the following through Theorems~\ref{thrm:EWPM_to_EM},~\ref{thrm:reduction_PM_to_EWPM} and Lemma~\ref{lem:multigraph_to_simple}.

\begin{theorem}\label{thrm:PM_to_EM}
    \PM\ reduces in $\bigO(n^{p+1}p)$ time to \EM\ with $\bigO(n^{p+1}p)$ vertices and edges.
\end{theorem}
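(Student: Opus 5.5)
The plan is to compose the three reductions in order and track the sizes. Start from a \PM\ instance $I=(D,T)$ with $|D|=|T|=n$ and $p$ distinct numbers in $D$. Theorem~\ref{thrm:reduction_PM_to_EWPM} produces, in time $\bigO(n^2p)$, an \EWPMshort\ instance on a bipartite multigraph. It has $2n$ vertices, $\bigO(n^2p)$ edges, weights $w_i=(n+1)^{i-1}\le (n+1)^{p-1}=\bigO(n^{p-1})$, and target $W=\sum_i n_i w_i$. Next, Lemma~\ref{lem:multigraph_to_simple} turns this into an \EWPMshort\ instance on a simple graph in linear time, with the same weights and target. The new graph has $3\cdot\bigO(n^2p)=\bigO(n^2p)$ edges and $2n+2\cdot\bigO(n^2p)=\bigO(n^2p)$ vertices, since each replaced edge contributes two new internal path vertices.

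Finally, I apply Theorem~\ref{thrm:EWPM_to_EM} with $n'=\bigO(n^2p)$ vertices, $m'=\bigO(n^2p)$ edges and weight bound $\bigO(n^{p-1})$. The weights are polynomially bounded because $p$ is treated as fixed in the exponent, which is exactly where a constant number of numbers is needed. The resulting \EM\ instance has $\bigO(n'+m'\cdot n^{p-1})=\bigO(n^2p\cdot n^{p-1})=\bigO(n^{p+1}p)$ vertices and $\bigO(n^{p+1}p)$ edges, and the reduction takes $\bigO(m'\cdot n^{p-1})=\bigO(n^{p+1}p)$ time. This running time dominates the $\bigO(n^2p)$ of the first two steps, so the total time is $\bigO(n^{p+1}p)$.

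Correctness follows by transitivity, since each step is an equivalence-preserving reduction. The main point to check is whether the weight bound $W$ in Theorem~\ref{thrm:EWPM_to_EM} refers to the maximum edge weight or to the target value. If it means the target, then $\sum_i n_i w_i\le n\cdot(n+1)^{p-1}=\bigO(n^p)$, which would give $\bigO(n^{p+2}p)$ instead. So I would read the theorem's $W$ as the maximum edge weight, as the wording ``weights bounded by $W$'' indicates, and the stated bound follows. The constant-factor blow-up from $(n+1)^{p-1}$ versus $n^{p-1}$ is absorbed as long as $p$ is fixed. More generally, $(1+1/n)^{p-1}\le e^{(p-1)/n}$, which is bounded when $p\le n$, and this always holds.
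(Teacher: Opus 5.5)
Your proposal is correct and is exactly the paper's argument: the paper obtains this theorem by composing Theorem~\ref{thrm:reduction_PM_to_EWPM}, Lemma~\ref{lem:multigraph_to_simple} and Theorem~\ref{thrm:EWPM_to_EM}. Your bookkeeping (the $\mathcal{O}(n^2p)$ extra path vertices from the multigraph step, and reading the bound in Theorem~\ref{thrm:EWPM_to_EM} as the maximum edge weight $(n+1)^{p-1}=\mathcal{O}(n^{p-1})$) gives precisely the stated $\mathcal{O}(n^{p+1}p)$ bounds on time, vertices and edges.
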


Combining this with Corollary~\ref{cor:old_2V_to_PM_reduction} and Lemma~\ref{lem:PM_self_reduction}, we obtain the main result of this section.

\begin{theorem}\label{thrm:2V_to_EM}
    \twoV\ reduces in $\bigO(n^{p+1}p)$ time to solving an \EM\ instance with $\bigO(n^{p+1}p)$ vertices and edges for each cluster of the input's discretized sequence, where $p$ is the maximum number of distinct deadlines corresponding to a cluster.
\end{theorem}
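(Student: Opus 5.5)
The plan is to compose the reductions in Figure~\ref{fig:reductions_number_of_numbers}, using the cluster decomposition so that the exponent is governed by the per-cluster number of distinct deadlines $p$ rather than by the global one. Given a \twoV\ instance $D$ with $|D|=n$, first apply Corollary~\ref{cor:old_2V_to_PM_reduction} to obtain the \PM\ instance $(D,T)$ with $T=[2n]\setminus A$ in $\bigO(n)$ time. Then apply the self-reduction of Lemma~\ref{lem:PM_self_reduction}, again in $\bigO(n)$ time. This splits the instance into independent \PM\ instances $(D_C,T_C)$, one per cluster $C$ of the discretized sequence $A$. By Observation~\ref{obs:cluster}, $C$ is exactly the discretized sequence of $D_C$, so each subinstance is a genuine \PM\ instance whose first set has at most $p$ distinct numbers. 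The original instance is a yes-instance iff every cluster subinstance is.

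Next, for each cluster $C$ with $n_C=|C|$, apply Theorem~\ref{thrm:reduction_PM_to_EWPM}. This gives an \EWPMshort\ instance on a bipartite multigraph with $\bigO(n_C^2 p)$ edges and weights bounded by $\bigO(n_C^{p-1})$ (indeed $W\le n(n+1)^{p-1}$ for the target). Lemma~\ref{lem:multigraph_to_simple} turns this into a simple bipartite graph with $\bigO(n_C^2p)$ edges and the same weights. Theorem~\ref{thrm:EWPM_to_EM} then yields an \EM\ instance with $\bigO(n_C^2 p\cdot n_C^{p-1})=\bigO(n_C^{p+1}p)$ vertices and edges, in the same time bound. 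This is exactly the content of Theorem~\ref{thrm:PM_to_EM}, applied per cluster, so I would invoke it directly.

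The last step is the running-time accounting. The total time is $\bigO(n)+\sum_C \bigO(n_C^{p+1}p)$. Since $\sum_C n_C=n$ and $x\mapsto x^{p+1}$ is superadditive, this sum is at most $\bigO(n^{p+1}p)$. Each produced \EM\ instance also has size $\bigO(n_C^{p+1}p)\le \bigO(n^{p+1}p)$.

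The only point needing care, and the step I expect to be the main obstacle, is to make sure that $p$ in Theorem~\ref{thrm:reduction_PM_to_EWPM} refers to the subinstance's own distinct values. Then the per-cluster bound uses the cluster maximum $p$ and not the global number of numbers. This is immediate once we check that the self-reduction passes $D_C$, and not $D$, to each subproblem.
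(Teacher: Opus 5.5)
Your proposal is correct and follows the paper's argument: the paper obtains Theorem~\ref{thrm:2V_to_EM} by composing Corollary~\ref{cor:old_2V_to_PM_reduction}, the cluster self-reduction of Lemma~\ref{lem:PM_self_reduction}, and Theorem~\ref{thrm:PM_to_EM} applied to each cluster instance. Two details are yours rather than the paper's. You check explicitly that each subinstance $(D_C,T_C)$ has at most $p$ distinct numbers, and you use the superadditivity bound $\sum_C n_C^{p+1}\le n^{p+1}$ for the running time; both spell out accounting that the paper leaves implicit.
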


We obtain the following corollary through the well-known result by Mulmuley et al.~\cite{Vazirani_EM} for \EM\ (see Section~\ref{subsubsec:exact_matching}). 

\begin{corollary}\label{cor:2V_in_RP}
    \twoV\ is in $\mathrm{RP}$ when there is a constant amount of distinct deadlines corresponding to each cluster of the input's discretized sequence.
\end{corollary}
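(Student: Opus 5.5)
The plan is to assemble the corollary from the reduction chain already established, and then verify that the overall procedure meets the requirements of the class $\mathrm{RP}$. Given a \twoV\ instance $D$, we first apply Remark~\ref{remark:2V_deadline_bound} so that all deadlines are at most $2n$. We then reduce to \PM\ via Corollary~\ref{cor:old_2V_to_PM_reduction}. Next, we split the instance into one \PM\ instance per cluster using Lemma~\ref{lem:PM_self_reduction}; by Observation~\ref{obs:cluster}, each of these is a legitimate \PM\ instance on the deadlines corresponding to that cluster. By Theorem~\ref{thrm:2V_to_EM}, each cluster instance with $n_C$ deadlines and at most $p$ distinct values becomes an \EM\ instance with $\bigO(n_C^{p+1}p)$ vertices and edges. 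Since $p$ is a constant, the number of clusters is at most $n$, and $n_C\le n$, the whole reduction runs in polynomial time and yields polynomially many \EM\ instances of polynomial size.

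Next we invoke the Mulmuley--Vazirani--Vazirani algorithm~\cite{Vazirani_EM}. It decides \EM\ with one-sided error: it never accepts a no-instance, and it accepts a yes-instance with probability at least $1/2$. The algorithm for \twoV\ accepts if and only if every cluster's \EM\ instance is accepted. The \twoV\ instance is a yes-instance exactly when all cluster instances are yes-instances, by the equivalence in Lemma~\ref{lem:PM_self_reduction}.

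The only point needing care is the error amplification over the up to $n$ clusters, since a yes-instance requires all runs to succeed. We would amplify each \EM\ call by $\bigO(\log n)$ independent repetitions, accepting that call if any repetition accepts, so that each call fails with probability at most $1/(2n)$. A union bound then shows that a yes-instance is accepted with probability at least $1/2$. On a no-instance, some cluster is a no-instance, and its \EM\ call never accepts, so there are no false positives. The total running time remains polynomial, which places the problem in $\mathrm{RP}$. We also need that the weights in Theorem~\ref{thrm:reduction_PM_to_EWPM}, bounded by $\bigO(n^{p-1})$, are polynomial so that Theorem~\ref{thrm:EWPM_to_EM} applies; this holds precisely because $p$ is constant, and it is the step where the hypothesis is essential.
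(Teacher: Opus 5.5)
Your proposal is correct and follows the paper's route: the paper obtains the corollary directly from Theorem~\ref{thrm:2V_to_EM} together with the $\mathrm{RP}$ algorithm of Mulmuley, Vazirani and Vazirani for \EM. Your explicit amplification of each \EM\ call and the union bound over the at most $n$ cluster instances is a detail the paper leaves implicit, and you handle it correctly: a yes-instance passes all calls with probability at least $1/2$, and a no-instance is never accepted.
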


We thus make progress towards the open question in~\cite{kVisits} about the parameterization of \twoV\ by the number of numbers. Although we showed that \twoV\ admits the equivalent of a randomized XP algorithm parameterized by the number of numbers, it still remains open whether it admits a deterministic XP or FPT algorithm with the same parameter.

Note that we do not actually use the number of distinct deadlines of the whole input as parameter; due to Lemma~\ref{lem:PM_self_reduction}, it suffices to use the number of distinct deadlines corresponding to the same \emph{cluster}.

\begin{observation}
\label{obs:bipartite}
    The graphs obtained from Theorems~\ref{thrm:PM_to_EM} and~\ref{thrm:2V_to_EM} are bipartite, since the multigraph of Theorem~\ref{thrm:reduction_PM_to_EWPM} is bipartite, while Lemma~\ref{lem:multigraph_to_simple} and Theorem~\ref{thrm:EWPM_to_EM} both preserve this graph property (due to only replacing edges with paths of odd length). Thus, any algorithm for \EM\ in bipartite graphs is applicable to \twoV\ through Theorem~\ref{thrm:2V_to_EM}.
\end{observation}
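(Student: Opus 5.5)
The claim is that the bipartite structure survives the whole chain of reductions behind Theorems~\ref{thrm:PM_to_EM} and~\ref{thrm:2V_to_EM}. I would argue step by step along that chain, showing at each stage that the produced graph admits a proper 2-colouring.

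The first step is the multigraph $G=(U\cup V,E,w)$ from Theorem~\ref{thrm:reduction_PM_to_EWPM}. By construction every edge has the form $(u_j,v_k)$ with $u_j\in U$ (an element of $A$) and $v_k\in V$ (an element of $T$). No edge lies inside $U$ or inside $V$, so $(U,V)$ is a bipartition. Parallel edges do not affect this, since bipartiteness only concerns the endpoints of edges.

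The second step is Lemma~\ref{lem:multigraph_to_simple}. Given a 2-colouring of $G$, I would extend it to $G'$ as follows. Each edge $(x,y)$ with $x,y$ in different classes is replaced by a path $x,p,q,y$. Give $p$ the colour of $y$ and $q$ the colour of $x$; the path then alternates colours correctly. More generally, any graph obtained by subdividing edges of a bipartite graph into paths of odd length is bipartite. The reason is that every cycle in the new graph comes from a cycle in the old graph in which each edge is replaced by an odd number of edges. Such a cycle keeps its parity, so all cycles remain even.

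The third step, where I expect the main obstacle, is Theorem~\ref{thrm:EWPM_to_EM}. Here I would not reprove anything: that theorem explicitly states that a bipartite input yields a bipartite \EM\ instance, so I would simply invoke it. If one wants the mechanism, the reduction of~\cite{EWPM_to_EM_reduction,Maalouly_stacs_2022} replaces each weighted edge by a path of odd length whose red edges encode the weight. The odd-subdivision argument from the second step then applies verbatim.

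For Theorem~\ref{thrm:2V_to_EM}, the self-reduction of Lemma~\ref{lem:PM_self_reduction} produces one \PM\ instance per cluster. Each of these instances is pushed through the same chain, so every resulting graph is bipartite.

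Consequently, any algorithm for \EM\ that works on bipartite graphs can be plugged into Theorem~\ref{thrm:2V_to_EM}. The answer to the original \twoV\ instance is obtained by combining the answers over all clusters.
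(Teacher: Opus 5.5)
Your proposal is correct and follows the paper's route. The multigraph of Theorem~\ref{thrm:reduction_PM_to_EWPM} is bipartite by construction, and both Lemma~\ref{lem:multigraph_to_simple} and Theorem~\ref{thrm:EWPM_to_EM} preserve bipartiteness because they only replace edges with odd-length paths; applying this per cluster via Lemma~\ref{lem:PM_self_reduction} covers Theorem~\ref{thrm:2V_to_EM}, and your explicit 2-colouring and cycle-parity argument spell out details the paper leaves implicit.
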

Although we do not use Observation~\ref{obs:bipartite} in this work, since the algorithm of Mulmuley et al.~\cite{Vazirani_EM} works for general graphs, we include it as it may be useful in future research.

\section{Density thresholds of \kV\ }\label{sec:density}

In this section we study the density thresholds of the \kV\ problem, for various values of $k$. See Section~\ref{subsubsec:density} for a summary on the concept of density. We use the following terminology:
\begin{itemize}
    \item We refer to the maximum value of density up to which all instances of a problem admit a schedule as the \emph{lower density threshold} of the respective problem.
    \item We refer to the minimum value of density above which no instance of a problem admits a schedule as the \emph{upper density threshold} of the respective problem.
\end{itemize}
Recall that for the infinite version of the problem (i.e., \PWS) the lower density threshold is $5/6$~\cite{Kawamura_5/6_stoc} and the upper density threshold is $1$~\cite{Holte_Pinwheel}. Both of these thresholds are tight.

\subsection{A tight lower density threshold for \oneV\ }

\begin{lemma}\label{lem:lower_density_threshold_oneVisit}
    Every \oneV\ instance $D=\{d_1,\ldots,d_n\}$ with $\mathrm{Dens}(D)\leq 1$ admits a schedule. 
\end{lemma}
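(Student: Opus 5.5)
We schedule the tasks in non-decreasing order of deadlines (earliest deadline first): assume $d_1\le d_2\le\dots\le d_n$ and place task $i$ at position $i$. For \oneV, a schedule is just a permutation of $[n]$ in which task $i$ sits at a position at most $d_i$. It therefore suffices to show that $d_i\ge i$ for all $i\in[n]$ whenever $\mathrm{Dens}(D)\le 1$.

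Suppose instead that some index $i$ has $d_i<i$, i.e.\ $d_i\le i-1$. By sortedness, $d_1,\dots,d_i$ are all at most $d_i\le i-1$. Hence
\[
\mathrm{Dens}(D)\ \ge\ \sum_{j=1}^{i}\frac{1}{d_j}\ \ge\ \frac{i}{i-1}\ >\ 1,
\]
contradicting $\mathrm{Dens}(D)\le 1$. (The case $i=1$ is impossible, since $d_1\ge 1$.) So $d_i\ge i$ for every $i$, and the sorted order is a valid schedule.

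This is essentially the classical feasibility criterion for unit jobs with deadlines. I do not expect a real obstacle. The only points that need care are the edge case $i=1$ and noting that duplicate deadlines are handled by the argument unchanged. The schedule can also be computed in linear time, since the input is given sorted.
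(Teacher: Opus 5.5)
Your proof is correct and is essentially the paper's argument: schedule in sorted order, and if some $d_i<i$ then $d_1,\dots,d_i\le i-1$, so the density is at least $i/(i-1)>1$. Your explicit exclusion of $i=1$ is slightly cleaner than the paper, which states $j/(j-1)>1$ for all $j\in\mathbb{N}$ without ruling out $j=1$.
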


\begin{proof}
    We assume $d_1\leq \ldots\leq d_n$, as is standard. Every \oneV\ instance with no feasible schedule must contain a deadline $d_j,\ j \in [n]$, such that $d_j <j$; otherwise, the schedule $1,\ldots,n$ would be feasible.
    We now seek to minimize the density of such an instance. Since both adding more deadlines to $D$ and decreasing the value of a deadline only increases $\mathrm{Dens}(D)$, we set $d_j$ to be the largest element of~$D$, with value $d_j = j-1$. Additionally, the largest value that each $d_i,\ i \in [j-1]$, can take without violating the non-decreasing order of the deadlines is equal to $j-1$, i.e., the value of $d_j$. Hence, the density of a \oneV\ instance that does not admit a schedule cannot be smaller than
    \[
        \sum_{i=1}^j \frac{1}{d_i} = \sum_{i=1}^j \frac{1}{j-1} = \frac{j}{j-1}>1,\; \forall j \in \mathbb{N}.
    \]
    We infer that every \oneV\ instance $D$ with $\mathrm{Dens}(D) \leq 1$ admits a feasible schedule.
\end{proof}

\begin{lemma}\label{lem:density_threshold_tightness_oneVisit}
    For all $\varepsilon > 0$, there exists a \oneV\ instance $D$ with $1<\mathrm{Dens}(D)< 1+\varepsilon$ that admits no schedule.
\end{lemma}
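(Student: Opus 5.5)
The construction comes straight from the extremal analysis in the proof of Lemma~\ref{lem:lower_density_threshold_oneVisit}. That proof shows that the density-minimizing infeasible instance consists of $j$ copies of the deadline $j-1$, which has density $j/(j-1)$. I would therefore take, for a given $\varepsilon>0$, the instance $D=\{j-1,\ldots,j-1\}$ with $j$ copies, choosing $j\geq 2$ large enough that $j/(j-1)<1+\varepsilon$. Any $j>1+1/\varepsilon$ works, since $j/(j-1)=1+1/(j-1)$.

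The density bounds are immediate. We have $\mathrm{Dens}(D)=j/(j-1)=1+1/(j-1)$, which is strictly greater than $1$ and strictly less than $1+\varepsilon$ by the choice of $j$.

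Infeasibility follows from a pigeonhole argument. A \oneV\ schedule has length $n=j$ and visits each task exactly once, so the $j$ visits occupy all positions $1,\ldots,j$. Some task must therefore be visited at position $j$. That task's deadline is $j-1<j$, so its only visit does not occur within the first $d_i$ positions, as Definition~\ref{def:kV} requires. Hence no schedule exists.

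I expect no real obstacle. The only care needed is to ensure $j\geq 2$, so that the deadlines $j-1$ are positive integers, and to keep the inequality $1<\mathrm{Dens}(D)$ strict. Both are guaranteed by the choice $j>1+1/\varepsilon$ together with $j\geq 2$.
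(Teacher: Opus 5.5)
Your proposal is correct and matches the paper's proof: the paper uses the same instance of $n$ copies of $n-1$, rules out any schedule because $\max(D)<n$ (your pigeonhole step at position $j$ says the same thing), and notes that its density $n/(n-1)$ is arbitrarily close to $1$. Your explicit choice $j>1+1/\varepsilon$ with $j\geq 2$ simply makes the paper's ``for large $n$'' precise.
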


\begin{proof}
    The instance $D=\{n-1,\ldots,n-1\}$ (with $n=|D|$) trivially admits no \oneV\ schedule, since $\max(D) < n$. Its density is $\mathrm{Dens}(D)=n/(n-1)$, which is arbitrarily close to $1$ for large values of $n$.
\end{proof}

We obtain the following theorem from Lemmas~\ref{lem:lower_density_threshold_oneVisit},~\ref{lem:density_threshold_tightness_oneVisit}.

\begin{theorem}\label{thrm:lower_density_threshold_oneVisit}
    \oneV\ has a tight lower density threshold of $1$.
\end{theorem}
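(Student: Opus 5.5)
The statement follows by combining the two preceding lemmas, matched against the definition of the lower density threshold given at the opening of Section~\ref{sec:density}: the maximum density value $\delta$ such that every instance with density at most $\delta$ admits a schedule. I would verify two things: that every instance of density at most $1$ is schedulable, and that no value $\delta>1$ has this property.

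The first part is exactly Lemma~\ref{lem:lower_density_threshold_oneVisit}. For any $D$ with $\mathrm{Dens}(D)\le 1$, the sorted schedule $1,\ldots,n$ is feasible. Otherwise some $d_j<j$, and the instance would then have density at least $j/(j-1)>1$. So the lower density threshold of \oneV\ is at least $1$.

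For the reverse inequality I would fix an arbitrary $\delta>1$ and set $\varepsilon=\delta-1$. Lemma~\ref{lem:density_threshold_tightness_oneVisit} then supplies an unschedulable instance $D=\{n-1,\ldots,n-1\}$ with $1<\mathrm{Dens}(D)<1+\varepsilon=\delta$. This shows that $\delta$ is not a valid threshold. Hence the maximum is exactly $1$, and the threshold is tight in the sense that instances with density arbitrarily close to $1$ from above fail.

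There is no real obstacle. The only point needing care is the definitional one: the threshold must be attained as a maximum, which holds because the bound $\mathrm{Dens}(D)\le 1$ in Lemma~\ref{lem:lower_density_threshold_oneVisit} is non-strict. Explicitly, $n=\lceil 1/\varepsilon\rceil+2$ gives $n/(n-1)=1+1/(n-1)<1+\varepsilon$, since $n-1>1/\varepsilon$.
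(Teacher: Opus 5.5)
Your proposal is correct and follows the paper's route exactly. The paper obtains the theorem by combining Lemma~\ref{lem:lower_density_threshold_oneVisit} (the sorted schedule works whenever the density is at most $1$) with Lemma~\ref{lem:density_threshold_tightness_oneVisit} (the unschedulable instances $\{n-1,\ldots,n-1\}$ of density $n/(n-1)$). Your explicit choice $n=\lceil 1/\varepsilon\rceil+2$ simply makes the paper's ``large $n$'' quantitative.
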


\subsection{Bounds for the lower density threshold of \twoV}\label{subsec:density_twoV}

The following lemma is the key to our density bound for \twoV\ and we regard it as our main technical contribution in this section. We begin by taking it for granted and defer its proof to Section~\ref{subsubsec:claimproof}.

\begin{lemma}[Main Lemma]\label{lem:2V_claim}
    Let $D=\{d_1,\ldots,d_n\}$ be a \twoV\ instance with discretized sequence $A=\langle a_1,\ldots,a_n\rangle$, and define $T=[2n]\setminus A=\{t_1,\ldots,t_n\}$.
    If $\mathrm{Dens}(D)\leq \sqrt{2}-1/2$, then it holds that $|T \cap [d_i + a_i]| \geq i$, for all $i \in [n]$.
\end{lemma}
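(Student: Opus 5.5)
We argue by contraposition: assume some index $i$ violates the inequality, and show that $\mathrm{Dens}(D)>\sqrt{2}-1/2$. Fix the smallest such $i$ and write $x=d_i+a_i$. Since $A$ and $T$ partition $[2n]$, we have $|T\cap[x]| = x-|A\cap[x]|$ (if $x\ge 2n$ then $|T\cap[x]|=n\ge i$ and there is nothing to prove). So the violation says $|A\cap[x]|\geq x-i+1$. The positions $a_1,\ldots,a_i$ are all at most $a_i\le x$, so at least $x-2i+1$ further indices $j>i$ must satisfy $a_j\le x$. These are the ``surplus'' primary positions, packed into the window $(a_i,x]$, which has length $d_i$. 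The whole argument is to turn this crowding into a lower bound on density.

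First, the tasks $1,\ldots,i$ all have deadlines at most $d_i$, so they contribute at least $i/d_i$. Next, the discretized structure (Def.~\ref{def:disc}) gives $a_i\geq i$ and $a_i\le d_i$. Hence $x\le 2d_i$, and the number of surplus indices is at least $x-2i+1\geq a_i+d_i-2i+1$.

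The crux is bounding the deadlines of the surplus tasks from above. Here I would use clusters (Def.~\ref{def:cluster}, Observation~\ref{obs:cluster}). Let $J$ be the last index with $a_J\le x$. If $a_{J+1}>x+1$ (or $J=n$), then $d_J=a_J\le x$, so every surplus deadline is at most $x$. Otherwise the cluster continues past $x$, and I would replace $x$ by the end $x'$ of that cluster: all positions $x+1,\ldots,x'$ also lie in $A$. This preserves the violation, since $|T\cap[x']|=|T\cap[x]|<i$, while the surplus count grows by exactly $x'-x$ and now $d_J=a_J=x'$. After this normalization, the instance contains $i$ deadlines of size at most $d=d_i$ and at least $s=x'-2i+1$ further deadlines, each at most $x'$.

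I then minimize density over such configurations. Monotonicity lets us set the first $i$ deadlines equal to $d$. The remaining deadlines can be taken as large as allowed, but $\ell$ of them cannot all exceed what the discretization permits: roughly, the $\ell$-th largest surplus deadline is at most $x'$ and at least $a_i+\ell$. Without the second constraint, the surplus contributes at least $s/x'$. Normalizing $i=\alpha d$ with $\alpha\le 1$ and $x'=\beta d$ with $\beta\ge 1+\alpha$ yields a density lower bound of the form
\[
\alpha+\frac{\beta-2\alpha}{\beta},
\]
up to $O(1/d)$ corrections and to adjustments coming from $a_i\ge i$. Minimizing this two-variable expression over the feasible region, and in particular balancing the choice of $x'$ against $i$, should produce the quadratic optimum $\sqrt2-1/2$. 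The main obstacle I expect is exactly this step: getting the right upper bounds on the surplus deadlines, and the right feasible region for $(\alpha,\beta)$, so that the continuous optimum is exactly $\sqrt2-1/2$ rather than a weaker constant. I would first handle the fully normalized, cluster-extended case and verify the calculus there, and only then argue that discrete rounding only increases the density, so the inequality is strict.
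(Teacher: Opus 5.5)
Your bound $\mathrm{Dens}(D)\ge i/d_i+(x'-2i+1)/x'$ with $x'\ge d_i+a_i$ is valid, but the final optimization cannot give $\sqrt2-1/2$. Write $\alpha+(\beta-2\alpha)/\beta=1+\alpha-2\alpha/\beta$. This is increasing in $\beta$, so on your region $\beta\ge 1+\alpha$ it is minimized at $\beta=1+\alpha$, where it equals $\alpha-1+2/(1+\alpha)$. That function attains its minimum $2\sqrt2-2\approx 0.828$ at $\alpha=\sqrt2-1$, which is below $\sqrt2-1/2$ and even below $5/6$.

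The ``adjustments from $a_i\ge i$'' cannot fix this: $a_i\ge i$ is exactly what produced $\beta\ge 1+\alpha$. Lower bounds on surplus deadlines, such as ``at least $a_i+\ell$'', are useless for a lower bound on density. What your counting discards is \emph{where} the at most $i-1$ positions of $T\cap[x]$ lie. Your bound is nearly tight only when $a_i=i$ and all these positions lie in $(a_i,x]$. But $a_i=i$ means $[i]\subseteq A$, so the first cluster $[1,c]$ consists of $c$ tasks with deadlines at most $c$, and it alone already has density at least $1$.

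The missing idea is to combine the identity $a_i=i+|T\cap[a_i]|$ with the cluster $[s,c]$ that contains $a_i$, not only the last cluster.
\begin{itemize}
\item If $c\ge x$, then $T\cap[x]=T\cap[a_i]$, so the violation gives $a_i\le 2i-1$. Together with the first $i$ tasks, the $c-a_i$ later tasks of this cluster (deadlines at most $c$) give density at least $i/d_i+d_i/(d_i+a_i)\ge i/d_i+d_i/(d_i+2i-1)$. In your normalization this is the region $\beta\ge 1+2\alpha$, where $\alpha+1/(1+2\alpha)$ has minimum $\sqrt2-1/2$ at $\alpha=(\sqrt2-1)/2$. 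Strictness follows from $d_i/(d_i+2i-1)>d_i/(d_i+2i)$.
\item If $c<x$, then $d_i\le c$, all tasks of this cluster have deadlines at most $c<x$, and $g=|T\cap(a_i,x]|\ge 1$. This gives the bound $i/d_i+(c-a_i)/c+(x-c-g)/x$, which is concave in $c\in[d_i,x-g]$. Checking both endpoints under $|T\cap[a_i]|+g\le i-1$ yields at least $\min\{1,\ i/d_i+d_i/(d_i+2i-1)\}$.
\end{itemize}

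The paper reaches the same worst case by an exchange argument on a density-minimizing violating instance. It shows $T\cap[d_j+a_j]=[j-1]$, hence $a_j=2j-1$ and $A$ is the single cluster $[j,d_j+2j-1]$. This yields the two-level instance with density $j/d_j+d_j/(d_j+2j-1)$.
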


Note that, as usual, we assume that the elements in $D$, $A$ and $T$ are sorted in non-decreasing order. Additionally, by Remark~\ref{remark:2V_deadline_bound} we have $a_n\leq d_n \leq 2n$, which implies $|T|=n$.

\begin{lemma}\label{lem:lower_density_threshold_twoVisits}
    Every \twoV\ instance $D=\{d_1,\ldots,d_n\}$ with $\mathrm{Dens}(D)\leq \sqrt{2}-1/2$ admits a schedule. The respective schedule can be constructed in time $\bigO(n)$.
\end{lemma}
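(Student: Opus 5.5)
The plan is to take Lemma~\ref{lem:2V_claim} for granted and turn its Hall-type condition into an explicit greedy schedule. By Lemma~\ref{lem:old_primary_discon_prop} and Corollary~\ref{cor:old_2V_to_PM_reduction}, it suffices to solve the \PM\ instance $(D,T)$ with $T=[2n]\setminus A$. I will use the canonical pairing: task $i$ gets primary position $a_i$. This is valid because $a_i\le d_i$ by Definition~\ref{def:disc}. It remains to match each task $i$ to a secondary position $t\in T$ with $t\le d_i+a_i$. This gives a \PM\ solution consisting of the triplets $(d_i,a_i,t)$.

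The second stage is a bipartite matching problem between tasks and positions of $T$. Task $i$ may use any element of the prefix $T\cap[d_i+a_i]$, so each task's admissible set is an initial segment of the sorted $T$. The thresholds $c_i=d_i+a_i$ are non-decreasing in $i$, since both $d_i$ and $a_i$ are. Hence the admissible sets are nested: $N(1)\subseteq N(2)\subseteq\dots\subseteq N(n)$.

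For nested neighbourhoods, Hall's condition reduces to checking the sets $\{1,\ldots,i\}$. Any set of tasks $X$ with maximum index $i$ has neighbourhood $N(i)$ and at most $i$ elements. The required condition is therefore exactly $|T\cap[d_i+a_i]|\ge i$ for all $i$, which is what Lemma~\ref{lem:2V_claim} guarantees under $\mathrm{Dens}(D)\le\sqrt2-1/2$.

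Concretely, the matching will be the greedy rule $t_i\mapsto$ task $i$, i.e. the $i$-th smallest element of $T$ goes to task $i$. The lemma gives $t_i\le d_i+a_i$ directly, because at least $i$ elements of $T$ are at most $d_i+a_i$. This avoids invoking Hall's theorem at all.

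Next I check that the triplets $(d_i,a_i,t_i)$ really form a feasible \twoV\ schedule. Place the primary visit of task $i$ at $a_i$ and the secondary at $t_i$. The positions of $A$ and $T$ partition $[2n]$, using Remark~\ref{remark:2V_deadline_bound} so that $a_n\le 2n$ and $|T|=n$. The primary visit lies within the first $d_i$ positions. The secondary visit is either before the primary, or at most $d_i$ after it since $t_i\le d_i+a_i$. This matches Definition~\ref{def:2V}. Converting primary/secondary visits to the first/second visits of Definition~\ref{def:kV} is immediate, because the earlier of the two visits satisfies the first-deadline constraint and the gap is at most $d_i$.

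For the running time, the input is sorted, so $A$ is computed in $\bigO(n)$ by Definition~\ref{def:disc}. $T$ is obtained by a linear merge-style scan of $[2n]$ against $A$. The assignment $i\mapsto t_i$ is then written directly into an array of length $2n$.

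There is no real obstacle here, since all the difficulty sits in Lemma~\ref{lem:2V_claim}. The only point needing care is the monotonicity of $d_i+a_i$, which justifies the sorted greedy assignment.
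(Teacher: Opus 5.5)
Your proof is correct and matches the paper's: Lemma~\ref{lem:2V_claim} gives $t_i\le d_i+a_i$, and placing task $i$'s primary visit at $a_i$ and its secondary visit at $t_i$ is then feasible because $a_i\le d_i$. The detour through \PM\ and Hall's condition for nested neighbourhoods is harmless but unnecessary. As you note yourself, the explicit sorted assignment can be verified directly.
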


\begin{proof}
Let $D=\{d_1,\ldots,d_n\}$, $d_1 \leq d_2 \leq \cdots \leq d_n$, be a \twoV\ instance with $\mathrm{Dens}(D)\leq \sqrt{2}-1/2$ and let $A=\langle a_1, \ldots, a_n\rangle$ be the discretized sequence of $D$. Define $T=[2n]\setminus A=\{t_1,\ldots,t_n\}$.
By Lemma~\ref{lem:2V_claim}, we have $|T \cap [d_i + a_i]| \geq i$, for all $i \in [n]$. This is directly equivalent to

\begin{equation}\label{eq:claim}
    t_i \le d_i + a_i, \text{ for all } i \in [n].
\end{equation}

We construct a feasible \twoV\ schedule as follows. For each task $i\in [n]$, we place the primary visit of task $i$ in position $a_i$ and the secondary visit of task~$i$ in position $t_i$. By Definition~\ref{def:disc}, it holds that $d_i \geq a_i,\ \forall i\in [n]$, hence the suggested placement of primary visits is feasible. By Equation~\eqref{eq:claim}, the suggested placement of secondary visits is also feasible, given the aforementioned placement of primary visits.
\end{proof}

\begin{theorem}\label{thrm:lower_density_threshold_twoVisits}
    The lower density threshold of \twoV\ is at least $\sqrt{2}-1/2 \approx 0.9142$ and at most~$1$.
\end{theorem}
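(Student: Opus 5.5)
The theorem has two parts, a lower bound and an upper bound, and I would prove them separately.

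The lower bound is immediate from Lemma~\ref{lem:lower_density_threshold_twoVisits}. Every \twoV\ instance with $\mathrm{Dens}(D)\leq \sqrt{2}-1/2$ admits a schedule. By the definition of the lower density threshold (the maximum density up to which all instances are feasible), this threshold is therefore at least $\sqrt{2}-1/2$. The substantive work is contained in the Main Lemma (Lemma~\ref{lem:2V_claim}), whose proof is deferred, and I take it as given here.

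For the upper bound of $1$, I need, for every $\varepsilon>0$, an infeasible \twoV\ instance whose density lies in $(1,1+\varepsilon)$. The plan is to reuse the \oneV\ obstruction. Take $D=\{n-1,\ldots,n-1\}$ with $n$ copies, so that $\mathrm{Dens}(D)=n/(n-1)\to 1$.

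To show $D$ is infeasible, I argue about the first visits. In any \twoV\ schedule, each task's first visit must lie within its first $d_i$ positions. Equivalently, under Definition~\ref{def:2V}, whichever visit of task $i$ comes earlier lies in a position at most $d_i$: either it is the primary visit, which has position $\leq d_i$, or it is the secondary visit, which precedes the primary. Here every $d_i=n-1$, so the $n$ earliest visits of the $n$ distinct tasks would all have to occupy distinct positions in $[n-1]$. This is impossible by pigeonhole, so $D$ admits no schedule.

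Since such instances exist with density arbitrarily close to $1$ from above, no threshold larger than $1$ can hold, and the lower density threshold is at most $1$. There is no real obstacle in this theorem itself; the only care needed is applying the pigeonhole argument correctly under the primary/secondary formulation. The genuine difficulty lies in the deferred proof of Lemma~\ref{lem:2V_claim}.
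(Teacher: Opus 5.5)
Your proposal is correct and follows essentially the same route as the paper: the lower bound comes from Lemma~\ref{lem:lower_density_threshold_twoVisits}, and the upper bound uses the \oneV\ obstruction $\{n-1,\ldots,n-1\}$. The only difference is that you check the pigeonhole argument directly under the primary/secondary formulation. The paper instead notes that any \twoV\ schedule yields a \oneV\ schedule and invokes Theorem~\ref{thrm:lower_density_threshold_oneVisit}.
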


\begin{proof}
    The former part follows from Lemma~\ref{lem:lower_density_threshold_twoVisits}. Since every instance that admits a \twoV\ schedule also (trivially) admits a \oneV\ schedule, the lower density threshold of \twoV\ cannot be larger than $1$, due to Theorem~\ref{thrm:lower_density_threshold_oneVisit}.
\end{proof}

\subsubsection{Proof of Lemma~\ref{lem:2V_claim}}\label{subsubsec:claimproof}

We will now prove Lemma~\ref{lem:2V_claim}. We split the proof into multiple auxiliary lemmas, as it is quite technical.
Consider the following property, which is the negation of the property mentioned in Lemma~\ref{lem:2V_claim}.

\begin{equation}\label{eq:claim_violation}
    \exists i\in[n]: |T \cap [d_i + a_i]| \leq i-1.
\end{equation}
Let $D^\star = \{ d_1^\star,\ldots d_n^\star \}$ be an instance satisfying~\eqref{eq:claim_violation} with the minimum possible density.
Proving $\mathrm{Dens}(D^\star) > \sqrt{2}-1/2$ directly implies Lemma~\ref{lem:2V_claim}. We will construct $D^\star$ and compute its density.

Let $A^\star= \langle a_1^\star,\ldots,a_n^\star \rangle$ be the discretized sequence of $D^\star$ and $T^\star = [2n]\setminus A^\star$. Let $j \in [n]$ be the smallest index for which $|T^\star \cap [d_j^\star + a_j^\star]| \leq j-1$. We sequentially restrict the form of $D^\star$ as follows.

\begin{lemma}
    \label{lem:step1}
    $|T^\star \cap [d_j^\star + a_j^\star]| = j-1$.
\end{lemma}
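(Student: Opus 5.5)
The plan is to use only the minimality of the index $j$ together with monotonicity of the sequences involved; the minimality of $\mathrm{Dens}(D^\star)$ is not needed here. By the choice of $j$ we already have $|T^\star \cap [d_j^\star + a_j^\star]| \leq j-1$, so it remains to prove the reverse inequality $|T^\star \cap [d_j^\star + a_j^\star]| \geq j-1$.

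\emph{Case $j=1$.} The lower bound is $0$, which holds trivially for the cardinality of any set.

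\emph{Case $j\geq 2$.} Since $j$ is the smallest violating index, the index $j-1$ does not satisfy~\eqref{eq:claim_violation}. Hence
\[
|T^\star \cap [d_{j-1}^\star + a_{j-1}^\star]| \geq j-1.
\]
Next we compare the two intervals. The deadlines are sorted non-decreasingly, so $d_{j-1}^\star \leq d_j^\star$. By Definition~\ref{def:disc}, $a_{j-1}^\star \leq a_j^\star - 1 < a_j^\star$, so the discretized sequence is strictly increasing. Therefore $d_{j-1}^\star + a_{j-1}^\star \leq d_j^\star + a_j^\star$, which gives $[d_{j-1}^\star + a_{j-1}^\star] \subseteq [d_j^\star + a_j^\star]$. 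Intersecting both intervals with $T^\star$ preserves this inclusion, so
\[
|T^\star \cap [d_j^\star + a_j^\star]| \geq |T^\star \cap [d_{j-1}^\star + a_{j-1}^\star]| \geq j-1.
\]
Combined with the upper bound, this yields equality.

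There is no real obstacle. The only point to check is the monotonicity of $d_i^\star + a_i^\star$ in $i$, and it follows directly from the sortedness of $D^\star$ and Definition~\ref{def:disc}.
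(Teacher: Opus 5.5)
Your proof is correct, but it takes a different route from the paper's. The paper argues by contradiction through the extremality of $D^\star$. If the count were below $j-1$, it deletes some deadline $d_i^\star\le d_j^\star$, notes that this moves one number of $[d_j^\star+a_j^\star]$ from $A^\star$ into $T^\star$, and concludes that \eqref{eq:claim_violation} survives at smaller density.

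You use only the minimality of $j$. Because $i\mapsto d_i^\star+a_i^\star$ is strictly increasing, the count $|T^\star\cap[d_i^\star+a_i^\star]|$ is non-decreasing in $i$, while the threshold $i-1$ grows by exactly one per step. So at the first violating index the count is squeezed to exactly $j-1$. This is a discrete intermediate-value argument.

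What your route buys:
\begin{itemize}
\item It is more elementary.
\item It is more general: it holds for the least violating index of any instance, with no density minimization.
\item It avoids tracking how deleting a deadline changes $n$, the indexing, $A^\star$ and $T^\star=[2n]\setminus A^\star$.
\end{itemize}

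That tracking is delicate in the paper's version. After deleting a deadline below position $j$, the task with deadline $d_j^\star$ becomes the $(j-1)$-th task. Its count would then have to stay at most $j-2$, yet it has just increased by one. So the inference that \eqref{eq:claim_violation} is preserved is not immediate. Your argument shows that the case the paper tries to exclude cannot arise once $j$ is chosen minimal.

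The paper's route is mainly uniform with the density-perturbation arguments of the later steps (e.g.\ Lemmas~\ref{lem:step2} and~\ref{lem:step4}), where the extremality of $D^\star$ is genuinely needed.
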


\begin{proof}
    Suppose $|T^\star \cap [d_j^\star + a_j^\star]| < j-1$. This can only happen for $j>1$, hence there exists some $d_i^\star \leq d_j^\star$. Suppose we remove $d_i^\star$ from the input; since every number not in $A$ is in $T$ by definition and $a_i^\star \leq d_i^\star$, this causes some number no larger than $d_i^\star$ to be removed from $A^\star$, i.e., be added to~$T^\star$. This only increases $|T^\star \cap [d_j^\star + a_j^\star]|$ by one, maintaining property~\eqref{eq:claim_violation}. Since we removed a deadline from the input, we obtained an instance satisfying~\eqref{eq:claim_violation} with smaller density than that of~$D^\star$, contradicting our assumption for $D^\star$.
\end{proof}

\begin{lemma}
\label{lem:step2}
    The $j-1$ numbers in $|T^\star \cap [d_j^\star + a_j^\star]|$ are the smallest possible, i.e. $T^\star \cap [d_j^\star + a_j^\star] = [j-1]$. Consequently, $j,\ldots, d_j^\star+a_j^\star \in A^\star$.
\end{lemma}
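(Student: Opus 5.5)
The plan is an exchange argument against the minimality of $\mathrm{Dens}(D^\star)$, in the same spirit as Lemma~\ref{lem:step1}. The second sentence of the statement follows from the first. Since $A^\star\cup T^\star=[2n]$ and $d_j^\star+a_j^\star\le 2n$, the equality $T^\star\cap[d_j^\star+a_j^\star]=[j-1]$ gives $[j,\,d_j^\star+a_j^\star]\subseteq A^\star$. So it suffices to prove the first sentence.

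By Lemma~\ref{lem:step1}, $|T^\star\cap[d_j^\star+a_j^\star]|=j-1$. Suppose this set differs from $[j-1]$. Then there is some $x\in[j-1]\cap A^\star$ and some $t\in T^\star$ with $j\le t\le d_j^\star+a_j^\star$. I would fix the choice as follows: take $x$ to be the \emph{largest} element of $[j-1]\cap A^\star$ and $t$ to be the \emph{smallest} element of $T^\star\cap[j,\,d_j^\star+a_j^\star]$.

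Since the $a_i^\star$ are strictly increasing with $a_i^\star\ge 1$, we have $a_i^\star\ge i$. Hence $x=a_i^\star$ for some index $i\le x\le j-1<j$. The idea is to increase the deadline $d_i^\star$ so that the primary position $x$ is released and the slot $t$ is occupied instead. This trades one element of $T^\star$ for another inside $[d_j^\star+a_j^\star]$, so the count stays at $j-1$. At the same time the density strictly decreases, because a deadline was increased, and this contradicts the choice of $D^\star$. Concretely, I would replace $d_i^\star$ by a value $d'$ with $d_i^\star<d'$. By Definition~\ref{def:disc}, choosing $d'$ appropriately near $t$ should make the new discretized sequence equal to $(A^\star\setminus\{x\})\cup\{t\}$. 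This relies on the maximality of $x$ and the minimality of $t$: everything strictly between $x$ and $t$ is occupied by $A^\star$, namely $[x+1,j-1]\subseteq T^\star$ is impossible by the choice of $x$, except for the positions in $T^\star\cap[j-1]$, which remain untouched.

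The main obstacle is controlling the side effects of this change. First, $d'$ may exceed some deadlines $d_k^\star$ with $i<k$, so the sorted order and the indexing shift. In particular, the index of the former task $j$ may drop to $j-1$, or the value $a_j^\star$ may change. I would handle this by tracking the task originally indexed $j$. Its new index $j'\le j$ and its new $a$-value $a'$ should satisfy $a'\ge a_j^\star$, since positions only move rightward. The set $T\cap[d_j^\star+a']$ then gains $x$ and loses $t$, and possibly gains further elements if $a'>a_j^\star$. So I must verify that $|T'\cap[d_j^\star+a']|\le j'-1$ still holds. If this fails, I would fall back to choosing $d'$ minimal, so that it leaves $a_j^\star$ unchanged, and treat separately the case $d'>d_j^\star$. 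In that case $t>d_j^\star$, and I would instead raise $d_j^\star$ itself, which can only enlarge the interval while keeping the violation. If the strict decrease in density is hard to guarantee in some degenerate case, I would instead choose $D^\star$ among minimum-density violators to be lexicographically extremal, for example with the largest $\sum_i d_i^\star$, and contradict that choice.
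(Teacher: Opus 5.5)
\textbf{There is a genuine gap: the case $t>d_j^\star$.} Your exchange is essentially the paper's move. The paper deletes $\min D^\star$ and inserts a new deadline $\ell\in T^\star\cap[j,d_j^\star+a_j^\star]$; you raise $d_i^\star$ to $d'=t$. The move is sound when $t<d_j^\star$, for these reasons:
\begin{itemize}
\item Since $t\notin A^\star$, we have $t\notin D^\star$ and $t\notin(a_i^\star,d_i^\star]$. Hence $t>d_i^\star$, and the density drops.
\item Deleting task $i$ frees a single slot $y\le x$. This need not be $x$ itself, because lower tasks of the same cluster may shift up, but $y\in[j-1]$ is all you need.
\item The new task occupies exactly $t$, and $a_j^\star$ is unchanged.
\item Task $j$ is back at index $j$, and the count stays $j-1$.
\end{itemize}
The case $t>d_j^\star$, i.e.\ $[j,d_j^\star]\subseteq A^\star$, is one you flag but do not close. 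There the modified task overtakes task $j$. Task $j$ then sits at index $j-1$ while $|T\cap[d_j^\star+a_j^\star]|$ is still $j-1$, so the violation at task $j$ is lost, and nothing forces another index to violate. Your fallback does not repair this. Raising $d_j^\star$ re-runs the discretization, which can free a lower slot into $T$. Enlarging $[d_j^\star+a_j^\star]$ can only add elements of $T$, which works against the violation rather than preserving it.

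\textbf{This case really occurs.} Take $D=\{2,2,4\}$. Then $A=\langle 1,2,4\rangle$ and $T=\{3,5,6\}$, with $|T\cap[3]|=1$ and $|T\cap[4]|=1$. So $j=2$ and $T\cap[4]=\{3\}\neq[1]$, with $x=1$ and $t=3>d_2=2$. Raising either copy of $2$ to any admissible value in $\{3,\dots,6\}$ yields $\{2,3,4\}$, $\{2,4,4\}$, $\{2,4,5\}$ or $\{2,4,6\}$, and none of these violates. For example, $\{2,3,4\}$ has $A=\langle 2,3,4\rangle$ and $T=\{1,5,6\}$, with counts $1,3,3$ against bounds $0,1,2$. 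Deleting a deadline fails as well. Yet a cheaper violator of the required shape exists: $\{2,3,3\}$, with density $7/6<5/4$ and $j=1$. So no single-deadline exchange of this kind can settle the case $t>d_j^\star$. The paper's one-line argument has the same hole. It infers that the violation survives from the unchanged cardinality of $A^\star\cap[d_j^\star+a_j^\star]$, which overlooks the re-indexing, and it fails on this same instance ($\{2,2,4\}\mapsto\{2,3,4\}$). A correct proof needs a genuinely different argument for $t>d_j^\star$. One possibility is a transformation that changes several deadlines and lowers the violating index, as in the passage from $\{2,2,4\}$ to $\{2,3,3\}$.
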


\begin{proof}
    Assume that there exists some $\mu \in [j-1]$ with $\mu\notin T^\star$. This implies the existence of an $\ell \in [j ,\ d_j^\star + a_j^\star]$ with $\ell \in T^\star$, since $|T^\star \cap [d_j^\star + a_j^\star]|=j-1$ by Lemma~\ref{lem:step1}. We remove the minimum deadline from the instance and include a new deadline~$\ell$. The new instance still satisfies~\eqref{eq:claim_violation}, since the cardinality of $A^\star \cap [d_j^\star + a_j^\star]$ (and thus also $T^\star \cap [d_j^\star + a_j^\star]$) remains unchanged. Since the deadline removed from the input is strictly smaller than $\ell$,\footnote{Otherwise it would be $\min(D^\star) \geq \ell$ and $\mu \in A^\star$ with $\mu < \ell$, implying $\ell \in A^\star$ by Def.~\ref{def:disc}, which is a contradiction.} the density decreases, leading to a contradiction.
\end{proof}

\begin{lemma}\label{lem:step3}
    $a_j^\star=2j-1$ and $d_n^\star\geq d_j^\star + 2j -1$.
\end{lemma}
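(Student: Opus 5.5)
The plan is to read off both claims from the structure given by Lemma~\ref{lem:step2}, namely $T^\star \cap [d_j^\star + a_j^\star] = [j-1]$ and $[j,\ d_j^\star + a_j^\star] \subseteq A^\star$. We combine this with two basic facts about the discretized sequence. First, by Definition~\ref{def:disc}, $a_1^\star < a_2^\star < \cdots < a_n^\star$ is strictly increasing. Second, $a_i^\star \le d_i^\star$ for every $i$.

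\emph{Step 1 ($a_j^\star = 2j-1$).} By strict monotonicity, the elements of $A^\star$ lying in $[a_j^\star]$ are exactly $a_1^\star,\ldots,a_j^\star$, so $|A^\star \cap [a_j^\star]| = j$. Since $a_j^\star \le d_j^\star + a_j^\star$, Lemma~\ref{lem:step2} tells us how $[a_j^\star]$ splits. The numbers $1,\ldots,j-1$ lie in $T^\star$, hence not in $A^\star$. Every number in $[j, a_j^\star]$ lies in $A^\star$. It follows that $A^\star \cap [a_j^\star] = [j, a_j^\star]$, which has $a_j^\star - j + 1$ elements; this needs $a_j^\star \ge j$, which holds since $a_j^\star$ is the $j$-th smallest positive integer in a strictly increasing sequence. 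Equating the two counts gives $a_j^\star - j + 1 = j$, that is, $a_j^\star = 2j-1$.

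\emph{Step 2 ($d_n^\star \ge d_j^\star + 2j - 1$).} By Lemma~\ref{lem:step2}, the number $d_j^\star + a_j^\star = d_j^\star + 2j - 1$ belongs to $A^\star$. This number is a genuine position, so some index $k$ satisfies $a_k^\star = d_j^\star + 2j - 1$. Then
\[
d_n^\star \ge d_k^\star \ge a_k^\star = d_j^\star + 2j - 1,
\]
using that $D^\star$ is sorted and that $a_k^\star \le d_k^\star$.

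\emph{Main obstacle.} The only point needing care is confirming that $d_j^\star + a_j^\star \le 2n$, so that the containment in Lemma~\ref{lem:step2} concerns positions that actually exist in $A^\star \subseteq [2n]$ rather than being vacuous. I would argue by contradiction. If $d_j^\star + a_j^\star > 2n$, then $T^\star \cap [d_j^\star + a_j^\star] = T^\star$. Because $a_n^\star \le d_n^\star \le 2n$ (Remark~\ref{remark:2V_deadline_bound}), the set $T^\star$ has exactly $n$ elements. So $|T^\star \cap [d_j^\star + a_j^\star]| = n \ge j > j-1$, contradicting Lemma~\ref{lem:step1}. With this in hand, both steps are straightforward counting.
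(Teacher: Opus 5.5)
Your proof is correct and follows the paper's argument. The paper likewise reads off $a_j^\star=2j-1$ as the $j$-th element of $A^\star$ from $[j-1]\cap A^\star=\emptyset$ and $[j,\,d_j^\star+a_j^\star]\subseteq A^\star$, and then concludes $d_n^\star\ge a_n^\star\ge d_j^\star+a_j^\star$; your index $k$ with $a_k^\star=d_j^\star+2j-1$ is the same step. Your explicit count and your check that $d_j^\star+a_j^\star\le 2n$ only spell out what the paper leaves implicit, and the latter is really already contained in Lemma~\ref{lem:step2}.
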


\begin{proof}
    By Lemma~\ref{lem:step2} we have $1,\ldots,j-1 \notin A^\star$ and $j,\ldots,d_j^\star+a_j^\star \in A^\star$, implying that the $j$-th element of $A^\star$ is $a_j^\star=2j-1$. Since $d_j^\star+a_j^\star\in A^\star$, we obtain $d_n^\star\geq a_n^\star \geq d_j^\star+a_j^\star =d_j^\star + 2j -1$.
\end{proof}

\begin{lemma}\label{lem:step4}
    $A^\star=\langle j,\ldots,d_n^\star \rangle$.
\end{lemma}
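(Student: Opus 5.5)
The plan is to combine Lemma~\ref{lem:step2} with one more minimality argument. Lemma~\ref{lem:step2} gives $1,\ldots,j-1\notin A^\star$ and $j,\ldots,d_j^\star+a_j^\star\in A^\star$. By Definition~\ref{def:disc}, $a_n^\star=d_n^\star$, and every element of $A^\star$ is at most $a_n^\star$. So it suffices to show that $A^\star$ has no \emph{gap}, i.e.\ no integer $g$ with $d_j^\star+a_j^\star<g<a_n^\star$ and $g\notin A^\star$. If there is no gap, the elements of $A^\star$ are exactly the consecutive integers $j,\ldots,d_n^\star$, and the lemma follows.

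Suppose such a gap $g$ exists. I would remove the largest deadline $d_n^\star$ and show that the resulting instance $D'$ still satisfies~\eqref{eq:claim_violation}. Since $D'$ has strictly smaller density, this contradicts the choice of $D^\star$. The key point is how the discretized sequence changes.

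\begin{itemize}
\item Recomputing it from the top via $a_i=\min\{a_{i+1}-1,d_i\}$, removing $d_n^\star$ only affects entries of the cluster (Definition~\ref{def:cluster}) containing $a_n^\star$. Those entries may increase, and the top one disappears.
\item Below that cluster the recursion already takes the value $d_i^\star$ rather than $a_{i+1}^\star-1$. More concretely, at the first index $i$ below the top cluster we have $a_i^\star<a_{i+1}^\star-1$, so $a_i^\star=d_i^\star$. This stays true after the entries above can only grow. Hence all entries below the top cluster are unchanged (compare Observation~\ref{obs:cluster}).
\item The gap $g$ separates $[d_j^\star+a_j^\star]$ from the top cluster, so $A'\cap[d_j^\star+a_j^\star]=A^\star\cap[d_j^\star+a_j^\star]$. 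The index $j$ and the values $d_j^\star,a_j^\star$ also stay the same, because $j<n$ (indeed $a_j^\star\le d_j^\star+a_j^\star<g<a_n^\star$).
\end{itemize}

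It remains to check that $T'=[2(n-1)]\setminus A'$ still contains exactly $j-1$ elements of $[d_j^\star+a_j^\star]$. This needs $d_j^\star+a_j^\star\le 2n-2$. It follows from $d_j^\star+a_j^\star<g<a_n^\star\le d_n^\star\le 2n$ (Remark~\ref{remark:2V_deadline_bound}), which gives $d_j^\star+a_j^\star\le 2n-2$. Hence $|T'\cap[d_j+a_j]|=j-1$ for the same index $j$, so~\eqref{eq:claim_violation} holds for $D'$, which has density $\mathrm{Dens}(D^\star)-1/d_n^\star$. This is a contradiction.

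The step I expect to need the most care is the claim that removing $d_n^\star$ leaves every discretized entry below the top cluster unchanged. I would argue it directly from the recursive definition by downward induction, as sketched above, rather than by citing Lemma~\ref{lem:PM_self_reduction}. A minor side check is that $D'$ is a legitimate instance; by Remark~\ref{remark:2V_deadline_bound} this is harmless, since deadlines exceeding the new bound never expire and can only lower the density further if dropped.
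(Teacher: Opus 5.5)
Your proof is correct, but it takes a different route from the paper's. The paper shows that $A^\star$ is a single cluster. It deletes, in one go, all deadlines corresponding to clusters other than the cluster $C\supseteq[j,d_j^\star+a_j^\star]$, and cites Observation~\ref{obs:cluster} to conclude that the remaining deadlines have discretized sequence exactly $C$.

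You delete only $d_n^\star$. You then show from the recursion in Definition~\ref{def:disc} that only the top cluster moves: its entries can only grow, and at the first index below it $a_i^\star=d_i^\star<a_{i+1}^\star-1$, which freezes everything underneath. Finally you recount $T'=[2n-2]\setminus A'$ explicitly using $d_j^\star+a_j^\star\le 2n-2$.

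Your version is more elementary. It also makes explicit something the paper passes over: deleting deadlines shrinks $n$, and so changes $T=[2n]\setminus A$. The paper's wholesale deletion buys something in return. The reduced instance is automatically normalized as in Remark~\ref{remark:2V_deadline_bound}, because its discretized sequence is $C=\langle j,\ldots,\max C\rangle$ and $\max C\le 2|C|$, which follows from $\max C\ge d_j^\star+a_j^\star\ge 4j-2$.

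That normalization is the one place where your write-up is too quick. Suppose competitors must satisfy Remark~\ref{remark:2V_deadline_bound} and $d_{n-1}^\star>2n-2$. Then dropping the offending deadlines lowers $n$ again. ``The density only goes down'' does not by itself show that~\eqref{eq:claim_violation} survives. Your bound $d_j^\star+a_j^\star\le 2n-2$ came from the gap, and it does not immediately give the corresponding bound for the new, smaller $n''$.

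The repair is short. Drop the offending deadlines one at a time, largest first. The deadlines corresponding to $C$ are at most $\max C\le 2|C|\le 2n''$, so they are never dropped. Each dropped deadline is therefore the current maximum of a top cluster other than $C$, so your freezing argument applies at every step. Moreover $d_j^\star+a_j^\star\le\max C\le 2n''$ keeps $|T''\cap[d_j^\star+a_j^\star]|=j-1$ at the same index $j$. Alternatively, delete everything above the gap at once, which is exactly the paper's argument.
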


\begin{proof}
    Since $j\in A^\star$ by Lemma~\ref{lem:step2} and $d_n^\star \in A^\star$ by Definition~\ref{def:disc}, it suffices to show that $A^\star$ consists of a single cluster (see Def.~\ref{def:cluster}). We know that all elements in $[j,\ d_j^\star+a_j^\star]$ are in the same cluster~$C$. By Observation~\ref{obs:cluster}, the discretized sequence of the deadlines of the tasks corresponding to a cluster is the cluster itself. Hence, if there were clusters other than $C$ in~$A^\star$, we could remove from the input the deadlines corresponding to those clusters, thus decreasing the density of the input without falsifying $T^\star \cap [d_j^\star + a_j^\star] = [j-1]$ and $j,\ldots,d_j^\star+a_j^\star \in A^\star$ (due to Obs.~\ref{obs:cluster}). Since these directly imply property~\eqref{eq:claim_violation}, we have reached a contradiction. Therefore, we have $A^\star=C=\langle j,\ldots,d_n^\star \rangle$.
\end{proof}

    Lemma~\ref{lem:step4} implies $n=d_n^\star-j+1$. For $D^\star$ to be density-minimizing, it must hold that:
    \begin{itemize}
        \item The first $j$ deadlines are equal to $d_j^\star$.
        \item The other $d_n^\star-2j+1$ deadlines are equal to $d_n^\star$.
    \end{itemize}
    If either of the above does not hold, replacing the deadlines that do not follow these rules with the ones suggested by them can only decrease the density, leading to a contradiction. 
    
    Observe that the density of the $d_n^\star-2j+1$ largest deadlines is $\frac{d_n^\star-2j+1}{d_n^\star}$, which is increasing with respect to $d_n^\star$. Combining this with Lemma~\ref{lem:step3}, we obtain $d_n^\star=d_j^\star+2j-1$ and thus $n=j+d_j^\star$. Hence, $D^\star$ is the two-level multiset
    \begin{equation*}
    D^\star=\Bigl\{\ \underbrace{d_j^\star,\ldots,d_j^\star}_{j\ \text{copies}}\ ,
    \ \underbrace{d_j^\star+2j-1,\ldots,d_j^\star+2j-1}_{d_j^\star\ \text{copies}}\ \Bigr\},
    \qquad d_j^\star\ge a_j^\star = 2j-1.
\end{equation*}
The density of $D^\star$ is
\[
\mathrm{Dens}(D^\star) 
 = \frac{j}{d_j^\star} + \frac{d_j^\star}{d_j^\star+2j-1}.
\]

This function is minimized when $j = \left[\left(\sqrt{2}-1\right)/2\right] \cdot d_j^\star +1/2$ and $d_j^\star \to \infty$, in which case its infimum equals $\sqrt{2}-1/2$ (see Appendix~\ref{appendix:min_density} for a proof and Figure~\ref{fig:density_function} for a plot of the function). Thus, we obtain $\mathrm{Dens}(D^\star) > \sqrt{2}-1/2$, which proves Lemma~\ref{lem:2V_claim}.

\begin{figure}[ht]
     \centering
     \begin{subfigure}{0.7\textwidth}
         \centering
         \includegraphics[width=\linewidth]{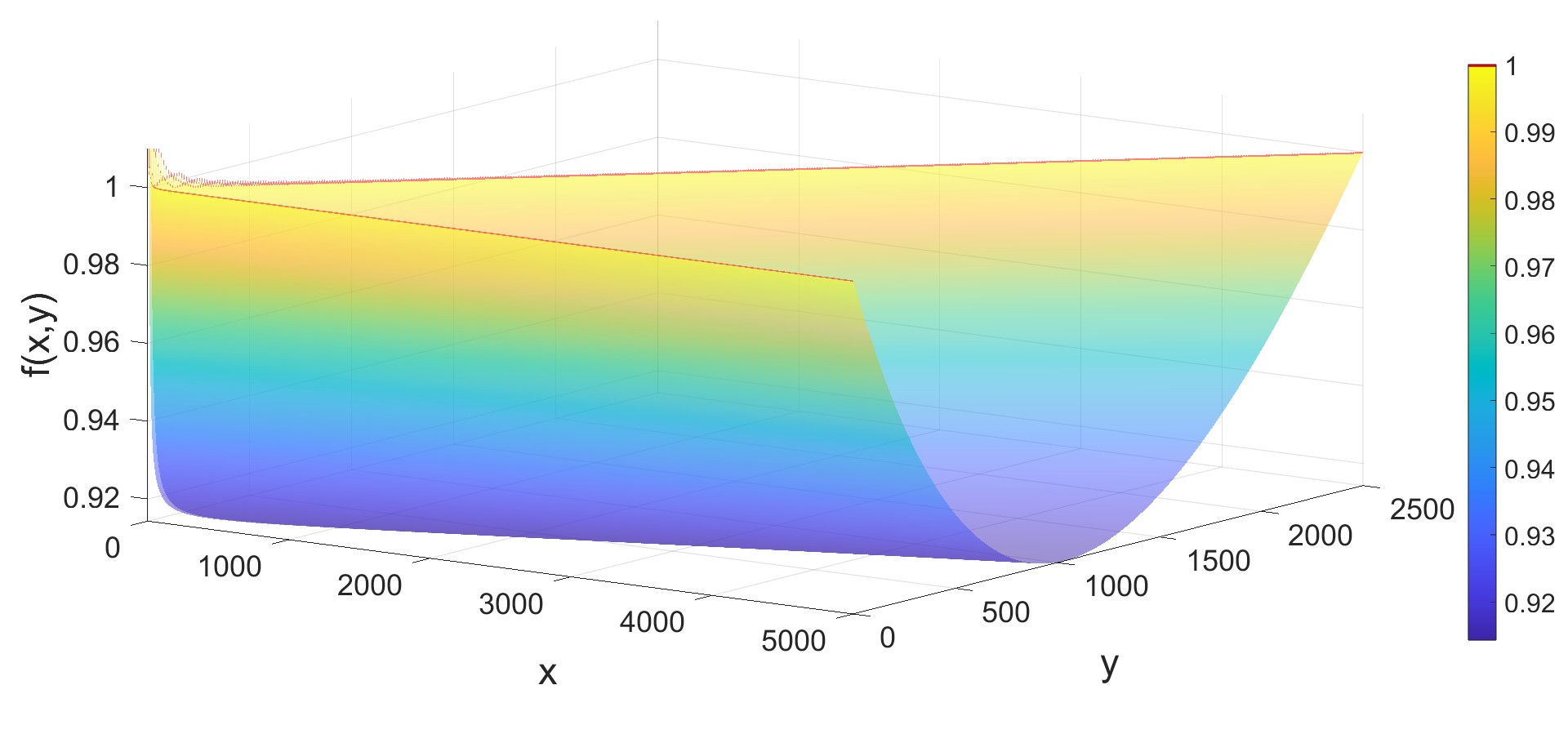}
         \label{fig:2b}
     \end{subfigure}
     \caption{The density function from the proof of Lemma~\ref{lem:2V_claim}: $f(x,y)=\frac{y}{x}+\frac{x}{x+2y-1}$,
subject to $y\ge 1$ and $x\ge 2y-1$. Its infimum is achieved for $y = \frac{\sqrt{2}-1}{2}\cdot x+\frac{1}{2}$ and $x \to \infty$ and is equal to $\sqrt{2}-1/2\approx 0.9142$. See Appendix~\ref{appendix:min_density} for a formal proof.}
     \label{fig:density_function}
\end{figure}

\subsection{Bounds for the lower density threshold of \kV}\label{subsec:upper_bound}

In the following we prove a lemma that connects \kV\ to \PWS\ and use it to obtain an upper bound for the lower density threshold of the former.

Lemma~\ref{lem:finite_to_infinite} is a straightforward modification of the PSPACE-membership proof for \PWS\ by Holte et al.~\cite{Holte_Pinwheel}. We denote the sub-schedule of a schedule~$S$ consisting of the positions from $p$ up to $q\geq p$ as $S[p,\, q]$.

\begin{lemma}[Cyclic schedule]\label{lem:finite_to_infinite}
    Let $S$ be a feasible schedule for a \kV\ instance $D=\{d_1,\ldots,d_n\}$ and let $m=\prod_i d_i$. If for all $i\in [n]$ the $k$-th visit of task $i$ occurs no earlier than position $m+1$, then there exist positions $p,\, p' \in [m]$, $p\leq p'$, such that $S[p,\, p']$ can induce a feasible schedule for the respective \PWS\ instance by infinitely repeating itself.
\end{lemma}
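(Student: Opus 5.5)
The plan is the classical pigeonhole argument on "states" from Holte et al.~\cite{Holte_Pinwheel}, restricted to the prefix $S[1,\,m]$ where every task still has visits to come. For each position $q\in[0,m]$ (with $q=0$ meaning "before the schedule starts"), I would define the state $\sigma(q)=(c_1(q),\ldots,c_n(q))$, where $c_i(q)\in[0,d_i-1]$ is the number of positions elapsed since the last visit of task $i$ up to and including position $q$. If task $i$ has not been visited yet, $c_i(q)=q$. Feasibility of $S$ as a \kV\ schedule guarantees $c_i(q)\le d_i-1$ for every $q\le m$.

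To see this, suppose $c_i(q)$ reached $d_i$ at some $q\le m$. Then $d_i$ consecutive positions contain no visit of $i$. By hypothesis the $k$-th visit of $i$ occurs after position $m\ge q$, so task $i$ still needs a later visit. That visit would then lie more than $d_i$ positions after the previous one (or after the start), contradicting feasibility. This is exactly where the assumption that the $k$-th visit occurs no earlier than $m+1$ is used: it prevents a task from legitimately "finishing" and thereafter going unvisited inside the window. The number of possible states is therefore at most $\prod_i d_i=m$.

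Among the $m+1$ states $\sigma(0),\ldots,\sigma(m)$, pigeonhole gives indices $0\le q_1<q_2\le m$ with $\sigma(q_1)=\sigma(q_2)$. I would set $p=q_1+1$ and $p'=q_2$, so $p,p'\in[m]$ and $p\le p'$.

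Finally, I would show that $S[p,\,p']$ repeated forever is a feasible \PWS\ schedule. Since the segment returns to the same state, playing it again from state $\sigma(q_2)=\sigma(q_1)$ reproduces exactly the same sequence of counter values. By induction over repetitions, every counter stays at most $d_i-1$ forever, which means every window of $d_i$ consecutive entries contains task $i$.

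One subtlety needs checking. The infinite schedule starts from an "empty history" rather than from state $\sigma(q_1)$. I would handle this by considering the periodic schedule as the bi-infinite repetition of the segment, or equivalently by observing that any $d_i$ consecutive entries of the repetition correspond to a window inside two consecutive copies, where the counter bound applies. It also follows that every task must appear in $S[p,p']$. Otherwise its counter would strictly increase across the segment, so $\sigma(q_1)\ne\sigma(q_2)$.

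I expect the main obstacle to be exactly this bookkeeping: defining counters for not-yet-visited tasks consistently, and justifying the window condition for the periodic repetition without an initial state.
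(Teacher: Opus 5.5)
Your proposal is correct and is essentially the paper's proof: the paper applies the same pigeonhole argument to state vectors over the prefix of $S$. It uses remaining times $V(p)_i\in[1,d_i]$ where you use elapsed counters $c_i=d_i-V_i\in[0,d_i-1]$, and it uses the hypothesis on the $k$-th visit exactly as you do, to keep every entry in range. Your choice of the segment $S[q_1+1,q_2]$, which closes the cycle from $\sigma(q_1)$ back to itself, together with your explicit handling of the start of the periodic schedule, is slightly cleaner than the paper's $S[p,q-1]$. The paper's choice implicitly relies on $V(p)=V(q)$ forcing $S[p]=S[q]$, so that $S[p,q-1]$ is a cyclic rotation of the closing segment $S[p+1,q]$.
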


\begin{proof}
    For every position $p\in [m]$, define a \emph{state vector} $V(p)\in \mathbb{N}^{n}$, with the $i$-th element of $V(p)$ representing the remaining time for the deadline of task $i$ to expire, given the schedule $S[1,\, p]$. More formally, define $V(0)=[d_1,\ldots,d_n]$ and then, for all $p\in [m]$, produce $V(p)$ from $V(p-1)$ by decreasing all elements of $V(p-1)$ by $1$, except the one corresponding to the task $i$ visited in position $p$ of $S$, which is instead reset to~$d_i$.

    Since, by assumption, for all $i\in [n]$ the $k$-th visit of task $i$ occurs no earlier than position $m+1$, it follows by Def.~\ref{def:kV} that $S[1,\, m+1]$ contains at least one visit of task~$i$ in any consecutive~$d_i$ positions, for all $i\in [n]$. Note that this does not hold for the entire duration of $S$, since a task does not have to be visited again after being visited $k$ times. We infer that vectors $V(1),\ldots,V(m+1)$ contain only strictly positive elements. Since, by definition, the $i$-th element of $V(p)$ cannot exceed~$d_i$, there are at most $m=\prod_i d_i$ distinct state vectors with strictly positive elements. By the pigeonhole principle, there exist positions $p,\, q \in [m+1]$, $p<q$, such that $V(p)=V(q)$. It is now straightforward that $S[p,\, q-1]$ induces a feasible schedule for the \PWS\ instance $D$ by infinitely repeating itself: $\forall i\in [n]$, it contains at least one visit of task $i$ in any consecutive $d_i$ positions, and this property is infinitely extended because $V(p)=V(q)$, i.e., the remaining time for the deadline of task $i$ to expire is common for positions $p,\, q$, for all $i\in [n]$. Setting $p'=q-1$ completes the proof.
\end{proof}

\begin{theorem}\label{thrm:lower_density_threshold_largeK}
    The lower density threshold of \kV\ is at most $5/6 + 1/\lfloor (k-1)/6 \rfloor$, for all $k\in \mathbb{N}$.
\end{theorem}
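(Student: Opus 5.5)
We need to exhibit, for each $k$, instances with density just above $5/6 + 1/\lfloor (k-1)/6\rfloor$ (or arbitrarily close to it from above) that admit no \kV\ schedule. The natural candidate is a modification of the tight \PWS\ counterexample $\{2,3,N\}$ with $N$ large. Its density is $5/6+1/N$, and the plan is to take $N=\lfloor (k-1)/6\rfloor$, possibly adding further tasks of negligible density, so that a feasible \kV\ schedule would contain an infinite \PWS\ schedule via Lemma~\ref{lem:finite_to_infinite}.

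The steps, in order, are as follows.

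\emph{First}, fix the instance $D=\{2,3,N\}$ with $N=\lfloor (k-1)/6\rfloor$, so that $\mathrm{Dens}(D)=5/6+1/N$. Then $m=\prod_i d_i=6N\le k-1$.

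\emph{Second}, suppose a feasible \kV\ schedule $S$ exists. It has length $3k$. To apply Lemma~\ref{lem:finite_to_infinite}, we must show that every task's $k$-th visit occurs at position at least $m+1$. Task $i$ with deadline $d_i$ has its $k$-th visit at position at least $k\ge m+1$, since one visit occupies each position. So the hypothesis holds automatically because $k\ge 6N+1=m+1$. This is exactly where the choice $N=\lfloor (k-1)/6\rfloor$ comes from.

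\emph{Third}, Lemma~\ref{lem:finite_to_infinite} then yields a feasible infinite \PWS\ schedule for $\{2,3,N\}$. This contradicts the known infeasibility of $\{2,3,N\}$ for \PWS~\cite{Chan_conjecture,Gasieniec_towards_5/6}. That infeasibility is elementary: tasks $2$ and $3$ force a pattern that leaves no slot free for a third task. Hence $D$ admits no \kV\ schedule.

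Finally, since an instance of density $5/6+1/\lfloor (k-1)/6\rfloor$ has no schedule, the lower density threshold is at most that value. For $k<7$ the floor is $0$ and the bound is vacuous, read as $+\infty$.

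The main obstacle is definitional: whether the lower threshold is attained as "at most" when the bad instance has density exactly equal to the bound. The argument gives an infeasible instance at exactly $5/6+1/N$, so no instance can push the threshold above it. I would also double-check the edge case that the third task's deadline $N$ may exceed $3k$. Remark~\ref{remark:2V_deadline_bound}'s analogue would make such a task trivial, but here $N\le k$, so it is harmless.
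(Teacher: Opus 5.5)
Your proof is correct and follows the paper's argument. It uses the same instance $\{2,3,N\}$ of density $5/6+1/N$, and Lemma~\ref{lem:finite_to_infinite} to turn a hypothetical \kV\ schedule into an infinite \PWS\ schedule. The only difference is that you check the lemma's hypothesis directly for every $k \ge 6N+1$, since the $k$-th visit of any task lies at position at least $k$. The paper instead treats $k = 6x+1$ and then appeals to monotonicity in $k$; your direct check makes that extra step unnecessary.
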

    
\begin{proof}
    Consider the instance $D(x)=\{d_1=2,\,d_2=3,\,d_3=x\}$, $x\in \mathbb{N}$, and observe that it is a no-instance of \PWS\ for all $x\in \mathbb{N}$: each of the first two tasks has to be visited once every two time units in an infinite schedule, hence any schedule fails at the $x$-th time unit (cf.~\cite{Chan_conjecture}).

    Set $k=\prod_i d_i +1 = 6x+1$ and assume $D(x)$ admits a \kV\ schedule. Clearly, the $k$-th visit of any task cannot occur earlier than position $k$, thus the conditions of Lemma~\ref{lem:finite_to_infinite} are satisfied for this $k$. We obtain that $D(x)$ is a yes-instance of \PWS, which is a contradiction. We infer that for all $x\in \mathbb{N}$, $D(x)$ is a no-instance of \kV\ with $k= 6x+1$. Since $\mathrm{Dens}(D(x))=5/6+1/x$, this proves the desired statement when $k \equiv 1 \pmod 6$.
    
    For $k \not\equiv 1 \pmod 6$, it suffices to observe that if an instance admits no \kV\ schedule, then it does not admit a $k'$-\textsc{Visits} schedule for any $k'>k$ (cf.~\cite{kVisits}).
\end{proof}

We obtain the following corollary by combining Theorem~\ref{thrm:lower_density_threshold_largeK} with the fact that the lower density threshold of \kV\ cannot be smaller than $5/6$ (see Section~\ref{subsubsec:density} for a short explanation).

\begin{corollary}\label{cor:density_largeK}
    The lower density threshold of \kV\ approaches $5/6$ for $k\rightarrow \infty$, which is known to be tight for \PWS~\cite{Kawamura_5/6_stoc}.
\end{corollary}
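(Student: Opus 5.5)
The corollary combines two bounds on the lower density threshold of \kV, which I will call $\lambda_k$: a uniform lower bound and an upper bound that tends to $5/6$.

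\textbf{Lower bound.} The plan is to show $\lambda_k \geq 5/6$ for every $k$. Take any instance $D$ with $\mathrm{Dens}(D)\leq 5/6$. By Kawamura's theorem~\cite{Kawamura_5/6_stoc}, $D$ admits an infinite \PWS\ schedule. Truncate this schedule, keeping only the first $k$ visits of each task and deleting every other entry while preserving the order. Deleting entries only brings later visits closer together. So the first visit of each task $i$ still lies within the first $d_i$ positions, and consecutive visits of $i$ are still at most $d_i$ apart. The resulting sequence has length $kn$ and contains each task exactly $k$ times, so it is a feasible \kV\ schedule. This is the argument sketched in Section~\ref{subsubsec:density}; I would state the deletion-monotonicity step explicitly.

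\textbf{Upper bound.} Theorem~\ref{thrm:lower_density_threshold_largeK} gives $\lambda_k \leq 5/6 + 1/\lfloor (k-1)/6\rfloor$ for each $k$ with $\lfloor (k-1)/6\rfloor\geq 1$, which holds for all $k\geq 7$. The correction term $1/\lfloor (k-1)/6\rfloor$ tends to $0$ as $k\to\infty$.

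\textbf{Conclusion.} For $k\geq 7$ we therefore have
\[
5/6 \leq \lambda_k \leq 5/6 + 1/\lfloor (k-1)/6\rfloor .
\]
By the squeeze theorem, $\lambda_k \to 5/6$. Tightness for \PWS\ is not something to prove here: it is Kawamura's result together with the example $\{2,3,1/\varepsilon\}$ recalled in Section~\ref{subsubsec:density}.

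There is no real obstacle. The only care needed is the truncation argument for the lower bound, and the observation that the bound of Theorem~\ref{thrm:lower_density_threshold_largeK} is finite once $k\geq 7$.
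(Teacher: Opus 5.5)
Your proof is correct and is essentially the paper's argument. The paper likewise combines the upper bound $5/6+1/\lfloor (k-1)/6\rfloor$ from Theorem~\ref{thrm:lower_density_threshold_largeK} with the lower bound $5/6$ obtained by truncating Kawamura's infinite schedule to the first $k$ visits of each task (Section~\ref{subsubsec:density}); your explicit deletion-monotonicity check and your remark that the bound is only finite for $k\geq 7$ are sound additions, though where you write ``every other entry'' you mean deleting \emph{all} other entries.
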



\subsection{Nonexistence of upper density threshold for \kV\ }

\begin{theorem}\label{thrm:upper_density_threshold}
    For all $k\in \mathbb{N}$ and all $\delta > 0$, there exists an instance $D$ with $\mathrm{Dens}(D)>\delta$ that admits a \kV\ schedule. Equivalently, there is no upper density threshold for \kV, for any $k\in \mathbb{N}$.
\end{theorem}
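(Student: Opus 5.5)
The plan is to exhibit, for each $k$ and $\delta$, an explicit instance with large density that nevertheless admits a \kV\ schedule. The key observation is that in a finite schedule of length $kn$, every task stops needing visits after its $k$-th visit. So tasks with very small deadlines can be scheduled right at the start and then abandoned. Deadline $1$ is the extreme case, contributing density $1$ each. Two tasks both with deadline $1$ are impossible, since both first visits would need position $1$. I will therefore use a staircase of small, increasing deadlines, each visited $k$ times consecutively in a block.

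The candidate construction is as follows. Take tasks $1,\ldots,n$ and schedule them in blocks. Task $i$ occupies positions $(i-1)k+1,\ldots,ik$, so its $k$ visits are consecutive. The gaps between its consecutive visits are then $1$, so the only binding constraint is the first visit, which lies at position $(i-1)k+1$. Hence setting $d_i=(i-1)k+1$ makes this schedule feasible by Definition~\ref{def:kV}. The density is
\[
\mathrm{Dens}(D)=\sum_{i=1}^n \frac{1}{(i-1)k+1}\geq 1+\frac{1}{k}\sum_{i=2}^{n}\frac{1}{i},
\]
a harmonic-type sum that diverges as $n\to\infty$ for fixed $k$. Choosing $n$ large enough, for instance so that $\frac1k(H_n-1)>\delta$, gives $\mathrm{Dens}(D)>\delta$.

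The steps, in order, are:
\begin{enumerate}
\item Define $D$ as above.
\item Verify feasibility of the block schedule.
\item Bound the density from below by a harmonic sum.
\item Conclude the equivalence with nonexistence of an upper threshold.
\end{enumerate}
For the last step: if some value $\tau$ were an upper threshold, every instance of density above $\tau$ would be a no-instance, which contradicts the construction with $\delta=\tau$.

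I expect no real obstacle. The only care needed is checking the first-visit constraint precisely, namely that position $(i-1)k+1\le d_i$, and confirming that consecutive visits at distance $1\le d_i$ are fine. Also, $d_1=1$ is permitted as a positive integer. If a multiset with repeated large values were preferred, one could pad the instance, but it is unnecessary.
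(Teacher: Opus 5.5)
Your proposal is correct and matches the paper's proof: it uses the same instance $\{1,1+k,\ldots,1+(n-1)k\}$, the same block schedule visiting each task $k$ times consecutively, and the same divergence of the harmonic-type density. Your explicit bound $1+\frac{1}{k}(H_n-1)$ simply makes quantitative the paper's appeal to the divergence of the harmonic series.
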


\begin{proof}
    Fix $k\in \mathbb{N}$ and consider the instance $D(n)=\{1,\ 1+k,\ 1+2k,\ldots,1+(n-1)k\}$. It is clear that $D(n)$ admits a \kV\ schedule for all $n\in \mathbb{N}$: it suffices to visit the first task $k$ times consecutively, then the second task $k$ times consecutively, and so on. Note that $$\mathrm{Dens}(D(n))=\sum_{i=0}^{n-1} \frac{1}{1+ik}$$ is a harmonic series for $n\rightarrow \infty$, which is a well-known divergent series. Hence, for large values of~$n$, $\mathrm{Dens}(D(n))$ is arbitrarily large, which immediately implies the theorem.
\end{proof}

\section{The \oneortwoV\ problem}\label{sec:one_or_two_visits}

In this section we define and study the \oneortwoV\ problem, a generalization of \twoV.

\begin{definition}[\oneortwoV]\label{def:one_or_two_V}
    Given positive integers $m,n$ and a (multi)set of positive integers (deadlines) $D=\{d_1,\ldots,d_{m+n}\}$, the \oneortwoV\ problem asks whether there exists a schedule of length $m+2n$ visiting exactly one task $i\in [m+n]$ per time unit, such that:
    \begin{itemize}
        \item For each $i\in [m]$, task $i$ is visited exactly once within the first $d_i$ positions of the schedule.
        \item For each $i\in [m+1,m+n]$, task $i$ is visited twice: once within the first $d_i$ positions of the schedule, and once at most $d_i$ positions after its first visit.
    \end{itemize}
\end{definition}

As usual, we assume $d_1\leq \ldots \leq d_m$ and $d_{m+1}\leq \ldots \leq d_{m+n}$.
For tasks $i\in [m+1,m+n]$, we can (equivalently) say that a \emph{primary} visit of $i$ occurs within the first $d_i$ positions of the schedule and a \emph{secondary} visit of $i$ occurs at most $d_i$ positions after its primary visit or at any position before its primary visit, as is standard for \twoV. This alternative definition will allow us to prove a property that disconnects the positions in which primary visits are placed from those in which secondary visits are placed, similar to~\cite{kVisits}. For the tasks that only require one visit, we use the term \emph{single} visits.

It is immediate that \oneortwoV\ is strongly NP-complete, as \twoV\ constitutes its special case with $m=0$. Hence, the focus of this section is to generalize the existing positive results for \twoV\ to \oneortwoV. To this end, we reduce \oneortwoV\ to \PM.

\subsection{Reduction to \PM}

Let $D=\{d_1,\ldots,d_{m+n}\}$ be a \oneortwoV\ input with $m$ tasks that have to be visited once and $n$ tasks that have to be visited twice. For the latter, we will use \emph{primary} and \emph{secondary} visits, as explained above. We assume that $d_i \leq m+2n$ for all $i\in [m+n]$. Note that if there was a deadline larger than $m+2n$ in the input, then all visits of the respective task could be placed at the end of the schedule without affecting its feasibility and, thus, the respective deadline could be removed from the input to obtain an equivalent \oneortwoV\ instance.

Let $A=\langle a_1,\ldots,a_n\rangle$ be the discretized sequence of $\langle d_{m+1},\ldots,d_{m+n}\rangle$. We will prove that it suffices to place the primary visits of tasks $m+1,\ldots,m+n$ in a permutation of the positions in~$A$ (analogous to Lemma~\ref{lem:old_primary_discon_prop}).

\begin{lemma}\label{lem:1or2_primary_discon_prop}
    A \oneortwoV\ instance $D=\{d_1,\ldots,d_{m+n}\}$ admits a schedule if and only if it admits a schedule in which all $n$ primary visits are placed in a permutation of the positions of the discretized sequence $A=\langle a_1,\ldots,a_n\rangle$ of $\langle d_{m+1},\ldots,d_{m+n}\rangle$.
\end{lemma}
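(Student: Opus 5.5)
The \emph{if} direction is trivial, so I will focus on \emph{only if}. I will take any feasible schedule and transform it step by step, preserving feasibility, until the primary positions coincide with the set $A$. The approach mirrors the swapping argument behind Lemma~\ref{lem:old_primary_discon_prop}. The new ingredient is the single visits, and these must be handled together with the secondary visits.

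The first step is to normalize primaries so they are as late as possible. Let $P$ be the set of positions holding primary visits. Sort the two-visit tasks by deadline and assign primary positions in increasing order. This is possible by an exchange argument: if $d_i\le d_{i'}$ but the primary of $i$ is later than that of $i'$, swapping the two primaries keeps both feasible. The secondary of each task stays valid because each secondary is either before its primary or at most $d$ positions after it. Moving a primary earlier only makes the ``before'' option stronger, and moving it later to a position $\le d$ still keeps the gap at most $d$, since the secondary is within $d$ of the earlier position in that case. I will check the remaining cases carefully. With this ordering the $i$-th smallest primary position $p_i$ satisfies $p_i\le d_{m+i}$. Combined with distinctness, this gives $p_i\le a_i$ by backward induction using Def.~\ref{def:disc}: $p_n\le d_{m+n}=a_n$ and $p_i\le\min\{p_{i+1}-1,d_{m+i}\}\le a_i$.

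The second step is to push primaries up to $a_i$, processing $i=n,n-1,\ldots,1$. Suppose $p_i<a_i$, so that position $a_i$ is not a primary, since $p_{i+1}>a_i$ after the previous steps. Then position $a_i$ holds either a secondary visit of some task $\ell$ or a single visit of some task $s$. I will swap the contents of positions $p_i$ and $a_i$. The primary of task $m+i$ moves to $a_i\le d_{m+i}$, which is allowed. Its own secondary, if after $p_i$ and within $d_{m+i}$ of it, needs care: if it lies at or before $a_i$ it becomes ``before the primary'', which is fine; if it lies after $a_i$, the distance shrinks. The visit displaced from $a_i$ moves earlier, to $p_i<a_i$. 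A single visit moved earlier still respects its deadline $d_s$. A secondary visit moved earlier is fine unless it was after its primary and now lands before it, which is also allowed by the primary/secondary convention. Since primaries of larger indices are untouched and $p_i$ strictly increases, the process terminates with $P=A$.

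I expect the main obstacle to be the case analysis in the exchange steps, especially confirming that moving a secondary earlier never breaks feasibility. This is precisely the reason for using primary/secondary rather than first/second visits. Single visits also behave monotonically, since moving them earlier is always safe. I will treat them exactly like secondaries in these exchanges. In particular, every non-primary position is occupied by a ``movable-earlier'' visit, and this is all the argument uses.
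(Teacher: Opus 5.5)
Your Step~1 is false, and it cannot be repaired. Swapping the primaries of $i$ and $i'$ moves the primary of the smaller-deadline task $i$ \emph{earlier}, and that is the unsafe direction. If the secondary of $i$ lies after its primary at distance close to $d_i$, moving the primary earlier can push that distance beyond $d_i$. Your sentence about the ``before'' option has it backwards: an earlier primary has fewer positions before it, and any later secondary is farther away. The only safe moves are these: single and secondary visits may move earlier, and a primary may move later as long as it stays within its deadline.

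Concretely, let $m=0$ and $D=\{2,4,7,7,7,7\}$, with tasks $u,v,w_1,\dots,w_4$ of deadlines $2,4,7,7,7,7$. Then $A=\langle 2,3,4,5,6,7\rangle$ and $T=\{1,8,9,10,11,12\}$, and the schedule $u,u,w_1,v,w_2,w_3,w_4,v,w_1,w_2,w_3,w_4$ is feasible. Swapping the primaries of $v$ (position $4$) and $w_1$ (position $3$) leaves the secondary of $v$ at $8$, at distance $5>4$.

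In fact, no feasible schedule of this instance has primaries sorted by deadline. Your Step~2 is sound, so it would turn such a schedule into one with $u$ at $a_1=2$ and $v$ at $a_2=3$. Then both secondaries need a position of $T$ that is at most $4$ (for $u$) or at most $7$ (for $v$), and the only such position is $1$. More generally, Steps~1 and~2 together would show that pairing sorted $D$ with $A$ by the identity always suffices. \twoV\ would then be decided by a single greedy check, contradicting its strong NP-completeness unless $\mathrm{P}=\mathrm{NP}$. The lemma only promises \emph{some} permutation of $A$, and that freedom is exactly where the hardness lives.

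To replace Step~1, you need the following: whenever a single or secondary visit occupies some $a_i\in A$, find a primary strictly before $a_i$ whose deadline is at least $a_i$. Given such a primary, the swap from your Step~2 is safe, since the primary moves later and the displaced visit moves earlier. The paper obtains this primary by counting.

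\begin{enumerate}
\item By backward induction on Definition~\ref{def:disc}, at most $n-i$ primaries lie strictly after $a_i$. If $a_i=d_{m+i}$, the primaries of tasks $m+1,\dots,m+i$ are all at positions at most $a_i$. Otherwise $a_i=a_{i+1}-1$, and the bound for $a_{i+1}$ gives the claim.
\item So if $a_i$ holds a non-primary visit, at least $i$ primaries lie strictly before it.
\item By pigeonhole, one of these belongs to a task with deadline at least $d_{m+i}\ge a_i$. Swap it into $a_i$.
\item If the displaced visit still lies in $A$, repeat with it. It moves strictly earlier each time, so this terminates.
\end{enumerate}

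Doing this for each non-primary visit in $A$ proves the lemma. With this replacement, your treatment of single visits as visits that can safely move earlier goes through unchanged.
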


\begin{proof}
    The converse direction is trivial; it suffices to prove the forward direction.

    \begin{claim}\label{claim:swap}
        Let $a\in A$. If a feasible schedule places a non-primary (i.e., single or secondary) visit in~$a$, then there exists some primary visit placed (strictly) earlier than $a$, with deadline greater than or equal to $a$.
    \end{claim}

    \begin{claimproof}
        We first prove through induction that at most $n-i$ primary visits can occur (strictly) after position $a_i\in A$.
        \begin{itemize}
            \item \emph{Induction basis $(i=n)$:} By definition, $a_n=d_{m+n}$, hence no primary visit can occur after position $a_n$.
            \item \emph{Inductive hypothesis $(i<n)$:} Suppose that at most $n-i-1$ primary visits occur after position $a_{i+1}$.
            \item \emph{Inductive step $(i<n)$:} By definition, either $a_i=d_{m+i}$ or $a_i=a_{i+1}-1$. In the first case, the primary visits of tasks $m+1,\ldots,m+i$ clearly must be placed no later than position $a_i$, meaning that at most $n-i$ primary visits occur after $a_i$. In the second case, we have at most $n-i-1$ primary visits after position $a_{i+1}$ by the inductive hypothesis. Since $a_i=a_{i+1}-1$, it follows that at most $n-i$ primary visits occur after position $a_i$.
        \end{itemize}

        This concludes the induction.
        
        Now, assume a schedule places a non-primary visit in some position $a_i \in A$. By the above, at most $n-i$ primary visits occur (strictly) after position $a_i$. Since position $a_i$ does not contain a primary visit by assumption, we obtain that at least $i$ primary visits occur (strictly) before position~$a_i$. By the pigeonhole principle, at least one of these $i$ primary visits corresponds to a task with deadline at least $d_{m+i}$. Since $d_{m+i}\geq a_i$ by definition, the claim follows.
    \end{claimproof}

    We will now use Claim~\ref{claim:swap} to apply swapping arguments to an arbitrary feasible schedule in order to prove the lemma.

    Suppose a feasible \oneortwoV\ schedule places a single or secondary visit in a position $a\in A$. By Claim~\ref{claim:swap}, there exists a primary visit placed in some position $p<a$ with deadline at least~$a$, meaning that placing this primary visit in position $a$ maintains feasibility.  It also holds that the single or secondary visit currently occupying position $a$ can be moved to any earlier position without affecting its feasibility\footnote{Note that this only holds for single and secondary visits, not primary visits. Moving a primary visit to an earlier position may cause the respective secondary visit to expire.}. We can thus swap the contents of positions $p$ and $a$, preserving feasibility. Note that it may be the case that the primary visit in $p$ and the secondary visit in $a$ are visits of the same task; however, the aforementioned argument holds even for that case.

    If $p\in A$, then we can apply Claim~\ref{claim:swap} again to move the same single or secondary visit even earlier in the schedule. Since this argument always moves this single/secondary visit to a strictly earlier position, and it keeps holding as long as this single/secondary visit is placed in some position of $A$, applying it iteratively must (at some point) place the single or secondary visit in a position that is not in $A$, since the schedule is finite.
    
    To sum up, we have proven that a single or secondary visit that is placed in a position in $A$ can be moved to an earlier position not in $A$, while only swapping positions with primary visits. Observe that this process does not affect the positions of single/secondary visits other than the one it is applied to; hence, sequentially applying it to each single/secondary visit that is placed in a position of $A$ will result in a schedule in which no single or secondary visit is placed in a position of~$A$. Equivalently, the $n$ primary visits are all placed in some permutation of the $n$ positions of~$A$. This concludes the proof of the lemma.
\end{proof}

With Lemma~\ref{lem:1or2_primary_discon_prop}, we have disconnected the positions in which primary visits should be placed from the positions in which single and secondary visits should be placed. We now disconnect the latter two as well.

\begin{lemma}[Disconnection]\label{lem:1or2_secondary_discon_prop}
    A \oneortwoV\ instance $D\!=\!\{d_1,\ldots,d_{m+n}\}$, with $A=\langle a_1,\ldots,a_n\rangle$ being the discretized sequence of $\langle d_{m+1},\ldots,d_{m+n}\rangle$, admits a schedule if and only if it admits a schedule such that
    \begin{enumerate}
        \item The $n$ primary visits are placed in some permutation of the positions in~$A$.
        \item The $m$ single visits are placed in order of non-decreasing deadline in the latest feasible positions of $[m+2n]\setminus A$.
        \item The $n$ secondary visits are placed in some permutation of the $n$ remaining positions.
    \end{enumerate}
\end{lemma}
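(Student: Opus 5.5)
The converse direction is trivial, so I only prove the forward direction. I start from a feasible schedule in which, by Lemma~\ref{lem:1or2_primary_discon_prop}, the $n$ primary visits occupy exactly the positions of $A$. Let $R=[m+2n]\setminus A$, with $|R|=m+n$. Single and secondary visits fill $R$. The key feature, already used in the proof of Lemma~\ref{lem:1or2_primary_discon_prop}, is that single and secondary visits only impose \emph{upper} bounds on their positions once the primary positions are fixed. A single visit of task $i$ must sit in a position at most $d_i$. The secondary visit of a task whose primary is at $a$ must sit in a position at most $a+d$, since any earlier position is also allowed. So within $R$ we have a pure assignment problem: $m+n$ items, each with a deadline-type upper bound, assigned bijectively to the slots of $R$.

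I then apply an exchange argument to push the single visits as late as possible. Among all feasible schedules with primaries on $A$, I pick one maximizing, in lexicographic order from the largest single deadline downward, the positions of the single visits. First I sort the singles. If two single visits $i<i'$ with $d_i\le d_{i'}$ sit at positions $p>p'$, swapping them is feasible, since $p'<p\le d_i$ and $p\le d_{i'}$ follows from $p\le d_i\le d_{i'}$. So singles can be assumed ordered by non-decreasing deadline.

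Next I show the singles occupy the latest feasible positions of $R$. Take the latest single, with the largest deadline $d$, at position $p$. Suppose some $q\in R$ with $p<q\le d$ holds a secondary visit. Swapping the two is feasible: the single moves to $q\le d$, and the secondary moves earlier, which is always allowed. Repeating this makes the largest single sit at the largest position of $R$ that is at most its deadline. Proceeding downward, each next single is placed at the largest free position of $R$ below both its deadline and the previous single's position. I read ``latest feasible positions'' as exactly this greedy placement, processing singles in decreasing order of deadline. Each step only swaps a single with a secondary visit lying in $R$, and it never moves singles that are already placed, so induction on the number of placed singles works. Once the singles are placed, the secondaries fill the remaining $n$ positions of $R$, which gives condition~3 automatically.

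The main obstacle is stating precisely what ``latest feasible positions'' means and checking that the greedy placement is well defined and does not collide with positions taken by earlier-placed singles. I handle this by fixing singles one at a time from the largest deadline down, always choosing the maximum position of $R$ not yet used by a placed single and at most its deadline. Existence of such a position follows from the feasibility of the current schedule, because that single's current position is a candidate. Since every swap only moves a secondary visit earlier, feasibility is preserved throughout.
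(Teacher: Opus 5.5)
Your proof is correct and follows the same route as the paper. You use Lemma~\ref{lem:1or2_primary_discon_prop} to put the primaries on $A$, sort the single visits by pairwise swaps, and push them as late as possible by swapping with secondary visits (which may always move earlier); you also make explicit the greedy reading of ``latest feasible positions'' and the induction that the paper leaves implicit, though your lexicographic-maximization sentence is never used and can be dropped.
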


\begin{proof}
    By Lemma~\ref{lem:1or2_primary_discon_prop}, we know that the existence of a \oneortwoV\ schedule is equivalent to the existence of a \oneortwoV\ schedule with the first property. In such a schedule, the $m+n$ single/secondary visits are placed in some permutation of the positions in $[m+2n]\setminus A$. The other two properties then follow through swapping arguments. To prove that single visits can be placed in order of non-decreasing deadline, a simple swapping argument to sort them suffices. Additionally, since secondary visits can always be moved to earlier positions without affecting their feasibility (by definition), we can apply swapping arguments to move all single visits as late as possible (in sorted order).
\end{proof}

We are now ready to reduce \oneortwoV\ to \PM.

\begin{theorem}\label{thrm:1or2_PM_reduction}
    \oneortwoV\ reduces to \PM\ in time $\bigO(n+m)$.
\end{theorem}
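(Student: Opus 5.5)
Using Lemma~\ref{lem:1or2_secondary_discon_prop}, we may restrict attention to schedules with the three listed properties, so all freedom lies in which primary position and which secondary position each two-visit task receives. The reduction runs as follows. Given $D=\{d_1,\ldots,d_{m+n}\}$ (sorted, with all deadlines at most $m+2n$), compute in $\bigO(n)$ time the discretized sequence $A=\langle a_1,\ldots,a_n\rangle$ of $\langle d_{m+1},\ldots,d_{m+n}\rangle$. The set $R=[m+2n]\setminus A$ has size $m+n$.

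Next, place the single visits greedily. Process $d_m,d_{m-1},\ldots,d_1$ in decreasing order, and assign each to the largest unused position of $R$ that does not exceed its deadline. Because both lists are sorted, a single merge-like scan from the right of $R$ does this in $\bigO(m+n)$ time. A standard exchange argument shows this greedy rule realizes the ``latest feasible positions in sorted order'' of property~2. If some single visit has no available position, output a fixed trivial no-instance of \PM. Otherwise, let $T$ be the $n$ positions of $R$ left unused; $T$ is a simple set.

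Output the \PM\ instance $(D',T)$ with $D'=\{d_{m+1},\ldots,d_{m+n}\}$. Its discretized sequence is exactly $A$, so Definition~\ref{def:PM} asks for a matching of each $d$ to some $a\in A$ with $d\ge a$ and some $t\in T$ with $d+a\ge t$. These are precisely the constraints that a primary visit at $a$ is within the first $d$ positions, and that the secondary visit at $t$ is either before $a$ or at most $d$ after it. Note that $t<a$ automatically satisfies $d+a\ge t$, which matches the primary/secondary formulation.

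Correctness in the forward direction: a feasible schedule can be normalized by Lemma~\ref{lem:1or2_secondary_discon_prop}. Its single visits then occupy exactly the greedy positions, and its primaries and secondaries yield a \PM\ solution. Conversely, a \PM\ solution together with the greedy placement of single visits is a feasible \oneortwoV\ schedule.

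The main obstacle is the claim that the greedy choice coincides with the canonical placement in property~2, so that the leftover set $T$ is uniquely determined. This needs the exchange argument: any feasible placement of single visits in $R$ can be shifted to the greedy one by moving single visits later. Doing so only swaps them with secondary visits, which remain feasible when moved earlier. I would state this carefully, since otherwise $T$ could depend on the schedule. Once it holds, the total running time is $\bigO(n+m)$.
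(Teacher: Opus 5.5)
Your proposal is correct and follows essentially the same route as the paper. Both compute $A$, place the single visits from the largest deadline downward into the latest feasible positions of $[m+2n]\setminus A$, and let $T$ be the leftover positions. Both then invoke Lemma~\ref{lem:1or2_secondary_discon_prop} to conclude that the instance is equivalent to the \PM\ instance $(D',T)$. What you add is detail the paper leaves implicit in that lemma: the explicit deadline constraint in the greedy, the failure case (outputting a trivial no-instance), and the exchange argument showing the leftover set $T$ is well defined.
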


\begin{proof}
    Let $D=\{d_1,\ldots,d_{m+n}\}$ be a \oneortwoV\ instance, and define $D'=\{d_{m+1},\ldots,d_{m+n}\}$.
    First, compute the discretized sequence $A=\langle a_1,\ldots,a_n\rangle$ of $D'$. Then, for $i=m$ down to $i=1$, place the single visit of task $i$ in the latest available position of $[m+2n]\setminus A$. Lastly, compute the (simple) set $T$ consisting of the $n$ remaining available positions in $[m+2n]\setminus A$. Note that, by Lemma~\ref{lem:1or2_secondary_discon_prop}, the aforementioned placements do not affect the existence of a schedule. Additionally, by Lemma~\ref{lem:1or2_secondary_discon_prop}, a schedule now exists iff each $d\in D'$ can be matched with a position $a\in A$ for its primary visit and a position $t\in T$ for its secondary visit, such that $d\geq a$ and $d+a\geq t$. This is equivalent to the \PM\ instance $(D',T)$, by definition.
    All operations described can be done in time $\bigO(n+m)$.
\end{proof}

\subsection{Algorithms for \oneortwoV}

We are now ready to generalize all known algorithms for \twoV\ to \oneortwoV. Let $A=\langle a_1,\ldots,a_n\rangle$ be the discretized sequence of $\langle d_{m+1},\ldots,d_{m+n}\rangle$.

\begin{theorem}\label{thrm:one_or_twoV_algos}
    \oneortwoV\ admits the following algorithms:
    \begin{enumerate}    
        \item An algorithm running in $\bigO(m+n!)$ time.\footnote{Note that the naive brute force algorithm for \oneortwoV\ runs in time $\bigO((m+2n)!)$.}
        \item An algorithm running in $\bigO(m+n)$ time, if $d_{m+1},\ldots,d_{m+n}$ are distinct numbers.
        \item An algorithm running in $\bigO(m+n)$ time, if there are at most two distinct deadlines corresponding to each cluster of $A$.
        \item An FPT algorithm parameterized by the \emph{maximum cluster size} $c$ of $A$, running in $\bigO(m+n\cdot c!)$ time.
        \item A randomized polynomial-time $\mathrm{(RP)}$ algorithm, if there is a constant number of distinct deadlines corresponding to each cluster of $A$.
    \end{enumerate}
\end{theorem}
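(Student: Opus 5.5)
The plan is to pass every item through Theorem~\ref{thrm:1or2_PM_reduction}. In $\bigO(m+n)$ time it converts the \oneortwoV\ instance into the \PM\ instance $(D',T)$ with $D'=\{d_{m+1},\ldots,d_{m+n}\}$ and $|D'|=|T|=n$. Since $D'$ determines the discretized sequence $A$, the clusters of $A$ in \oneortwoV\ are exactly the clusters of the \PM\ instance. So it suffices to show that \PM\ with $|D'|=n$ admits algorithms with the claimed running times, without the $m$ term, under the corresponding conditions on $D'$; adding the $\bigO(m+n)$ reduction cost then gives each bound. One caveat is that $T$ is now an arbitrary simple subset of $[m+2n]\setminus A$ rather than $[2n]\setminus A$. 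I will therefore use only those earlier results that are stated for \PM\ with an arbitrary simple target set, never anything specific to $T=[2n]\setminus A$.

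\textbf{Items 1, 2, 3 and 5.} For item 1, enumerate the $n!$ bijections between $D'$ and $A$. For each one, decide greedily in $\bigO(n)$ amortized time whether the sorted sums $d+a$ dominate the sorted $T$ elementwise; this check should be amortized, or the pairing restricted per cluster, to reach $\bigO(n!)$ rather than $\bigO(n\cdot n!)$. Alternatively, first apply Lemma~\ref{lem:PM_self_reduction}, which reduces to clusters, and note that $\prod c_i!\le n!$. For item 2, invoke Lemma~\ref{lem:old_distinct}, which is stated for \PM\ with $D$ simple. For item 3, apply Lemma~\ref{lem:PM_self_reduction}, then solve each cluster instance by Lemma~\ref{lem:old_two_numbers}, which per its footnote is a linear-time greedy algorithm for \PM\ with two distinct numbers. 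For item 5, combine Lemma~\ref{lem:PM_self_reduction} with Theorem~\ref{thrm:PM_to_EM} on each cluster. With $p$ constant, each \EM\ instance has polynomial size, and Mulmuley et al.~\cite{Vazirani_EM} places it in $\mathrm{RP}$. A one-sided-error algorithm that accepts iff all cluster subinstances accept remains in $\mathrm{RP}$ after amplifying each call so that the union bound over at most $n$ calls keeps the error below $1/2$.

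\textbf{Item 4.} Apply Lemma~\ref{lem:PM_self_reduction}, and on each cluster of size $c_i\le c$ enumerate all $c_i!$ assignments of the cluster's deadlines to its positions. For each assignment, check in $\bigO(c_i)$ time, given presorted data, whether the resulting sums can cover that cluster's targets, via the greedy sorted-domination test. Summing gives $\sum_i c_i!\cdot c_i\le (n/c)\cdot c\cdot c!=n\cdot c!$. The check for a fixed assignment is correct because, once the pairs $(d,a)$ are fixed, matching sums to targets with sum $\ge t$ is a bipartite threshold matching, which sorted greedy solves.

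The main obstacle is confirming that Lemma~\ref{lem:PM_self_reduction} and Lemma~\ref{lem:old_two_numbers} really apply to \PM\ with a general simple $T$, not only to $T=[2n]\setminus A$. I would reread the self-reduction proof in Appendix~\ref{appendix:PM_self_reduction} to check that it distributes targets to clusters by a greedy argument valid for any simple $T$. The other point to verify is the exact $\bigO(n!)$ bookkeeping in item 1.
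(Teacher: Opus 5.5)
Your overall route matches the paper's proof. You reduce by Theorem~\ref{thrm:1or2_PM_reduction} to the \PM\ instance $(D',T)$, and then handle items 1--5 by, respectively: brute force; Lemma~\ref{lem:old_distinct}; Lemma~\ref{lem:PM_self_reduction} plus Lemma~\ref{lem:old_two_numbers}; cluster-wise brute force; and Lemma~\ref{lem:PM_self_reduction} plus Theorem~\ref{thrm:PM_to_EM} and~\cite{Vazirani_EM}. The obstacle you single out is not one. Def.~\ref{def:PM} already allows an arbitrary simple $T$, and all the invoked results are stated for general \PM. In particular, the exchange $t'<t''\le d'+a'<d''+a''$ in Appendix~\ref{appendix:PM_self_reduction} uses only that $T$ is simple and sorted. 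On the $\mathcal{O}(n!)$ bookkeeping for item 1, the paper simply asserts this bound for the same brute force without further accounting, so you are not missing anything relative to it. (Your cluster-split remark does not remove the per-bijection cost when there is a single cluster of size $n$.) Your amplification argument for item 5 is a correct detail that the paper leaves implicit.

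The one genuine slip is in items 1 and 4. You enumerate all bijections between deadlines and positions and test only whether the sorted sums dominate the sorted targets. However, the \PM\ constraint $d\ge a$ must also hold on every pair, and the paper's brute force enumerates only such ``feasible matchings''. Without this filter your test gives false positives. Take the \oneortwoV\ instance with single deadlines $1,3,4$ and double deadlines $2,5$. It is a no-instance: positions $1,2$ are forced to the deadline-$1$ single visit and the primary visit of the deadline-$2$ task, leaving only positions $3,4$ for three visits that must occur by time $4$. The reduction yields $A=\langle 2,5\rangle$ and $T=\{6,7\}$. The only admissible pairing has sums $4,10$ and fails. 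Pairing deadline $2$ with position $5$ and deadline $5$ with position $2$ gives sums $7,7$, which pass your test even though $2<5$. Discarding bijections that violate $d\ge a$ (an extra linear-time check per bijection, or enumerating only admissible bijections) fixes both items without changing the claimed running times.
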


\begin{proof}
    (1): By Theorem~\ref{thrm:1or2_PM_reduction}, we can reduce \oneortwoV\ to \PM\ in $\bigO(n+m)$ time. Note that the resulting \PM\ instance consists of three sets of size $n$ each: $D'=\{d_{m+1},\ldots,d_{m+n}\}$, its discretized sequence $A$ and a target set $T$ as described in the proof of Theorem~\ref{thrm:1or2_PM_reduction}. One can run a brute-force algorithm for this \PM\ instance in $\bigO(n!)$ time, computing all feasible matchings between $D'$ and $A$ and then, for each feasible matching, sorting the sums of the respective pairs in non-decreasing order and matching them with the targets in  $T$ in non-decreasing order. It is clear that if no solution exists by matching the sums with the targets in this order, then no solution exists in general. Hence, we obtain an algorithm for \oneortwoV\ running in $\bigO(m+n!)$ time.

    (2): The algorithm follows from Theorem~\ref{thrm:1or2_PM_reduction} and Lemma~\ref{lem:old_distinct}.

    (3): The algorithm follows by sequentially applying Theorem~\ref{thrm:1or2_PM_reduction},  Lemma~\ref{lem:PM_self_reduction} and Lemma~\ref{lem:old_two_numbers}.

    (4): The algorithm follows from Theorem~\ref{thrm:1or2_PM_reduction} and Lemma~\ref{lem:PM_self_reduction}, by applying the brute-force \PM\ algorithm described above for (1) to each of the $\bigO(n)$ instances obtained by Lemma~\ref{lem:PM_self_reduction}. Each of these \PM\ instances has size $\bigO(c)$, and is thus solved in $\bigO(c!)$ time.

    (5): The algorithm follows by sequentially applying Theorem~\ref{thrm:1or2_PM_reduction}, Lemma~\ref{lem:PM_self_reduction}, Theorem~\ref{thrm:PM_to_EM} and the well-known algorithm of Mulmuley et al.~\cite{Vazirani_EM} for \EM\ (see Section~\ref{subsubsec:exact_matching}). 
\end{proof}

\begin{remark}
    The algorithms of Theorem~\ref{thrm:one_or_twoV_algos} coincide with the state-of-the-art algorithms for \twoV\ when $m=0$.
\end{remark}

\section{\threeV\ violates the key structural properties of \twoV\ }\label{sec:3v_counterexample}


In this section we provide a counterexample that shows that Lemma~\ref{lem:old_primary_discon_prop} and the algorithm implied by Lemma~\ref{lem:old_distinct} do not hold for \threeV. This is significant for future work, since the vast majority of results for \twoV\ (both positive and negative) heavily rely on these lemmas; it seems that new methods are yet to be discovered in order to efficiently study \kV\ for $k>2$.

\begin{lemma}\label{lem:3v_discseq_violation}
    There is a yes-instance of \threeV\ for which every feasible schedule contains at least two visits of the same task in the positions of the discretized sequence, violating Lemma~\ref{lem:old_primary_discon_prop}.
\end{lemma}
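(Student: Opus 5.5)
The plan is to exhibit one small explicit instance $D$ and verify the claim by exhaustive case analysis. Lemma~\ref{lem:3v_discseq_violation} is existential, so a single yes-instance of \threeV\ suffices. For it, every feasible schedule must place two visits of some task at positions of the discretized sequence $A$ of $D$ (Def.~\ref{def:disc}). Since Lemma~\ref{lem:old_distinct} should also fail for this instance, I would look for $D$ among simple sets, where $A=D$, or among sets with very small multiplicity. I do not yet have a specific instance; finding one is the first step.

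\emph{Step 1: find the instance.} The natural way to force two visits of one task onto $A$ is to include a very urgent task. For example, a task with deadline $1$ occupies positions $1,2,3$ in every schedule. I would then choose the other deadlines so that $A$ contains $1$ together with a second position (ideally $2$ or $3$) that the urgent task is forced to use. Small checks show the naive attempts fail:
\begin{itemize}
\item $D=\{1,4,4\}$ gives $A=\{1,3,4\}$, but it is infeasible, since both deadline-$4$ tasks need their first visit by position $4$.
\item Simple sets containing both $1$ and $2$ (or $1$ and $3$) are infeasible outright, since the second task cannot start in time.
\item $D=\{2,4\}$ and $D=\{1,4,5\}$ are feasible but admit schedules in which the positions of $A$ hold distinct tasks.
\end{itemize}
So the urgent task will probably need deadline $2$ or $3$ rather than $1$, so that its three visits spread out (e.g.\ over positions $2,4,6$). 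The remaining deadlines would be tuned so that their discretized values land exactly on those forced positions, while slack elsewhere keeps the instance feasible. I would find a concrete instance by a short computer search over simple sets (and multisets of multiplicity $\le 2$) with $n\le 5$ and deadlines $\le 3n$. For each candidate, the search enumerates all schedules of length $3n$ and tests the property. I expect a small witness with $n=3$ or $n=4$.

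\emph{Step 2: certify the instance by hand.}
\begin{enumerate}
\item Compute $A$ via Def.~\ref{def:disc}.
\item Exhibit one feasible \threeV\ schedule, so that $D$ is a yes-instance.
\item Show that every feasible schedule puts two visits of the same task on $A$. I would argue by case analysis on the most urgent task. Its deadline pins its visits to a few candidate position sets, and the pigeonhole constraint that all first visits fit before the relevant deadlines eliminates every option that avoids a repeat on $A$. Within each case, the remaining tasks' positions are then forced or quickly lead to an expired deadline.
\end{enumerate}
Finally, I would note that any schedule for this instance fails the conclusion of Lemma~\ref{lem:old_primary_discon_prop}. In particular, the linear-time rule of Lemma~\ref{lem:old_distinct}, which matches each task's first visit to $a_i=d_i$, cannot produce a schedule for this yes-instance.

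The main obstacle is Step 1: locating an instance that is simultaneously feasible and rigid enough that \emph{every} feasible schedule repeats a task on $A$. Once the instance is in hand, the verification in Step 2 is a finite and routine check. I would write it as a compact case split with only a handful of branches.
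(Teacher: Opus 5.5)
\textbf{The proposal has a genuine gap: it never produces the instance.} The lemma is purely existential, so all of its mathematical content is two things:
\begin{enumerate}
\item a concrete yes-instance, and
\item a proof that every feasible schedule of it puts two visits of one task on $A$.
\end{enumerate}
You defer both to a computer search you have not run. What you have is a search strategy rather than a proof. Your Step~1 checks are correct, but they only rule candidates out.

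Nothing supports your expectation that a witness exists with $n\le 5$, let alone $n\in\{3,4\}$. The paper's witness is $D=\{2,5,6,7,8,9,10,11\}$, with $n=8$ and $A=D$. The analogous set built on the same pattern, $\{2,4,5,6,7,8\}$, is already infeasible. Its first $10$ positions must hold exactly $3+2+2+1+1+1=10$ prescribed visits. Positions $9$ and $10$ then both need a visit whose latest allowed position is at least $9$, and only the second visit of the deadline-$5$ task qualifies. So your search may come back empty in the range you propose, and then the plan yields nothing.

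The idea that makes a short hand verification possible is also missing. It is not a case split on the most urgent task but \emph{prefix saturation}. The $j$-th visit of a task with deadline $d$ lies at a position at most $jd$. Choose $D$ so that for some $P$ the number of visits forced into $[1,P]$, namely $\sum_i \min\{3,\lfloor P/d_i\rfloor\}$, equals $P$. In the paper $P=12$ and the count is $3+2+2+5=12$. So $[1,12]$ holds a fixed multiset of visits, and one can argue backwards from position $12$:
\begin{enumerate}
\item Only the second visit of task $6$ fits at $12$, which forces its first visit to $6$.
\item Only task $11$ fits at $11$.
\item Task $7$ has only one visit in $[1,12]$, so its first visit is at position at least $6$, hence exactly $7$.
\item Position $10$ holds either task $10$ or the second visit of task $5$.
\item If task $10$ is at $10$, then three visits must fall into positions $13,14$: the second visit of task $7$, the third visit of task $5$, and the second visit of whichever of tasks $8,9$ sits before position $6$. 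This is impossible.
\end{enumerate}
Hence every feasible schedule places task $5$ at positions $5$ and $10$, both in $A$. An explicit feasible schedule then shows $D$ is a yes-instance.

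Note also that the doubled task in the paper is not the urgent one. The deadline-$2$ task only contributes to the saturation count. This runs against your heuristic of forcing the urgent task's visits onto positions of $A$.
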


\begin{proof}
Let $D = \{2,5,6,7,8,9,10,11\}$ be a \threeV\ instance, with discretized sequence (trivially) equal to $D$. 
To keep our analysis concise, we refer to the task with deadline $d \in D$ simply as task~$d$ in the context of this proof.
We begin by observing that the first~$12$ positions of any feasible schedule of $D$ must contain the following $12$ visits.
\begin{itemize}
    \item Three visits to task $2$.
    \item Two visits to each of the tasks $5$ and $6$.
    \item One visit to each of the tasks $7,\ 8,\ 9,\ 10$ and $11$.
\end{itemize}
We infer that no visit can occur in the first $12$ positions, other than the aforementioned $12$ visits (although the order of these $12$ visits is yet to be determined).

Observe that the only visit that can occupy position $12$ is the second visit to task $6$. Therefore, we are forced to schedule the second visit to task $6$ at that position, which consequently forces its first visit to be scheduled at position $6$. Given the above, the only visit that can now occupy position $11$ is the first visit of task $11$, hence we must schedule it there. Additionally, since the first $12$ positions of the schedule can only contain one visit to task $7$, its second visit must be scheduled no earlier than position $13$, meaning that its first visit cannot be scheduled earlier than position $6$. However, position $6$ is already occupied, forcing us to schedule task $7$ in position $7$. Therefore, at this point, the schedule (up to position $12$) has the following form.

\[
\setlength{\arraycolsep}{3pt} 
\begin{array}{r *{12}{c}} 

   \textnormal{Task visits in the first $12$ positions (no order): }& \textcolor{darkorange}{2} & \textcolor{darkblue}{2} & \textcolor{darkgreen}{2} & \textcolor{darkorange}{5} & \textcolor{darkblue}{5} & \textcolor{darkorange}{6} & \textcolor{darkblue}{6} & \textcolor{darkorange}{7} & \textcolor{darkorange}{8} & \textcolor{darkorange}{9} & \textcolor{darkorange}{10} & \textcolor{darkorange}{11}\\[8pt]
    
    \textnormal{Fixed task visits at this point: }& \rule{0.3cm}{1.5pt} & \rule{0.3cm}{1.5pt} 
        & \rule{0.3cm}{1.5pt} & \rule{0.3cm}{1.5pt} & \rule{0.3cm}{1.5pt} & \textcolor{darkorange}{6} & \textcolor{darkorange}{7} & \rule{0.3cm}{1.5pt} & \rule{0.3cm}{1.5pt} & \rule{0.3cm}{1.5pt} & \textcolor{darkorange}{11} 
        & \textcolor{darkblue}{6} 
\end{array}
\]

We now examine what happens at position $10$. Observe that, given the above, this position can only be occupied either by the first visit of task $10$ or by the second visit of task $5$. Suppose that we schedule the first visit of task $10$ at position $10$. Then, the following hold simultaneously.
\begin{itemize}
    \item Since the first visit of task $7$ is scheduled at position $7$ and its second visit does not occur during the first $12$ positions of the schedule, the latter must be scheduled either at position $13$ or $14$.
    \item The third visit of task $5$ must occur after position $12$, thus its second visit cannot be scheduled before position $8$. Since the second visit of task $5$ cannot be scheduled after position $9$ (as positions $10,\ 11$ and $12$ are already occupied), its third visit must be scheduled either at position $13$ or $14$.
    \item The above implies that the second visit of task $5$ must occur either at position $8$ or $9$. As a result, we are forced to schedule either task $8$ or task $9$ before position $6$. Whichever of these tasks is scheduled before position $6$ will need its second visit placed no later than position $9+5=14$, but no earlier than position $13$ due to the observation for the first $12$ positions.
\end{itemize}
From the above, we conclude that there are (at least) three visits that must be scheduled at positions $13$ or $14$, which is impossible. Hence, scheduling task $10$ at position $10$ cannot lead to a feasible \threeV\ schedule. 

The only remaining option is to schedule the second visit of task $5$ at position~$10$, which additionally forces us to schedule its first visit at position $5$. This means that, if a feasible schedule for this instance exists, task $5$ must appear more than once in the positions of the discretized sequence. Thus, it remains to show that such a schedule does in fact exist. Notice that the following is a feasible \threeV\ schedule.

\[
\setlength{\arraycolsep}{3pt} 
\begin{array}{r *{8}{c}} 
   \textnormal{$D$ = $A$ = }& \{2, & 5, & 6, & 7, & 8, & 9, & 10, & 11\}
\end{array}
\]
\[
\begin{array}{r *{24}{c}} 
    \textnormal{$S$ = }& \textcolor{darkorange}{2} & \textcolor{darkblue}{2} 
        & \textcolor{darkgreen}{2} & \textcolor{darkorange}{10} & \textcolor{darkorange}{5} & \textcolor{darkorange}{6} & \textcolor{darkorange}{7} & \textcolor{darkorange}{8} & \textcolor{darkorange}{9} & \textcolor{darkblue}{5} 
        & \textcolor{darkorange}{11} & \textcolor{darkblue}{6} & \textcolor{darkblue}{7} & \textcolor{darkblue}{10} & \textcolor{darkgreen}{5} & \textcolor{darkblue}{8} & \textcolor{darkgreen}{6} & \textcolor{darkblue}{9} & \textcolor{darkgreen}{7} & \textcolor{darkgreen}{8} & \textcolor{darkgreen}{9} & \textcolor{darkblue}{11} & \textcolor{darkgreen}{10} & \textcolor{darkgreen}{11} 
\end{array}
\]

This concludes the proof.
\end{proof}

\begin{remark}
    The violation of Lemma~\ref{lem:old_primary_discon_prop} is significant because all existing positive results for \twoV\ and its generalization, \oneortwoV\ (see Theorem~\ref{thrm:one_or_twoV_algos} for a list of these results), currently rely on Lemma~\ref{lem:old_primary_discon_prop} (and its generalization, Lemma~\ref{lem:1or2_primary_discon_prop}). In fact, even the NP-hardness of \twoV\ strongly relies on Lemma~\ref{lem:old_primary_discon_prop}.
\end{remark}

\begin{lemma}\label{lem:3v_distinct_violation}
    There is a yes-instance of \threeV\ with distinct numbers for which every feasible schedule places the first visits in non-sorted order, violating a property of the algorithm implied by Lemma~\ref{lem:old_distinct}.
\end{lemma}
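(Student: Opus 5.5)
The plan is to reuse the instance $D=\{2,5,6,7,8,9,10,11\}$ from the proof of Lemma~\ref{lem:3v_discseq_violation}. It consists of distinct numbers, and it is a yes-instance as witnessed by the schedule $S$ exhibited there. In $S$ the first visit of task~$10$ (position~$4$) precedes the first visits of tasks $5,\ldots,9$, so $S$ is already non-sorted. The remaining work is to show that \emph{every} feasible schedule places first visits in non-sorted order, i.e., that no feasible schedule has the first visits of tasks $2,5,6,\ldots,11$ appearing in increasing order of deadline.

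I will import the forced structure established in the proof of Lemma~\ref{lem:3v_discseq_violation}, which holds for every feasible schedule. The first $12$ positions contain exactly three visits of task~$2$, two visits each of tasks $5$ and $6$, and one visit each of tasks $7,\ldots,11$. Moreover, the following placements are forced:
\begin{itemize}
\item position~$12$ holds the second visit of task~$6$, and its first visit is at position~$6$;
\item position~$11$ holds the first visit of task~$11$;
\item position~$7$ holds the first visit of task~$7$;
\item position~$10$ holds the second visit of task~$5$, and its first visit is at position~$5$ (the alternative, task~$10$ at position~$10$, was ruled out there).
\end{itemize}
Consequently, the only positions among the first $12$ not yet determined are $\{1,2,3,4,8,9\}$. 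These must host exactly the three visits of task~$2$ and the first visits of tasks $8$, $9$ and $10$.

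Now suppose, for contradiction, that the first visits were in sorted order. Then the first visits of tasks $8,9,10$ must all occur after the first visit of task~$7$, which sits at position~$7$. The only free positions after $7$ among the first $12$ are $8$ and $9$. That is two slots for three first visits, a contradiction. Hence every feasible schedule has some first visit out of sorted order. In fact, at least one of tasks $8,9,10$ has its first visit before position~$5$, i.e., before the first visit of task~$5$.

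The main obstacle is making sure the forcing chain from Lemma~\ref{lem:3v_discseq_violation} is genuinely valid for all feasible schedules, not just for the witness schedule. In particular, I need to confirm two things: that the counting of the first $12$ positions is exact, and that the exclusion of task~$10$ at position~$10$ (the ``three visits in positions $13,14$'' argument) did not tacitly assume anything about ordering. I would re-read those steps and cite them verbatim. If any step there relied on extra assumptions, the fallback is to redo the case analysis only at positions $10$--$12$, which already forces positions $5,6,7,10,11,12$. With those positions fixed, the pigeonhole argument on positions $8,9$ goes through unchanged.
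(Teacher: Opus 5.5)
Your proposal is correct and follows the paper's proof: you reuse the forced placements from Lemma~\ref{lem:3v_discseq_violation} (positions $5,6,7,10,11,12$ fixed), then observe that the first visits of tasks $8,9,10$ cannot all fit into the two free slots $8,9$ after position~$7$, so one of them must precede the first visit of task~$5$. The forcing chain you were worried about is sound: every step follows from the deadlines and the count of visits in the first $12$ positions alone, and none assumes anything about the order of first visits.
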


\begin{proof}
    In the proof of Lemma~\ref{lem:3v_discseq_violation}, after showing that scheduling the task with deadline $10$ at position~$10$ cannot yield a feasible schedule for the \threeV\ instance $D = \{2,5,6,7,8,9,10,11\}$, we are left with a schedule whose first $12$ positions are of the following form.

\[
\setlength{\arraycolsep}{3pt} 
\begin{array}{r *{12}{c}} 

   \textnormal{Task visits in the first $12$ positions (no order): }& \textcolor{darkorange}{2} & \textcolor{darkblue}{2} & \textcolor{darkgreen}{2} & \textcolor{darkorange}{5} & \textcolor{darkblue}{5} & \textcolor{darkorange}{6} & \textcolor{darkblue}{6} & \textcolor{darkorange}{7} & \textcolor{darkorange}{8} & \textcolor{darkorange}{9} & \textcolor{darkorange}{10} & \textcolor{darkorange}{11}\\[8pt]
    
    \textnormal{Fixed task visits at this point: }& \rule{0.3cm}{1.5pt} & \rule{0.3cm}{1.5pt} 
        & \rule{0.3cm}{1.5pt} & \rule{0.3cm}{1.5pt} & \textcolor{darkorange}{5} & \textcolor{darkorange}{6} & \textcolor{darkorange}{7} & \rule{0.3cm}{1.5pt} & \rule{0.3cm}{1.5pt} & \textcolor{darkblue}{5} & \textcolor{darkorange}{11} 
        & \textcolor{darkblue}{6} 
\end{array}
\]

Observe that there are three remaining tasks with deadline greater than $7$ whose first visits have not yet been scheduled and must occur no later than position $12$. However, there exist only two empty positions in the schedule between positions $7$ and $12$. Therefore, at least one task with deadline greater that $7$ must be scheduled before the first visit of task $5$.
\end{proof}

It is straightforward to see that the algorithm implied by Lemma~\ref{lem:old_distinct} (for \twoV\ with distinct deadlines) places primary visits in sorted order in the positions of the discretized sequence and secondary visits in sorted order in the rest of the positions. This implies that first visits are placed in sorted order (by non-decreasing deadline), with the same also holding for second visits. With Lemma~\ref{lem:3v_distinct_violation} we prove that this property is violated for \threeV, reinforcing the following conjecture from~\cite{kVisits}.

\begin{conjecture}
    \threeV\ is strongly NP-complete even when the input is a simple set.
\end{conjecture}

\section{Conclusion}

Our work highlights several interesting properties of \twoV. Although this problem was conceived as a finite version of \PWS, mainly as a stepping stone for studying its complexity, it turns out to be quite interesting in its own right. As highlighted in~\cite{kVisits}, \twoV\ is one of the rare natural problems that is in $\mathrm{P}$ for simple sets, but $\mathrm{NP}$-complete when multisets are allowed. We further enhance this contrast by unexpectedly answering a question of that paper in the negative, proving that \twoV\ is in fact strongly $\mathrm{NP}$-complete even with at most two copies of each number. We find this result both intriguing and surprising, as it contradicts our intuition.

A natural open question arising now is whether \twoV\ can be parameterized by the number of \emph{duplicates}, which could be possible despite its $\mathrm{NP}$-completeness for maximum multiplicity~$2$. Another question remaining open regards the parameterization of \twoV\ by the number of numbers. We conjecture that \PM\ (and thus also \twoV) is FPT parameterized by the number of numbers of $D$, as is standard for numerical matching variants~\cite{Number_of_numbers}. However, as explained in Section~\ref{subsec:related_work}, the standard method of showing this by reduction to \ILP~\cite{Number_of_numbers} fails for \PM.
Our result in Section~\ref{sec:number_of_numbers} indicates that \twoV\ is likely tractable when parameterized by the number of numbers, although the existence of a deterministic XP or FPT algorithm remains open.

Another open question is whether our bound for the lower density threshold of \twoV\ in Section~\ref{sec:density} is tight, as well as whether the $\sqrt{2}-1/2$ number that we obtained 
can provide insight into generalizing our result to \kV\ with $k>2$. We believe that studying the lower density threshold of \kV\ in relation to $k$ is a particularly interesting direction for future work, as it would complement the known $(5/6)$-threshold of \PWS~\cite{Kawamura_5/6_stoc}. Perhaps computer-assisted brute force (like the ones used in~\cite{Gasieniec_towards_5/6,Kawamura_5/6_stoc,Patrolling_SODA_2026} for infinite versions) can also prove useful for the density thresholds of finite versions. 
Since we could not find any unschedulable \twoV\ instance with density smaller than or equal to~$1$, it is possible that the lower density threshold of \twoV\ is exactly~$1$. However, this would imply that \oneV\ and \twoV\ share the same threshold, which seems counterintuitive in conjunction with Corollary~\ref{cor:density_largeK}.

Arguably the most interesting and seemingly difficult direction for future work would be to generalize results of \twoV\ to $k>2$. Currently, none of the algorithms for \twoV\ seem to transfer to \threeV\ and, in fact, even proving the NP-hardness of \threeV\ seems exceptionally challenging, despite strong intuition in favor of it. Our example in Section~\ref{sec:3v_counterexample} highlights the reasons why results from \twoV\ do not currently transfer to \threeV\ and reinforces the conjecture by~\cite{kVisits} that \threeV\ is strongly NP-complete even for simple sets. Transferring (strong) NP-hardness to \kV\ for larger~$k$ would be particularly useful if one could also transfer it to \PWS, which would settle a question open since 1989. This may actually be possible, since the existence of an infinite schedule is known to be equivalent to the existence of an infinite schedule with finite period~\cite{Holte_Pinwheel}. Our Lemma~\ref{lem:finite_to_infinite} establishes another similar connection between finite and infinite versions, further highlighting the relationship between the two problems.

One last open question we would like to highlight is whether \RNTDMshort\ (defined by Yu et al.~\cite{Flow_shop_Yu}) remains strongly NP-complete for simple sets. If this holds, it would simplify the first two steps of our reduction chain in Section~\ref{sec:max_mult_hardness}, as there would be no need to consider the more complicated \NMTSshort\ problem. However, the existing hardness proof for \RNTDMshort~\cite{Flow_shop_Yu} heavily relies on padding \tP\ with duplicate numbers. We believe that settling this question would require a reduction significantly different from the existing one.

\subsubsection*{Acknowledgments}

We are grateful to our professors Aris Pagourtzis and Dimitris Fotakis, as well as our colleague Panos Paskalis, for valuable discussions and insights regarding \PWS, \kV\ and \EM. We are also grateful to professor Euripides Markou, who suggested the \kV\ problem as a research topic.

\medskip

\noindent
This work has been partially supported by project MIS 5154714 of the National Recovery and Resilience Plan Greece 2.0 funded by the European Union under the NextGenerationEU Program.

\bibliographystyle{splncs04}
\bibliography{bibliography}


\newpage

\appendix

\section{Minimizing the function for the density threshold of \twoV}
\label{appendix:min_density}

We seek to find the minimum (or the infimum) of
\[
f(x,y)=\frac{y}{x}+\frac{x}{x+2y-1},
\qquad\text{subject to } y\ge 1,\ x\ge 2y-1.
\]

First, we must search for critical points of this function in its interior. We compute the partial derivatives

\[
f_y = \frac{1}{x} - \frac{2x}{(x+2y-1)^2}, 
\qquad
f_x = -\frac{y}{x^2} + \frac{2y-1}{(x+2y-1)^2}.
\]
%
%
%
%
%
%
Setting $f_y$ to zero gives us
\begin{align}\label{eq:derivative_y_zero}
    \frac{1}{x} - \frac{2x}{(x+2y-1)^2} = 0 \iff (x+2y-1)^2 = 2x^2.
\end{align}
Substituting this into $f_x$ gives us
\begin{align*}
    -\frac{y}{x^2} + \frac{2y-1}{2x^2} = -\frac{1}{2x^2} < 0.
\end{align*}

Thus, $f$ has no critical point and the minimum (or the infimum) must be achieved on the domain boundary. We examine the following four cases.

\textbf{Case 1:} $y=1$. In this case,
\[
f(x)=\frac{1}{x} + \frac{x}{x+1}, \qquad f'(x)=-\frac{1}{x^2} + \frac{1}{(x+1)^2}<0,\ \textnormal{for}\ x \geq 1.
\]
Thus, $f$ is strictly decreasing and the infimum is achieved for $x\to \infty$ and is equal to $1$.

\textbf{Case 2:} $x = 2y-1$. In this case,
\[
f(y) = \frac{4y-1}{4y-2}, \qquad f'(y) = -\frac{4}{(4y-2)^2} < 0,\ \textnormal{for}\ y \geq 1.
\]
Again, $f$ is strictly decreasing and the infimum is achieved when $y\to \infty$ and is equal to $1$.

\textbf{Case 3:} $x\to \infty$. By Equation~\eqref{eq:derivative_y_zero}, we obtain that the partial derivative $f_y$ is $0$ if and only if
\[
 x = 2y-1 \pm \sqrt{2}(2y-1).
\]
Since $x \geq 2y-1$ and $y\geq 1$ on the domain, only the root with the positive sign is admissible. We obtain $x = 2y-1 + \sqrt{2}(2y-1)$. Solving for $y$, we get
\[
y = \frac{\sqrt{2}-1}{2}\cdot x+\frac{1}{2}.
\]
Since $f_y$ is increasing with respect to $y$, we infer that for every $x\geq 2y-1$ the minimum value of~$f$ with respect to $y$ is achieved for $y = \frac{\sqrt{2}-1}{2}\cdot x+\frac{1}{2}$. Hence, in this case, the infimum is achieved when
\[
x\to \infty, \qquad y = \frac{\sqrt{2}-1}{2}\cdot x+\frac{1}{2}
\]
and is equal to 
\[
\lim_{x \to \infty} \left(  f\left(x,\, \frac{\sqrt{2}-1}{2}\cdot x+\frac{1}{2}\right) \right) =
\lim_{x \to \infty} \left( \frac{\sqrt{2}-1}{2}+\frac{1}{2x} + \frac{1}{\sqrt{2}} \right)=\sqrt{2} - \frac{1}{2}.
\]

\textbf{Case 4:} $y\to \infty$. Since $x\ge 2y-1$, it must also be $x\to \infty$ in this case. Since we have already computed the infimum for $x\to \infty$ in the previous case, this case can be skipped.




The smallest value out of the four cases is obtained from Case 3. Therefore, the infimum of $f(x,y)$ is achieved when $y = \frac{\sqrt{2}-1}{2}\cdot x+\frac{1}{2}$ and $x \to \infty$ and is equal to $\sqrt{2} - 1/2$.

\section{Self-reduction of \PM\ in clusters}
\label{appendix:PM_self_reduction}

In this section we prove Lemma~\ref{lem:PM_self_reduction}, which is implied in~\cite{kVisits}, but not exactly proven in the form that we need in this paper. See Figure~\ref{fig:clusters} for some intuition on why \PM\ self-reduces in clusters.

\setcounter{lemma}{3}

\begin{lemma}[Self-reduction]
    \PM\ reduces in time $\bigO(n)$ to solving a \PM\ instance for the numbers corresponding to each cluster.
\end{lemma}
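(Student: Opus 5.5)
Let $(D,T)$ be a \PM\ instance and $A=\langle a_1,\ldots,a_n\rangle$ the discretized sequence of $D$. Split $A$ into its clusters $C_1,\ldots,C_r$ in increasing order, with $C_\ell=\langle a_{i_\ell},\ldots,a_{j_\ell}\rangle$. Let $D_\ell=\{d_{i_\ell},\ldots,d_{j_\ell}\}$ be the deadlines corresponding to $C_\ell$. By Observation~\ref{obs:cluster}, $C_\ell$ is exactly the discretized sequence of $D_\ell$.

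The plan is to split $T$ into consecutive blocks $T_1,\ldots,T_r$ with $|T_\ell|=|C_\ell|$. Scan $T$ in sorted order and give the next $|C_\ell|$ targets to $T_\ell$. We then claim that $(D,T)$ is a yes-instance iff every $(D_\ell,T_\ell)$ is a yes-instance of \PM. This takes $\bigO(n)$ time, since the input is sorted and the clusters and blocks come from one linear scan.

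\textbf{Key structural fact.} Let $C_\ell$ and $C_{\ell+1}$ be consecutive clusters. Then $a_{j_\ell}+1\notin A$, so $a_{j_\ell}<a_{j_\ell+1}-1$. By Definition~\ref{def:disc} this forces $a_{j_\ell}=d_{j_\ell}$. Hence every deadline of $D_1\cup\cdots\cup D_\ell$ is at most $a_{j_\ell}$. Every element of $C_{\ell+1}\cup\cdots\cup C_r$ exceeds $a_{j_\ell}$, so the constraint $d\ge a$ forbids matching such a deadline with such a position.

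A counting argument from the top cluster downward then shows that in any solution, each deadline of $D_\ell$ is paired with a position of $C_\ell$. For the targets, the total available pair-sum of clusters $1..\ell$ is at most $2a_{j_\ell}$, since each pair satisfies $d+a\le 2a_{j_\ell}$.

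\textbf{Matching clusters with target blocks.} This is the main obstacle: we must show that targets also need only be matched within their own block. Forward direction: take any solution. By the fact above, the $(d,a)$ pairs are cluster-internal. For each cluster, take its multiset of sums $d+a$. The sums of cluster $\ell$ are all at most $2a_{j_\ell}$. The sums of cluster $\ell+1$ are at least $2a_{i_{\ell+1}}$; this holds because $d\ge a\ge a_{i_{\ell+1}}$. Since $a_{i_{\ell+1}}>a_{j_\ell}$, the sum-multisets of different clusters are ordered. Feasibility of assigning targets (each target at most its matched sum) is decided greedily by matching sorted sums to sorted targets. Because the sums are ordered by cluster, this greedy matching sends exactly the block $T_\ell$ to cluster $\ell$. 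So the solution can be rearranged into cluster-wise solutions for the $(D_\ell,T_\ell)$.

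Backward direction: the union of the cluster-wise solutions is a solution of $(D,T)$, because the cluster instances use disjoint parts of $D$, $A$ and $T$. A minor technicality is that $T_\ell$ is not literally $[2|C_\ell|]\setminus C_\ell$; but Definition~\ref{def:PM} allows an arbitrary simple set $T$, so each $(D_\ell,T_\ell)$ is a valid \PM\ instance.

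The delicate step is the sorted-greedy exchange argument. The key inequality is that in a feasible assignment, swapping two crossed target assignments preserves feasibility. This is the standard fact that if $s\le s'$ and $t\le t'$ with $t\le s'$ and $t'\le s$, then $t\le s$ and $t'\le s'$.
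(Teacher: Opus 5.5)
Your proposal is correct and follows essentially the same route as the paper: the boundary identity $d_{j_\ell}=a_{j_\ell}$ forces every $(d,a)$ pair to stay inside its own cluster, and an uncrossing exchange on targets shows that the block of the next $|C_\ell|$ smallest targets can be given to cluster $\ell$. The difference is only presentational. The paper peels off the first cluster and swaps two crossing triplets using $t'<t''\le d'+a'<d''+a''$, whereas you handle all clusters at once by sorted greedy matching of pair-sums to targets, which is the same exchange applied globally.
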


\begin{proof}
    Let $I=(D=\{d_1,\ldots,d_n\},T=\{t_1,\ldots,t_n\})$ be a \PM\ instance and let $A=\langle a_1,\ldots,a_n\rangle$ be the discretized sequence of $D$. 
    Assume that $A$ consists of two or more clusters, as otherwise the reduction is trivial. 
    
    Let $A'=\langle a_1,\ldots,a_m \rangle$ be the first cluster of $A$. By Def.~\ref{def:cluster}, $a_m < a_{m+1} - 1$. By Def.~\ref{def:disc}, it must hold that $d_m = a_m$ (otherwise it would be $a_m = a_{m+1} - 1$). This implies that $d_m < a_{m+1}$. Recall that $\PM$ does not allow elements in $D$ to be matched with larger elements of $A$. Thus, the elements $d_1,\ldots,d_m$ of $D$ cannot be matched with elements $a_{m+1},\ldots,a_n$ of $A$, implying that they must be matched with some permutation of the elements in $A'$.

    Define $D'=\{ d_1,\ldots,d_m\} $ and $T'=\{t_1,\ldots,t_m\}$ (i.e., the numbers of $D$ and $T$ corresponding to cluster $A'$), as well as $D''=D\setminus D'$, $A''=A\setminus A'$, $T''=T\setminus T'$. At this point we have proven that elements in $D'$ must be matched with elements in $A'$. We will now prove that it suffices to match these with targets in $T'$.
    Suppose that a solution of $I$ contains triplets $(d',a',t'') \in D' \times A' \times T''$ and $(d'',a'',t') \in D'' \times A'' \times T'$. Observe that $t' < t'' \leq d'+a' < d''+a''$; hence, we can substitute these two triplets with $(d',a',t')$ and $(d'',a'',t'')$, obtaining another feasible solution. We conclude that it suffices to match pairs in $D' \times A'$ with targets in $T'$.
    

    By Observation~\ref{obs:cluster}, $A'$ is the discretized sequence of $D'$, i.e., it is one of the sets used for the \PM\ instance $I'=(D',T')$. Similarly, $A''$ is one of the sets used for the \PM\ instance $I''=(D'',T'')$. By all the above, we infer that $I'$ and $I''$ can be solved independently; $I$ has a solution if and only if both $I'$ and $I''$ have a solution.
    
    Iteratively applying the same logic to $I''$ (if $A''$ also consists of two ore more clusters) reduces~$I$ to $\bigO(n)$ $\PM$ instances, one for each cluster of $A$. Then, $I$ has a solution iff each of these $\bigO(n)$ $\PM$ instances have a solution.
    The reduction described runs in linear time.
\end{proof}

\end{document}